 \newcommand{\QED}{\blacksquare}
\newtheorem{assumptionx}{Assumption}
\newtheorem{lemmax}{Lemma}
\newtheorem{propositionx}{Proposition}
\newtheorem{remarkx}{Remark}
\newtheorem{theoremx}{Theorem} 
\newtheorem{definitionx}{Definition} 
\newtheorem{examplex}{Example}
\newenvironment{definition}{\begin{definitionx}}{\ignorespaces\null\hfill$\triangleleft$\end{definitionx}}
\newenvironment{example}{\begin{examplex}}{\ignorespaces\null\hfill$\triangleleft$\end{examplex}}
\newenvironment{theorem}{\begin{theoremx}\itshape }{\ignorespaces\null\hfill$\triangleleft$\end{theoremx}}
\newenvironment{remark}{\begin{remarkx}}{\ignorespaces\null\hfill$\triangleleft$\end{remarkx}}
\newenvironment{proposition}{\begin{propositionx}\itshape }{\ignorespaces\null\hfill$\triangleleft$\end{propositionx}}
\newenvironment{lemma}{\begin{lemmax}\itshape }{\ignorespaces\null\hfill$\triangleleft$\end{lemmax}}
\newenvironment{assumption}{\begin{assumptionx}\itshape}{\end{assumptionx}}
\newenvironment{proof}[1][Proof]{\par\emph{#1. }}{\ignorespaces\hfill$\QED$}
\newenvironment{skproof}{\par\emph{Sketch of Proof. }}{\ignorespaces\hfill$\QED$}
\newcounter{assc}
\newcommand{\R}{\mathbb R}
\newcommand{\B}{\mathbb B}
\newcommand{\K}{\mathbb K}
\newcommand{\N}{\mathbb N}
\newcommand{\Rplus}{\R_{\ge 0}}
\newcommand{\KL}{\mathbb{KL}}
\newcommand{\cA}{\mathcal A}
\newcommand{\cC}{\mathcal C}
\newcommand{\cD}{\mathcal D}
\newcommand{\cF}{\mathcal F}
\newcommand{\cK}{\mathcal K}
\newcommand{\cL}{\mathcal L}
\newcommand{\cN}{\mathcal N}
\newcommand{\cO}{\mathcal O}
\newcommand{\cP}{\mathcal P}
\newcommand{\cQ}{\mathcal Q}
\newcommand{\cR}{\mathcal R}
\newcommand{\cS}{\mathcal S}
\newcommand{\cT}{\mathcal T}
\newcommand{\cW}{\mathcal W}
\newcommand{\cX}{\mathcal X}
\newcommand{\x}{\times}
\newcommand{\st}{\,\mid\,}
\newcommand{\und}{\underline}
\newcommand{\inv}{^{-1}}
\DeclareMathOperator{\rank}{rank}
\DeclareMathOperator{\col}{col}
\DeclareMathOperator{\dom}{dom}
\DeclareMathOperator{\range}{ran}
\newcommand{\overbar}[1]{\mkern 1.5mu\overline{\mkern-1.5mu#1\mkern-0.0mu}\mkern 1.5mu}
\newcommand{\interior}[1]{\mathring{#1}}
\newcommand{\closure}[1]{\overbar{#1}}
\newcommand{\ball}{\B}
\newcommand{\xb}{\mathbf{x}}
\newcommand{\sr}{^\star}
\newcommand{\y}{y}
\newcommand{\vhi}{\varphi}
\newcommand{\xp}{x_p}
\newcommand{\dotxp}{\dot x_p}
\newcommand{\yp}{y}
\newcommand{\fp}{f_p}
\newcommand{\hp}{h_p}
\newcommand{\dxp}{{n_p}}
\newcommand{\xg}{\xi}
\newcommand{\yg}{y_\xi}
\newcommand{\dotxg}{\dot \xi}
\newcommand{\fg}{f_\xi}
\newcommand{\hg}{h_\xi}
\newcommand{\dxg}{{n_\xi}}
\newcommand{\Xg}{\Xi}
\newcommand{\xw}{x}
\newcommand{\dotxw}{\dot x}
\newcommand{\fw}{f}
\newcommand{\hw}{h}
\newcommand{\dxw}{{n_x}}
\newcommand{\Xw}{X}
\newcommand{\Fw}{F}
\newcommand{\cim}{\eta_{\rm im}}
\newcommand{\dotcim}{\dot{\eta}_{\rm im}} 
\newcommand{\Acim}{\Phi}
\newcommand{\Bcim}{G} 
\newcommand{\cst}{\eta_{\rm st}}
\newcommand{\dotcst}{\dot{\eta}_{\rm st}}  
\newcommand{\fcst}{\phi_{\rm st}}
\newcommand{\Xb}{\mathbf{X}}
\newcommand{\Ub}{\mathbf{U}}
\newcommand{\pb}{\mathbf{p}}
\newcommand{\Pb}{\mathbf{P}}
\newcommand{\signal}{\psi}
\newcommand{\yaux}{y_{\rm a}}
\newcommand{\nominal}[1]{#1^{\circ}}
\newcommand{\dx}{{n_x}} 
\newcommand{\dy}{{n_{\y}}}
\newcommand{\du}{{n_u}}
\newcommand{\dc}{{n_c}}
\newcommand{\dw}{{n_w}}
\newcommand{\de}{{n_e}} 
\newcommand{\dxc}{{n_c}}
\newcommand{\dcim}{{n_{\rm im}}}
\newcommand{\dcst}{{n_{\rm st}}}
\newcommand{\dyaux}{{n_{\rm a}}}
\newcommand{\vep}{\varepsilon}
\newcommand{\SST}{\cO}
 \renewcommand{\P}{\cP}
 \newcommand{\propdef}[1]{``\ #1\ "}
\newcommand{\MB}[1]{{#1}}
\newcommand{\fourier}{{\rm c}}
\newenvironment{cellitemize}{\begin{enumerate}[wide=0pt, widest=999,leftmargin=\parindent,labelsep=*,leftmargin=15pt]}{\end{enumerate}\vspace*{-\baselineskip}\leavevmode}
\title{
About Robustness of Control Systems\\ Embedding an Internal Model
}
\author{Michelangelo~Bin,~Daniele Astolfi~and~Lorenzo Marconi%
	\thanks{%
		Michelangelo Bin (m.bin@imperial.ac.uk) is with the Department of Electrical and Electronic Engineering, Imperial College London, UK; Daniele Astolfi (daniele.astolfi@univ-lyon1.fr) is with Univ Lyon, Universit\`{e} Claude Bernard Lyon 1, CNRS, LAGEPP UMR 5007; Lorenzo Marconi (lorenzo.marconi@unibo.it) is with the CASY-DEI, University of Bologna, Italy.  
	}
 
}
\begin{document}

\onecolumn 
\vspace{4em}

\vspace{5em}
\begin{quote}
	\emph{\textcopyright{}~2021 IEEE.  Personal use of this material is permitted.  Permission from IEEE must be obtained for all other uses, in any current or future media, including reprinting/republishing this material for advertising or promotional purposes, creating new collective works, for resale or redistribution to servers or lists, or reuse of any copyrighted component of this work in other works.}
\end{quote}

\twocolumn
 
\maketitle

\begin{abstract}
Robustness is a basic property of any control system. In the context of linear output regulation, it was proved that  embedding an internal model of the exogenous signals is necessary and sufficient to achieve tracking of the desired reference signals in spite of external disturbances and parametric uncertainties. This result is commonly known as ``internal model principle''. A complete extension of such linear result to  general nonlinear systems is still an open problem, which is exacerbated by the  large number of alternative  definitions of uncertainty and desired control goals that are possible in a nonlinear setting. In this paper, we develop a   general framework in which all these different notions can be formally characterized in a unifying way. Classical results are reinterpreted in the proposed setting, and new results and insights are presented with a focus on robust rejection/tracking of arbitrary harmonic content. Moreover, we show by counter-example that, in the relevant case of continuous unstructured uncertainties, there are problems for which no smooth finite-dimensional robust regulator exists ensuring exact regulation. 
\end{abstract}

\section{Introduction} 
\subsection{Problem Overview}
We consider nonlinear systems of the form 
\begin{equation}\label{s.wxy}
\begin{array}{lcl}
\dot w &=& s(w)\\
\dotxp&=& \fp(w,\xp,u)\\
\yp &=& \hp(w,\xp),
\end{array}
\end{equation} 
in which $\xp\in\R^\dxp$ is the state of the \emph{plant}, $\yp\in\R^\dy$ is the \emph{measured output},  $u\in\R^\du$ is the \emph{control input}, and $w\in\R^\dw$ models a set of unmeasured \emph{exogenous inputs}   representing  disturbances, uncertain parameters, reference commands to be tracked, or any other signal coming from the ``external world''. 
For system \eqref{s.wxy} we assume that, on the basis of the nominal value  of the   maps $(s,\fp,\hp)$  and of the set  in which $(w,\xp)$ is supposed to originate, a \emph{finite-dimensional} output feedback regulator of the form
\begin{equation}\label{s.xc}
\begin{array}{lclclcl}
\dot x_c &=& f_c(x_c,\yp), && u&=&h_c(x_c,\yp)
\end{array}
\end{equation}
has been designed to guarantee that the ``nominal'' closed-loop trajectories asymptotically fulfill  a given desired property~$\P$. As better detailed later, the property $\P$ represents the desired asymptotic behavior for the closed-loop system \eqref{s.wxy}-\eqref{s.xc}, and it may coincide, for instance,  with a state of equilibrium, with the tracking of some reference signals or, more in general, with a desired   steady-state optimality condition. The general problem that we consider  is   to study the performance of the closed-loop system, in terms of the achievement of the property~$\P$,    when the controller \eqref{s.xc} is applied to a \emph{perturbed plant}, whose   maps and initial conditions do not necessarily match with their nominal values on which the regulator is tuned. Our aim is thus to characterize and design controllers  guaranteeing that, for ``small enough'' perturbations, the closed-loop trajectories still fulfill the asymptotic property $\P$, as in the nominal case.

\MB{For mathematical reasons, essentially related to the properties of limit sets and of the class of solutions considered, we restrict our focus to continuous systems,  thus leaving out discontinuous plants or controllers. We remark, however, that the   introduced framework can be extended to cover discontinuous systems by considering hybrid inclusions (see, e.g., \cite{Goebel2012book}) instead of limiting to differential equations. As this extension does not change the main message of the paper, and it is instead associated with a considerable technical overhead, we leave it for future research.}

Inspired by the \emph{output regulation} literature, in the following we  often link the property $\cP$ to the asymptotic value of a \emph{regulation error} $e$, defined as 
\begin{equation}\label{d.e}
e = h_e(w,\xp) \,\in\R^\de,
\end{equation}
which may or may not be measured (i.e. be a part of $y$). In this case, the fulfillment of Property $\cP$ pertains the design of a regulator ensuring (i) boundedness of the closed-loop trajectories, and (ii) that a prescribed \emph{asymptotic bound} holds on the regulation error. Depending on the nature of this bound, the following taxonomy is used.  \emph{Approximate regulation} requires 
	\begin{equation}\label{b.ls_e_vep}
	\limsup_{t\to\infty}|e(t)|\le \varepsilon,
	\end{equation}
	in which $\varepsilon\ge 0$ is  a possibly small number measuring the  performances of the controller (in this case $\vep$ is allowed to depend on the particular solution). \emph{Asymptotic regulation} refers to the relevant  case in which   \eqref{b.ls_e_vep} holds with $\vep=0$. \emph{Practical regulation}, instead, is achieved when $\varepsilon$ can be taken arbitrarily in $(0,\infty)$ by tuning, accordingly, some control  parameters. 

Traditionally, in  output regulation, the concept of robustness almost exclusively refers to asymptotic regulation. In particular, a \emph{robust regulator} is a regulator that ensures $\lim_{t\to\infty} e(t)=0$ even under ``small enough" perturbations of the plant's maps $\fp$, $\hp$, and $h_e$, with the exosystem map $s$ that, however, remains untouched (see \cite{Francis1976,Davison1976,Huang1994,Byrnes1997,jie_huang_remarks_2001,Astolfi2017} and the references therein).
This notion of robustness was coined in the 70s in the context of linear systems, under the name of ``structurally stable regulation'' \cite{Francis1976,Davison1976}. The importance of keeping $s$ nominal is motivated by the fact that, in a linear setting, knowing $s$ is the main  necessary~\cite{Francis1976} and sufficient~\cite{Davison1976} condition for the  design of a robust  regulator (in particular, necessity is known as the \emph{internal model principle}~\cite{Francis1975,Francis1976}).
When perturbations affect the  map $s$ of the exosystem, instead, the problem is typically referred to as \emph{adaptive} regulation  \cite{Serrani2001,marino_output_2003,DelliPriscoli2006,forte_robust_2017,Bin2019adaptlinear,Bin_classtype_2019,bin_approximate_2020,bernard_adaptive_2020}. 
For the special role played by $s$ in the design of robust linear regulators,  this distinction between robust and adaptive regulation makes considerable sense in the context of linear output regulation. However, it  completely loses meaning in a general nonlinear setting, in which the role of the exosystem and that of the plant mix up in determining the {steady-state} signals that the regulator must reproduce to keep the regulation error to zero \cite{Byrnes2003,bin_chicken-egg_2018,Bin2019}.

Overall, the problem of designing robust  regulators achieving the property of asymptotic regulation for general \emph{nonlinear} systems is extremely more difficult than the linear case (if not impossible as indicated later in Section~\ref{sec.nonlinear}) and, at the best of  our knowledge,  \emph{no} smooth finite-dimensional robust asymptotic regulator currently exists in the literature for nonlinear systems. This inherent difficulty has recently motivated a  shift towards  new approaches seeking approximate, rather than asymptotic,   results, as they trade   weaker  claims  for  stronger robustness properties (see, e.g. \cite{Isidori2012,Astolfi2015,bin_chicken-egg_2018,Bin2019,Bin_classtype_2019,bin_approximate_2020,bernard_adaptive_2020}). 
Nevertheless, such robustness properties are usually treated in ad hoc manners, poorly characterized, and sometimes  not formally proved and just left to intuition. In turn, within the current output regulation community, a rigorous unifying characterization of robustness  including all these new approximate regulation properties   is definitely missing.

Outside output regulation, many notions and results related to robustness exist that cover  a number of cases of interest. 
For instance, in the branch of mathematics studying dynamical systems robustness is usually framed in the context of \emph{structural stability}  (see e.g. \cite{Smale1967,Zeeman1988}), in which a system $\dotxg=f(\xg)$, with $f$ smooth, is said to be \emph{structurally stable} if there exists a neighborhood $\cN$ of $f$ in the $\cC^1$ topology \cite{Hirsch1994} such that, for each $\tilde f\in\cN$, the orbits of the solutions to $\dotxg=\tilde{f}(\xg)$ are homeomorphic to those of the ``nominal'' system $\dotxg=f(\xg)$ through a homeomorphism  $\phi$  that preserves orientation and that can be made as ``close'' to the identity as desired by opportunely restricting~$\cN$.  
Another important notion of robustness is the concept of \emph{robustness of asymptotic stability} \cite[Chapter 7]{Goebel2012book}, in which a set  that is asymptotically stable for the nominal system, remains asymptotically stable also if the plant is slightly perturbed. Robust asymptotic stability is usually characterized by Lyapunov methods, which are also intensively used  in the synthesis of robust controllers \cite[Chapter 7]{Goebel2012book}. For systems with inputs, robust asymptotic stability of a set is generalized by the concept of ``robust input-to-state stability''~\cite{Cai2013}. 
Other notions of robustness,  more focused on  synthesis, may be found  in the context of ``robust control'', and in particular in the fields of  ``$H_\infty$ control''~\cite{Zhou1996}, ``high-gain'' stabilization \cite{Teel1995}, or \MB{``variable structure'' and ``sliding mode'' methods~\cite{shtessel_sliding_2014}.}
All these frameworks, however, fail to capture, in its full generality, the entire set of asymptotic behaviors and properties of interest in output regulation.

\subsection{A Motivating Example}
Consider a special case of \eqref{s.wxy} in which 
\begin{equation}\label{s.ex.wxy}
\begin{array}{lcl}
\dot w &=& \varrho \begin{pmatrix}
0 & 1\\-1 & 0
\end{pmatrix}w\\
\dotxp &=& q(w) + \alpha(\xp) +u\\
y &=& \xp 
\end{array}
\end{equation}
for some $\varrho>0$ and some continuous functions $q$ and $\alpha$,  with $\alpha(0)=0$. The plant  is subject to a disturbance $q(w(t))$ generated by a nonlinear function of a single harmonics at frequency $\varrho$.
Assume first that the desired control goal is to stabilize the plant's state $\xp$ to the origin (i.e., in the language of this paper, to guarantee asymptotically a property $\P$ of the kind $\propdef{\xp=0}$). Then, we may think to every regulator of the form \eqref{s.xc} solving the problem as providing a control action $u(t)$ composed of two parts: \textbf{(i)} a \emph{stabilizing} action counteracting the potentially destabilizing term $\alpha(\xp)$; \textbf{(ii)} a compensation action asymptotically balancing the disturbance $q(w(t))$.

Stabilization of the origin of $\xp$ is a canonical   asymptotic regulation problem for $e:=\xp$, and it falls in the   setting  described above.
In turn,  if   $\varrho$ is known, $q$ and $\alpha$ are linear, and every admissible perturbation   does not destroy their linearity, then a robust solution is directly provided by means of internal model arguments, see e.g.~\cite{Davison1976,Byrnes1997}. If $\varrho$ is unknown, but linearity of $q$ and $\alpha$ is preserved, then some considerable degree of robustness is obtained by employing adaptive or immersion techniques (see for instance~\cite{Serrani2001,Isidori2012,Bin2019adaptlinear}).
On the other hand, if  perturbations of $q$ destroying its linearity are admitted, then no  regulator is available in the  literature that guarantees $\xp\to 0$ robustly. Moreover, later in Section~\ref{sec.nonlinear} we also show that, in the relevant case in which perturbations of $q$ are intended in the $\cC^0$ topology \cite{Hirsch1994}, then this problem \emph{admits no robust solution} even if $\varrho$ is known.

This negative result, in turn,  motivates us to   look  for  asymptotic properties that are weaker than $\xp=0$, but that  make the robustness problem well posed and solvable. For instance, instead of yearning for $\xp=0$,  a  weaker objective of robust rejection in $\xp(t)$ of the dominant harmonic at frequency $\varrho$ could be asked (this new objective clearly coincides with the former  in case $q$ and $\alpha$ are linear). The robust regulation problem associated with this new asymptotic property   thus pertains the design of a controller~\eqref{s.xc} ensuring that the amplitude of the harmonic at frequency $\varrho$ of the signal $\xp(t)$ vanishes asymptotically even under sufficiently small ``nonlinear perturbations'' of $q$.

This weaker ``robust harmonic rejection'' objective  however falls outside the canonical scope of robust output regulation theory. Besides, it is   also not clear how  it might be characterized in terms of structural stability or  robust asymptotic stability of a pre-specified nominal set. In fact,   there exists  no general regulation framework  in which this kind of robustness notions can be   formally described.

\subsection{Contributions and Organization of the Paper} 
Motivated by the above discussion, we develop a new  framework  in which the various robustness properties of actual approximate and asymptotic design solutions can be formally described and characterized in a unifying nonlinear setting. General classes of ``unstructured'' perturbations,  both of the system's dynamics and in the set of initial conditions,  are formally defined in the context of {\em general topology}. A notion of ``steady-state property'' is then presented, along with  conditions  under which such a property is robustly preserved. 
 Classical definitions of robustness and asymptotic/approximate output regulation results are   reinterpreted within the proposed setting, and new results and insights concerning the robust rejection  of arbitrary harmonics are given. 
 Finally, a counter-example is presented showing that there exist problems for which \emph{no smooth finite-dimensional robust regulator exists that can guarantee robust asymptotic regulation}, if perturbations are meant in a canonical ``$\cC^0$ sense''. This latter result shows that, in a general nonlinear context, asymptotic regulation is a property that cannot be obtained robustly by regulators of finite dimension, and that approximate regulation, rather than asymptotic, is the correct way of approaching the problem in a nonlinear uncertain setting if robustness is sought.  
 
The article is organized as follows. Section~\ref{sec.Prelimiraies} contains the notation and some preliminary definitions. In Section~\ref{sec.Framework} we build the basic robustness framework, by formally defining the concepts of steady-state properties, perturbations and robustness. In Section~\ref{sec.Linear}, we present some results concerning regulation schemes embedding a linear internal model. Finally, in Section~\ref{sec.nonlinear}, we consider regulation schemes embedding nonlinear internal models, and we present the aforementioned counterexample.
   
\section{Preliminaries}\label{sec.Prelimiraies}
\subsection{Notation and Basic Definitions}
We denote by $\R$ and $\N$ the set of real and natural numbers respectively, and we let $\Rplus:=[0,\infty)$. The symbol $|\cdot|$ is used to denote norms when the specific normed space is clear from the context. For $x\in\R^n$ and $A\subset\R^n$, $n\in\N$, $|x|_A:=\inf_{a\in A}|x-a|$ denotes the distance of $x$ to $A$. If $f$ is a function, we denote by $\dom f$ its domain and by $\range f$ its range. If $A$ is a linear map,   $\sigma(A)$ denotes its spectrum. The symbol $\subset$ denotes non-strict inclusion. 
If $A,\,B\subset\R^n$, we let $A+B:=\{a+b\st a\in A,\, b\in B\}$. With $\epsilon\in\R$, we let $\epsilon A:=\{\epsilon a\st a\in A\}$. When the underlying metric space is clear, we denote by $\ball$ the open ball of radius $1$. 
 
For basic concepts  about topological spaces we refer to \cite{Engelking1989}. In particular, a \textit{topology} $\tau$ on a set $S$ is a family of subsets of $S$ satisfying: \textit{(i)} $\emptyset,\,S\in\tau$; \textit{(ii)} $\tau$ is closed under arbitrary unions; \textit{(iii)} $\tau$ is closed under finite intersections.
The elements of $\tau$ are called {\it open sets}. A \emph{topological space} is a pair $(S,\tau)$ in which $S$ is a set and~$\tau$ a topology on $S$. Given a topological space $(S,\tau)$ and an element $s\in S$, a subset $U\subset S$ is called a {\it $\tau$-neighborhood}  of $s$ if it contains an open set containing $s$.  If $A\subset S$, a set $U\subset S$ is called a neighborhood of $A$ if it is a neighborhood of each point of $A$. When $\tau$ is clear from the context, we simply write $S$ in place of $(S,\tau)$. Given $A\subset S$, the collection $\tau_A=\{ U\cap A\st U\in\tau\}$ is called the \emph{subspace topology} induced by $\tau$, and the pair $(A,\tau_A)$  is said to be a \emph{topological subspace} of $(S,\tau)$.
With $A\subset S$, we denote by  $\interior{A}$ its interior, by $\closure{A}$ its closure, and by $\cK(A)$ the set of all its compact subsets. A \emph{directed set} is a pair $(A,\preceq)$ in which $A$ is a set and~$\preceq$ a \emph{preorder} on $A$. When $\preceq$ is clear, we omit it and we denote $(A,\preceq)$ simply by $A$. A \emph{net} on $S$ is a function from a directed set $A$ to $S$. A sequence is a net with $(A,\preceq)=(\N,\le)$. We denote a net $s:A\to S$ equivalently by $\{s_\alpha\}_{\alpha \in A}$ or just by $\{s_\alpha\}_\alpha$ or $\{s_\alpha\}$ when $A$ is clear. When $S$ is given a topology~$\tau$, then $\{s_\alpha\}$ is said to converge to a point $\bar s\in S$ if for every $\tau$-neighborhood $U$ of $\bar s$ there exists $\bar\alpha_U$ such that $s_\alpha\in U$ for all $\alpha\in A$ satisfying $\bar\alpha_U\preceq \alpha$. 
 With $k\in\N$ and $A$ and $B$ metric spaces, we denote by $\cC^k(A,B)$ the set of $k$-times continuously differentiable functions from $A$ to $B$. In particular, $\cC^0(A,B)$   denotes the set of continuous functions $A\to B$. When $A$ and $B$ are clear from the context we will omit the arguments and write $\cC^k$. If $x\in\cC^k(\R,\R^n)$, $n\in\N$, $\dot x:=dx/dt$, $x^{(0)}=x$, and, for $i=1,\dots,k$, $x^{(i)}:=d^i x/dt^i$. A function $\rho\in\cC^0(\Rplus,\Rplus)$ is said to be of class-K ($\rho\in\K$) if $\rho(0)=0$ and it is strictly increasing. A function $\beta\in\cC^0(\Rplus\x\Rplus,\Rplus)$ is said to be of class-KL ($\beta\in\KL$) if $\beta(\cdot,t)\in\K$ for all $t\ge 0$, and $\beta(s,\cdot)$ is strictly decreasing to zero for all $s\ge 0$.

\subsection{Systems and Limit Sets}

Consider a system of the form 
\begin{equation}\label{s.xi_preliminary}
\Sigma:\;\dotxg=\fg(\xg),
\end{equation}
defined on $\R^\dxg$, with $\dxg\in\N$. A solution to \eqref{s.xi_preliminary} is called \emph{maximal} if it cannot be continued further, and \emph{complete} if its   domain is unbounded. Given a subset $\Xg\subset\R^\dxg$, we denote by $\cS_\Sigma(\Xg)$ the set of all the maximal solutions to \eqref{s.xi_preliminary} originating in $\Xg$. When $\Xg=\R^\dxg$ we omit the argument and we write $\cS_\Sigma$. By convention we set $\emptyset=\cS(\emptyset)$. We define the \emph{reachable tails of $\Sigma$ from $\Xg$} as the  sets
\begin{equation*} 
\cR_\Sigma^t(\Xg):=\big\{ \bar\xg\in\R^\dxg \st \bar\xg=\xg(s),\,\xg\in\cS_\Sigma(\Xg),\,s\ge t\big\}
\end{equation*}
obtained for $t\ge 0$.  The net $t\mapsto  \cR_\Sigma^t(\Xg)$ is decreasing, in the sense that $t_1\ge t_2$ implies $\cR_\Sigma^{t_1}(\Xg)\subset \cR_\Sigma^{t_2}(\Xg)$. Therefore, the following quantity is well defined  (although possibly empty)
\begin{equation*} 
\Omega_\Sigma(\Xg) := \bigcap_{t\ge 0} \closure{\cR_\Sigma^t(\Xg)}.
\end{equation*}
The set $\Omega_\Sigma(\Xg)$ is called  the \emph{limit set} of $\Sigma$ from $\Xg$.

We say that $\Sigma$ is {\it uniformly ultimately bounded} from $\Xg$ if there exist a bounded subset $K\subset\R^\dxg$ and a $t\ge 0$ such that $\cR_\Sigma^t(\Xg)\subset K$ and $\cR_\Sigma^t(\Xg)\ne \emptyset$.  We say that a set $A$   \emph{uniformly attracts $\Sigma$ from $\Xg$}  if, for each neighborhood $U$ of $A$, there exists $t\ge 0$ such that $\cR_\Sigma^t(\Xg)\subset U$. 
We say that $A$ is \emph{forward invariant for $\Sigma$} if $\cR_\Sigma^0(A)\subset A$.
The limit set $\Omega_\Sigma(\Xg)$ has the following   properties \cite[Proposition 6.25]{Goebel2012book}.
\begin{proposition}\label{prop.Omega}Let $\Xg$ be compact, then
	$\Omega_\Sigma(\Xg)$ is closed. If $\fg$ is continuous and $\Sigma$ is uniformly ultimately bounded from~$\Xg$, then $\Omega_\Sigma(\Xg)$ is nonempty, compact, uniformly attractive for $\Sigma$ from $\Xg$, and it is the smallest (in the sense of inclusion) closed set with this latter property. If in addition $\Omega_\Sigma(\Xg)\subset \Xg$, then $\Omega_\Sigma(\Xg)$ is also forward invariant. 
\end{proposition}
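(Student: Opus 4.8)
The plan is to take the five assertions in order of increasing difficulty, using throughout that uniform ultimate boundedness confines the relevant reachable tails to a fixed compact set. Closedness is immediate and needs no hypothesis beyond the definition: $\Omega_\Sigma(\Xg)$ is an intersection of the closed sets $\closure{\cR_\Sigma^t(\Xg)}$, and an arbitrary intersection of closed sets is closed. So I would dispose of this line first, and then fix, once and for all, a bounded set $K\subset\R^\dxg$ and a time $t_0\ge 0$ as in the definition of uniform ultimate boundedness, so that $\emptyset\ne\cR_\Sigma^{t_0}(\Xg)\subset K$; the compact set $\closure K$ is the arena for everything that follows.

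For nonemptiness and compactness I would first note completeness of the surviving solutions: since $\fg$ is continuous and any solution originating in $\Xg$ that reaches time $t_0$ is thereafter trapped in $\closure K$, it cannot escape in finite time and hence is complete, so $\cR_\Sigma^t(\Xg)\ne\emptyset$ for every $t\ge t_0$. Because the tails are nested and, for $t\ge t_0$, contained in $\closure K$, the family $\{\closure{\cR_\Sigma^t(\Xg)}\}_{t\ge t_0}$ is a decreasing family of nonempty compact sets, so the finite intersection property gives $\Omega_\Sigma(\Xg)=\bigcap_{t\ge t_0}\closure{\cR_\Sigma^t(\Xg)}\ne\emptyset$ (the tails for $t<t_0$ are redundant, as they contain $\closure{\cR_\Sigma^{t_0}(\Xg)}$). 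Combined with closedness and $\Omega_\Sigma(\Xg)\subset\closure K$, this yields compactness.

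Uniform attractivity I would prove by contradiction: if an open neighborhood $U$ of $\Omega_\Sigma(\Xg)$ satisfied $\cR_\Sigma^t(\Xg)\not\subset U$ for all $t$, then the sets $\closure{\cR_\Sigma^t(\Xg)}\cap(\R^\dxg\setminus U)$ would be nonempty, nested and (for $t\ge t_0$) compact, so their intersection $\Omega_\Sigma(\Xg)\cap(\R^\dxg\setminus U)$ would be nonempty, contradicting $\Omega_\Sigma(\Xg)\subset U$. Minimality then follows along the same lines: given any closed uniformly attractive set $A$ and a point $p\notin A$, I would pick $\delta>0$ with $|p|_A>2\delta$ and set $U=A+\delta\ball$, an open neighborhood of $A$ with $p\notin\closure U$; attractivity of $A$ gives a $t$ with $\cR_\Sigma^t(\Xg)\subset U$, whence $\Omega_\Sigma(\Xg)\subset\closure{\cR_\Sigma^t(\Xg)}\subset\closure U$ and therefore $p\notin\Omega_\Sigma(\Xg)$, i.e. $\Omega_\Sigma(\Xg)\subset A$.

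The main obstacle is forward invariance, the only claim that requires reasoning about solutions rather than about reachable sets alone. Here I would use the characterization that $p\in\Omega_\Sigma(\Xg)$ precisely when there exist $\xg_n\in\cS_\Sigma(\Xg)$ and $t_n\to\infty$ with $\xg_n(t_n)\to p$. The shifted curves $\tau\mapsto\xg_n(t_n+\tau)$ are solutions of the autonomous system whose initial values converge to $p$ and which, for large $n$, remain in $\closure K$; continuity of $\fg$ bounds their derivatives there, so Arzel\`a--Ascoli extracts a subsequence converging uniformly on compact time intervals, and closedness of the solution set under such limits (again using continuity of $\fg$) makes the limit a solution $\zeta$ through $p$. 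Each value $\zeta(\tau)$ is a limit of points $\xg_n(t_n+\tau)$ belonging to tails $\cR_\Sigma^{T}(\Xg)$ with $T\to\infty$, hence $\zeta(\tau)\in\Omega_\Sigma(\Xg)$. The delicate point I would flag explicitly is the solution concept: this limiting construction produces, through each point, a solution that remains in $\Omega_\Sigma(\Xg)$, and promoting this to the reachability inclusion $\cR_\Sigma^0(\Omega_\Sigma(\Xg))\subset\Omega_\Sigma(\Xg)$ is exactly where the hypothesis $\Omega_\Sigma(\Xg)\subset\Xg$ is needed (so that solutions issued from $\Omega_\Sigma(\Xg)$ are genuinely solutions from $\Xg$) together with the confinement to $\closure K$; absent uniqueness of solutions, this is the step whose continuous-dependence requirements must be checked most carefully.
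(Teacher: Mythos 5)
The paper offers no proof of this proposition: it is quoted from \cite[Proposition 6.25]{Goebel2012book}, so there is no internal argument to compare against. Judged on its own terms, your treatment of the first four claims is correct and complete: closedness as an intersection of closed sets, the completeness-of-surviving-solutions remark that keeps the tails nonempty for $t\ge t_0$, the finite-intersection-property argument for nonemptiness and compactness, and the two complementary compactness arguments for uniform attractivity and minimality all go through.

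The genuine gap is in forward invariance, and you have flagged it yourself without closing it. Your Arzel\`a--Ascoli construction produces, through each $p\in\Omega_\Sigma(\Xg)$, \emph{one} solution remaining in $\Omega_\Sigma(\Xg)$ --- weak forward invariance --- whereas the paper defines forward invariance as $\cR_\Sigma^0(\Omega_\Sigma(\Xg))\subset\Omega_\Sigma(\Xg)$, i.e.\ \emph{every} maximal solution from $\Omega_\Sigma(\Xg)$ must stay there; with $\fg$ merely continuous these are not the same, and the use of $\Omega_\Sigma(\Xg)\subset\Xg$ you suggest (``solutions from $\Omega$ are solutions from $\Xg$'') only yields membership in $\closure{\cR_\Sigma^0(\Xg)}$, not in $\closure{\cR_\Sigma^t(\Xg)}$ for all $t$. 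The missing steps are these. First, run your limiting construction bilaterally: the shifts $\tau\mapsto\xg_n(t_n+\tau)$ are defined and confined to $\closure{K}$ on $[-T,T]$ for every $T$ once $t_n$ is large, so a diagonal extraction gives a \emph{complete two-sided} solution $\zeta:\R\to\Omega_\Sigma(\Xg)$ with $\zeta(0)=p$. Second, since $\Omega_\Sigma(\Xg)\subset\Xg$, for each $t\ge0$ the map $s\mapsto\zeta(s-t)$, $s\ge 0$, is a maximal solution originating at $\zeta(-t)\in\Xg$ and equal to $p$ at time $t$; hence $p\in\cR_\Sigma^t(\Xg)$ \emph{exactly} (not merely in the closure) for every $t$. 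Third, given an arbitrary maximal solution $\psi$ from $p$ and $s\in\dom\psi$, concatenate $\zeta(\cdot-t)|_{[0,t]}$ with $\psi$: the junction derivatives agree (both equal $\fg(p)$), the concatenation is a maximal solution from $\Xg$, and its value at time $t+s$ is $\psi(s)$, so $\psi(s)\in\cR_\Sigma^{t+s}(\Xg)\subset\cR_\Sigma^{t}(\Xg)$ for every $t$, i.e.\ $\psi(s)\in\Omega_\Sigma(\Xg)$. Without this backward-completion-plus-concatenation step the strong invariance stated in the proposition is not established.
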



\section{The Framework}\label{sec.Framework}
In this section we provide a number of new definitions on which the forthcoming analysis is based. In Section~\ref{sec.Properties}, we give a precise definition of \emph{steady-state property}. In Section~\ref{sec.Perturbations}, we give a unifying   definition of \emph{perturbation} of the plant dynamics and of its set of initial conditions. Finally, in Section~\ref{sec.Robustness}, we define the notion of \emph{robust regulator} relative to a given class of perturbations and to a given desired asymptotic property. Examples are given at the end of each section to help clarifying the proposed definitions.
\subsection{Steady-State Properties}\label{sec.Properties}
Consider  a system  of the form 
\begin{equation}\label{s.Sigma.ss_property}
\Sigma:\ \left\{ \begin{array}{lcl}
\dotxg &=& \fg(\xg)\\
\yg &=& \hg(\xg)
\end{array} \right.
\end{equation}
and a compact subset $\Xg\subset\R^\dxg$ of initial conditions. Under the assumptions of Proposition \ref{prop.Omega}, every complete trajectory of $\Sigma$ originating in $\Xg$ converges asymptotically, and uniformly,  to $\Omega_\Sigma(\Xg)$, which is compact and nonempty. 
In view of the uniform attractiveness  of $\Omega_\Sigma(\Xg)$ from $\Xg$, and in particular of the fact that it is the smallest closed set with such property, the trajectories of $\Sigma$ originating inside $\Omega_\Sigma(\Xg)$ have the usual interpretation as \emph{limiting trajectories} of the solutions of $\Sigma$ originating in $\Xg$. For this reason,  $\Omega_\Sigma(\Xg)$ is also referred to as the \emph{steady-state locus} of $(\Sigma,\Xg)$, and the elements of $\cS_\Sigma(\Omega_\Sigma(\Xg))$ as its \emph{steady-state trajectories}. 
This motivates the following definitions.
\begin{definition}[Steady-State Trajectories]
Given a pair $(\Sigma,\Xg)$, in which $\Sigma$ is a system of the form \eqref{s.Sigma.ss_property} and $\Xg\subset\R^\dxg$ a set, the elements of $\cS_\Sigma(\Omega_\Sigma(\Xg))$ are called the \emph{steady-state trajectories} of $(\Sigma,\Xg)$.
\end{definition}

For the sake of compactness, we denote by
\begin{equation*}
\SST_\Sigma(\Xi):=\cS_\Sigma(\Omega_\Sigma(\Xi))
\end{equation*}
the set of steady-state trajectories of $(\Sigma,\Xi)$.
\begin{definition}[Steady-State Property]\label{def.steady-state}
	A \emph{steady-state property} $\P$ on $(\Sigma,\Xg)$ is a statement $\P(\Sigma,\cO_\Sigma(\Xi))$ on the set of the steady-state trajectories of $(\Sigma,\Xg)$. In particular, we say that the steady-state trajectories of $(\Sigma,\Xi)$  \emph{enjoy} $\cP$ (or, simply, $(\Sigma,\Xi)$ enjoys $\cP$) if $\P(\Sigma,\cO_\Sigma(\Xi))$ holds true.
\end{definition}

The argument $\Sigma$ in  $\cP(\Sigma,\cO_\Sigma(\Xg))$ is introduced to allow  making statements about quantities that are ``system dependent''. For instance, all the statements about the output $\yg=\hg(\xg)$ make use of the system-dependent quantity~$\hg$. 
When $\Sigma$ and $\Xi$ are clear from the context, they are omitted, and we simply write $\Omega$, $\cS$, and $\cP(\cO)$ in place of $\Omega_{\Sigma}(\Xg)$, $\cS_{\Sigma}(\Xg)$ and $\cP(\Sigma,\cO_\Sigma(\Xg))$.
Some relevant steady-state properties are defined below.
\begin{example}[Set-Membership Property]\label{ex.P_A}
For a given set $A\subset\R^\dxg$,  the \emph{set-membership}  property is defined as $\P_A  := \propdef{\forall \xi\in\cO,\ \forall t\in \dom\xg,\ \xg(t)\in A}$.   
\end{example}
\begin{example}[Equilibrium Property]\label{ex.P_eq}
	The \emph{equilibrium} property is defined as $\P_{\rm eq}  := \propdef{\forall \xi\in\cO,\ \forall t,s\in \dom\xg,\ \xg(t)=\xg(s)}$. If $\P_{\rm eq}$ holds, then each steady-state trajectory is constant in time.
\end{example} 
\begin{example}[Regulation Properties]\label{ex.regulation_properties}
Let $\Sigma$ be obtained as the interconnection of a plant \eqref{s.wxy}, \eqref{d.e} with a controller \eqref{s.xc} (in this case $\xg=(w,\xp,x_c)$). 
Then, the \emph{asymptotic regulation} property is defined as  $\P_0 :=\propdef{\forall\xg\in\cO,\ \forall t\in\dom\xg,\  e(t):=h_e(w(t),\xp(t))=0}$.
Similarly, the approximate regulation property   \eqref{b.ls_e_vep} is represented by the steady-state property $\P_\vep  :=\propdef{\forall\xg\in\cO,\ \forall t\in\dom\xg,\ \big|h_e(w(t),\xp(t))\big| \le \vep}$.	
\end{example}
\begin{example}[Zero-Mean Property]
The following property, instead, characterizes a steady-state regulation error  $e$ with null DC component, without constraining its amplitude: 
\begin{equation*}
\cP_{\rm DC} :=\propdef{\forall\xg\in\cO,\ \int_{\dom \xg} h_e(w(t),\xp(t)) dt =0}.
\end{equation*}
\end{example}

Other more specific steady-state properties  will be introduced in the forthcoming sections. 

 \subsection{Perturbations}\label{sec.Perturbations}
For compactness, here and in the rest of the paper we let
\begin{subequations}\label{s.x}
	\begin{equation}
			\xw:=(w,\xp),
	\end{equation}
$\dxw=\dw+\dxp$, $\fw(\xw,u):=(s(w),\fp(w,\xp,u))$, $\hw(\xw):=\hp(w,\xp)$ and we rewrite \eqref{s.wxy} as
\begin{equation}
\Sigma_\xw :\ \left\{ \begin{array}{lcl}
\dotxw&=&\fw(\xw,u)\\
\yp &=& \hw(\xw).
\end{array}\right.
\end{equation} 
\end{subequations}
System \eqref{s.x} is referred to as the \emph{extended plant}. 
We further let $\Fw:=(\fw,\hw)$, and we suppose that it belongs to a given set  of functions $\cF$. For a given input $u$, the set of solutions of each  system  \eqref{s.x}   is then completely defined by the specification of a   point $(\Fw,\Xw)$ in the product space $\cF\x\cX$, being $X\in\cX$ a set of initial conditions for \eqref{s.x}, and  $\cX$  a subset of the power set of~$\R^\dxw$.   Since our aim is to study the asymptotic behavior of these solutions,   Proposition~\ref{prop.Omega} suggests to  restrict  the attention to the case in which $\cF$ is a set of continuous functions, and $\cX$ contains only compact subsets of $\R^\dxw$. Thus, from now on we assume $\cF\subset\cC^0$ and $\cX\subset\cK(\R^\dxw)$.

We equip $\cF$ with a topology $\tau_\cF$, and $\cX$ with a topology~$\tau_\cX$, in this way turning $(\cF,\tau_\cF)$ and $(\cX,\tau_\cX)$ into topological spaces. Then,  we  endow the product space $\cF\x\cX$ with the \emph{product topology}  $\tau:=\tau_\cF\x\tau_\cX$, which turns   $(\cF\x\cX,\tau)$ itself into a topological space. The topology $\tau$ on $\cF\x\cX$ constitutes the basic mathematical structure that allows us to formally define the concept of \emph{perturbation} in the space $\cF\x\cX$. The open sets of $\tau$, indeed, provide a formal characterization of the \emph{qualitative} meaning of ``vicinity'' between two different points of  $\cF\x\cX$, in the same way as norms usually do for vectors in $\R^n$. 
Different choices of $\tau_\cF$ and $\tau_\cX$ may be used to capture different ideas of ``variation'' of the function $\Fw$ and the initialization set $\Xw$ (see the examples below). 

With the topological space $(\cF\x\cX,\tau)$ defined, and with $(\Fw,\Xw)\in\cF\x\cX$, we say that a set $\MB{\cN}\subset\cF\x\cX$ is a \emph{neighborhood of $(\Fw,\Xw)$} if it contains an open set $U\in\tau$ containing $(\Fw,\Xw)$. We then call the elements  of $\MB{\cN}$ the $\MB{\cN}$-\emph{perturbations} (or simply \emph{perturbations}) of $(\Fw,\Xw)$.
Some examples of possible choices of $\tau$ are given below.
\begin{example}[Trivial and Discrete Topologies]\label{ex.topo1}
When $\tau_\cF$ and $\tau_\cX$ are the \emph{trivial topologies} on  $\cF$ and $\cX$ respectively, meaning that $\tau_\cF:=\{\emptyset,\,\cF\}$ and $\tau_\cX:=\{\emptyset,\,\cX\}$, then $\tau$ is the \emph{trivial topology} on $\cF\x\cX$, i.e. $\tau=\{\emptyset,\,\cF\x\cX\}$. The trivial topology is  the coarsest topology that can be defined on $\cF\x\cX$, and in fact it carries no information as each element of $\cF\x\cX$ has only one $\tau$-neighborhood given by $\cF\x\cX$ itself. On the contrary, if $\tau_\cF$ and $\tau_\cX$ are the \emph{discrete topologies} on $\cF$ and $\cX$, then $\tau$ is the discrete topology on $\cF\x\cX$ (i.e. $\tau$ coincides with the power set of $\cF\x\cX$) and it is the finest topology that can be defined on $\cF\x\cX$. In this topology, each singleton $\{(\Fw,\Xw)\}$ is a neighborhood of its element $(\Fw,\Xw)$.
\end{example}
\begin{example}[Hausdorff Topology]\label{ex.Hausdorff}
As $\cX$ contains only compact sets, a  natural choice for $\tau_\cX$ is the \emph{Hausdorff topology}, i.e. the topology induced by the \emph{Hausdorff distance} $d_{\rm H}$, which, for each two compact sets $X,\ Z\in\cX$, is defined as
\begin{equation*}
d_{\rm H}(X,Z) := \max\left\{ \sup_{x\in X}|x|_{Z},\ \sup_{z\in Z} |z|_X  \right\}.
\end{equation*}
In this topology, a set $Z$ is a perturbation of $X$ if there exists $\epsilon>0$ such that $X\subset Z+\epsilon\ball$ and $Z\subset X+\epsilon\ball$, with $\epsilon$ that \emph{quantifies} the entity of the perturbation.
\end{example}
\begin{example}[Projection Topologies]
In many cases, a different topology on $\cX$ may be preferred. For instance, suppose we aim to characterize variations of the sets of initial conditions involving only the $i$-th component $\xw_i$ of the initial state $\xw\in\Xw$. In this case  the Hausdorff topology is not suitable, as $d_{\rm H}$ weights uniformly variations in every directions. This can be rather  achieved  by letting $\tau_\cX$ be the topology generated\footnote{That is, the smallest topology including such sets.}  by the   sets
\begin{equation*}
\MB{\cN}(\Xw,\epsilon) := \big\{ Z\in\cX\st |z_i-\xw_i|<\epsilon,\,\forall \xw\in \Xw,\, \forall z\in Z \big\}
\end{equation*}
obtained by letting $\Xw$ and $\epsilon$ range in $\cX$ and $(0,\infty)$ respectively. The topology $\tau_\cX$ is not metrizable in this case, although the function $d_1(X,Z):=\sup_{(x,z)\in X\x Z} |\xw_i-z_i|$ is a semimetric on $\cX$. We also remark that $\tau_\cX$ coincides with the \emph{initial topology}\footnote{We recall that, given a family $\{f_\alpha\}_\alpha$ of functions $f_\alpha:A\to (B_\alpha,\tau_{B_\alpha})$, with $(B_\alpha,\tau_{B_\alpha})$  topological spaces, the initial topology of $f_\alpha$ on $A$ is  the coarsest topology for which each $f_\alpha$ is continuous.} of the projection map $\xw \mapsto \xw_i$.
\end{example}
\begin{example}[Weak $\cC^k$ Topology]\label{ex.weak_topology}
Let $\Pb\in\cK(\R^\dxw\x\R^\du)$ be an arbitrarily large compact set and, with $k\in\N$, let $\cF\subset\cC^k$ and define the semimetric $d_{(k,\Pb)}$ on $\cF$ as
\begin{equation*}
d_{(k,\Pb)}(\Fw,G) := \max_{i=0,\dots,k} \sup_{\pb\in\Pb}\left|\Fw^{(i)}(\pb)-G^{(i)}(\pb)\right|,
\end{equation*}		 
The topology $\tau_\cF$  induced by $d_{(k,\Pb)}$ is called the \emph{weak $\cC^k$ topology} \cite{Hirsch1994}. We observe that $d_{(k,\Pb)}$ is a metric on the space formed by the restrictions on $\Pb$ of the elements of $\cF$. The adjective ``weak'' refers to the fact that we restricted the ``$\sup$'' on the (pre-specified) compact set $\Pb$. By making $\Pb$ vary with~$\epsilon$, i.e. by considering the initial topology on $\cF$ induced by the family of functions $\{d_{(k,\Pb)}\}_{\Pb\in\cK(\R^\dxw\x\R^\du)}$, we obtain a much finer topology called the \emph{strong $\cC^k$} (or \emph{compact-open}) topology. Nevertheless, this strong version does not enjoy   some useful properties (e.g. it is not (semi-)metrizable) that its weaker version does, and hence in this paper we will only consider the weak version.
\end{example}
\begin{example}[Topology of Parameters Perturbations]\label{ex.perturbations.parametric}
	Let   $(P,\tau_P)$ be a topological space (called the \emph{parameter space}), and let $\cF$ be a set of $\cC^0$ functions indexed by $P$ (i.e. $\cF=\{\Fw_p\}_{p\in P}$ with $\Fw_p\in\cC^0$). 
Then $\cF$ models a family of  functions that is   \emph{parameterized} by the parameter $p\in P$. Typically $P$ is a subset of an Euclidean space endowed with the subset topology $\tau_P$.
By construction, there is a surjective map $\gamma:P\to\cF$ such that each $\Fw\in\cF$ is given by $\Fw=\gamma(p)$. We may assume that $\gamma$ is also injective (otherwise re-define $P$ by identifying the points yielding the same $\Fw$), so as $\gamma$ is invertible. We thus define the \emph{topology of parameter perturbations} $\tau_\cF$ on $\cF$ to be the initial topology of $\gamma\inv$, i.e. the  topology  generated by the sets $\gamma(U)$  for each open set $U$   of $\tau_P$. In the relevant case in which $P\subset\R^{n_p}$ for some $n_p\in\N$, and $\tau_P$ is induced by any norm on $\R^{n_p}$, then $\tau_\cF$ is   generated by the neighborhoods
\[
\MB{\cN}(\Fw,\epsilon) :=\left\{ G\in \cF\st  \left|p_\Fw -p_G \right|<\epsilon,\ p_G:=\gamma\inv(G) \right\} 
\]
for all $\Fw\in\cF$ and  $\epsilon>0$,
and in which $p_\Fw:=\gamma\inv(\Fw)$.
\end{example}
\begin{example}[Linear Perturbations]\label{ex.perturbations.linear}
\MB{Let $\cF$ be the set of \emph{linear maps}. Fix a basis for $\R^\dw$, $\R^\dxp$, $\R^\du$ and $\R^\dy$. Then, an invertible map $\gamma$ is defined that sends the \emph{matrix representation} $(M_s,M_{\fp},M_{\hp})\in \R^{\dw\x\dw}\x\R^{\dxp\x(\dw+\dxp+\du)}\x\R^{\dy\x(\dw+\dxp)}=:P$ of a given $\Fw=(s,\fp,\hp)\in\cF$ to $\Fw$ itself.} Then, this is a sub-case of   Example \ref{ex.perturbations.parametric}, and the topology of parameter perturbations induced on $\cF$ indeed coincides with the one induced  by any \emph{matrix norm} on the space $P$ of the matrix representations of $\cF$. Hence, the concept of variation captured by the topology of parameter perturbations coincides with the usual notion of parameter perturbations of linear systems, which is the one used in the context of structurally stable linear regulation~\cite{Francis1976,Davison1976,Byrnes1997}.  Moreover, in this case this topology also coincides with the  weak $\cC^0$ topology on $\cF$ with respect to any compact neighborhood $\Pb$ of the origin.
\end{example}

\subsection{Robustness}\label{sec.Robustness}
With $n_c\in\N$, $f_c\in\cC^0(\R^{\dc}\x\R^{\dy},\R^{\dc})$, $h_c\in\cC^0(\R^{\dc}\x\R^{\dy},\R^{\du})$ and $X_c\in\cK(\R^{\dc})$, consider a regulator of the form~\eqref{s.xc},  for convenience   rewritten hereafter
\begin{equation}\label{s.xc2}
	\Sigma_c:\ \left\{ \begin{array}{lclrl} 
	\dot x_c &=& f_c(x_c,y) &&x_c(0)\in  X_c \\
	u &=& h_c(x_c,y)
	\end{array} \right. 
\end{equation}
and consider the   interconnection between the extended  plant~\eqref{s.x} and  the  regulator   \eqref{s.xc2}, which reads as 
\begin{equation}\label{s.cl}
\Sigma_{\rm cl}(\Fw,\Xw):\ \left\{ \begin{array}{lclrl}
\dotxw &=& \fw(\xw,h_c(x_c,\hw(\xw))),&&x(0)\in \Xw\\
\dot x_c &=& f_c(x_c,\hw(\xw)),&&x_c(0)\in  X_c 
\end{array} \right.   
\end{equation}
in which we made explicit the dependence of $\Sigma_{\rm cl}$ from $F$ and~$X$.
The achievement by  regulator $\Sigma_c$  of a given steady-state property $\P$ for the closed-loop system $\Sigma_{\rm cl}(\Fw,\Xw)$  is the result of two subsequent goals: 
\begin{enumerate}
	\item\label{item.rob1} The existence of a non-empty steady state $\Omega_{\Sigma_{\rm cl}(\Fw,\Xw)}(X\x X_c)$ for  \eqref{s.cl} satisfying the properties of Proposition \ref{prop.Omega}.
	\item\label{item.rob2} The fulfillment of $\P$ by the closed-loop steady-state trajectories of $\Sigma_{\rm cl}(\Fw,\Xw)$.
\end{enumerate}
Clearly, Item \ref{item.rob2} is of interest only if  Item~\ref{item.rob1} is first ensured, as the achievement of property $\P$ is asked to the steady-state trajectories of $\Sigma_{\rm cl}(\Fw,\Xw)$, which exist only if Item~\ref{item.rob1} is first addressed. In turn, Item~\ref{item.rob1} can be seen as a \emph{stabilization requirement}, while Item~\ref{item.rob2} as a \emph{performance specification}.

Suppose that the regulator $\Sigma_c$ has been tuned under the assumption that the extended plant's data $(\Fw,\Xw)$   equal   a given \emph{nominal} value $(\nominal{\Fw},\nominal{\Xw})\in\cF\x\cX$. Then Items~\ref{item.rob1} and~\ref{item.rob2} above for the nominal case in which $(\Fw,\Xw)=(\nominal{\Fw},\nominal{\Xw})$ are formally  captured by the  following  definitions.
\begin{definition}[Nominal Stability]\label{def.S-nominal}
The regulator $\Sigma_c$  is said to be \emph{nominally stabilizing at $(\nominal{\Fw},\nominal{\Xw})\in\cF\x\cX$} if the system $\Sigma_{\rm cl}(\nominal{\Fw},\nominal{\Xw})$, given by \eqref{s.cl} for $(\Fw,\Xw)=(\nominal{\Fw},\nominal{\Xw})$,  is uniformly ultimately bounded from $\nominal{\Xw}\x X_c$.
\end{definition}
\begin{definition}[Nominal Steady-State Property]\label{def.P-nominal}
	The regulator $\Sigma_c$ is said to \emph{achieve the steady-state property $\P$ nominally  at $(\nominal{\Fw},\nominal{\Xw})\in\cF\x\cX$} (or to be \emph{$\P$-nominal}) if it is nominally stabilizing at $(\nominal{\Fw},\nominal{\Xw})$, 
and $(\Sigma_{\rm cl}(\nominal{\Fw},\nominal{\Xw}),\nominal{\Xw}\x X_c)$ enjoys $\P$  in the sense of
 Definition~\ref{def.steady-state}.
\end{definition}

As   the functions of \eqref{s.cl} are continuous,  then, by Proposition~\ref{prop.Omega}, a nominally stabilizing regulator guarantees that Item~\ref{item.rob1} above is fulfilled. Therefore, Definition \ref{def.P-nominal} is well posed.
Let now $\cF\x\cX$ be endowed with a topology $\tau$, as detailed in Section \ref{sec.Perturbations}, and suppose  that the same controller $\Sigma_c$, tuned on the nominal pair $(\nominal{\Fw},\nominal{\Xw})$, is applied to an extended plant \eqref{s.x} obtained by a possibly different pair $(\Fw,\Xw)$. Then, roughly speaking, the regulator $\Sigma_c$  will be called ``robust''  if the same nominal behavior expressed by Definitions \ref{def.S-nominal} and \ref{def.P-nominal} is maintained if the actual $(\Fw,\Xw)$ is  ``close-enough'' (relative to $\tau$) to $(\nominal{\Fw},\nominal{\Xw})$.
\begin{definition}[Robust Stability]\label{def.S-robust}
The regulator \eqref{s.xc2} is said to be \emph{robustly stabilizing at $(\nominal{\Fw},\nominal{\Xw})\in\cF\x\cX$ and with respect to $\tau$}   if there exists a $\tau$-neighborhood $\cN$ of $(\nominal{\Fw},\nominal{\Xw})$ such that, for each $(\Fw,\Xw)\in \cN$, the corresponding closed-loop system $\Sigma_{\rm cl}(\Fw,\Xw)$ given by \eqref{s.cl} is uniformly ultimately bounded from ${\Xw}\x X_c$.
\end{definition}

The $\tau$-neighborhood $\cN$ for which robust stability holds is called the \emph{robust stability neighborhood} of $(\nominal{\Fw},\nominal{\Xw})$. 
\begin{definition}[Robust Steady-State Property]\label{def.P-robust}
	The regulator~\eqref{s.xc2} is said to \emph{achieve the steady-state property~$\P$ robustly  at $(\nominal{\Fw},\nominal{\Xw})\in\cF\x\cX$ and with respect to $\tau$} (or to be \emph{$(\P,\tau)$-robust}) if it is robustly stabilizing at $(\nominal{\Fw},\nominal{\Xw})$ with respect to~$\tau$ and, by letting $\cN$ the  robust stability neighborhood of $(\nominal{\Fw},\nominal{\Xw})$, then the closed-loop system $(\Sigma_{\rm cl}(\Fw,\Xw),{\Xw}\x X_c)$ given by \eqref{s.cl} enjoys $\P$  in the sense of
	Definition \ref{def.steady-state}   for each $(\Fw,\Xw)\in \cN$.
	\end{definition}
\begin{remark}
One may  be concerned with the lack of \emph{uniformity} on the ultimate bound in Definition \ref{def.S-robust}, in the sense that, in this form, the definition admits the possibility of ``horizon-escaping'' phenomena of the limit sets as $(\Fw,\Xw)$ approaches the frontier of $\cN$. In fact, nothing in Definition \ref{def.S-robust} prevents the existence of a net $\{ (\Fw_\alpha,\Xw_\alpha) \}_\alpha$ with values in $\cN$ such that the corresponding net $\{ \Omega_{\Fw_\alpha}(\Xw_\alpha\x X_c) \}_\alpha$ leaves every compact subset of $\R^\dxw\x\R^\dc$. However, we underline that uniform boundedness  of the limit sets can be  enforced by opportunely modifying the property~$\P$. For instance, one may consider robustness relative to a  property of the kind
\begin{equation*}
\P'  := \P \wedge \propdef{  \cup_{\xi\in\SST}\range\xi \subset K  }
\end{equation*}
with $\wedge$ denoting the logical ``and'', and $K\subset\R^\dxw\x\R^\dc$ a given pre-specified bounded set, possibly dependent on $(\nominal{\Fw},\nominal{\Xw})$. 	 
\end{remark}

Three examples of robustness obtained for different steady-state properties are given below. The discussion of robustness with respect to output regulation properties instead is postponed to the forthcoming dedicated sections.
\begin{example}[Trivial and Universal Robustness]
	We consider two degenerate cases: 
(i) if $\tau$ is the discrete topology (Example \ref{ex.topo1})  then,  for any $\P$, every $\P$-nominal regulator is also $(\P,\tau)$-robust, in fact  $\{(\nominal{\Fw},\nominal{\Xw})\}$ is a $\tau$-neighborhood of $(\nominal{\Fw},\nominal{\Xw})$; (ii)   if $\tau$ is the trivial topology then, for every $\P$, no $\P$-nominal regulator is $(\P,\tau)$-robust, unless it is \emph{universal} (i.e. it achieves $\P$ for \emph{every} $(\Fw,\Xw)\in\cF\x\cX$). In fact, in this case, $\cF\x\cX$ is the only neighborhood of $(\nominal{\Fw},\nominal{\Xw})$.
\end{example}
\begin{example}[Asymptotic Stability Is Robust Invariance]
For a given $(\nominal{\Fw},\nominal{\Xw})\in\cF\x\cX$, suppose that the regulator $x_c$ is chosen so that $A:=\nominal{X}\x\R^\dc$ is invariant for $\Sigma_{\rm cl}(\nominal{F},\nominal{X})$ (in this case $\nominal{X}$ may represent, for instance, a desired equilibrium or operating set for $x$). Let $\P_A$ be the set-membership property of Example \ref{ex.P_A}. Then, by construction, the regulator $x_c$ is $\P_A$-nominal at $(\nominal{\Xw},\nominal{\Fw})$. Let $\tau=\tau_\cF\x\tau_\cX$, with $\tau_\cF$ the discrete topology and $\tau_\cX$ the Hausdorff topology. Then, a regulator is $(\P_A,\tau)$-robust at $(\nominal{\Xw},\nominal{\Fw})$ if and only if $A$ is   \emph{(locally) asymptotically stable} for $\Sigma_{\rm cl}(\nominal{F},\R^\dxw)$. We remark that this claim is only valid because $\tau_\cF$ is the discrete topology  and, thus robustness may refer only to perturbations of $\nominal{\Xw}$. Sufficiency is obvious, while necessity follows by observing that, if  $x_c$ is $(\P_A,\tau)$-robust, then  exists $\mu>0$ such that $A$ is uniformly attractive from $(\nominal{\Xw}+\mu\ball)\x\R^\dc$. Thus, in view of \cite[Proposition 7.5]{Goebel2012book}, since $A\subset (\nominal{\Xw}+\mu\ball)\x\R^\dc$ is invariant, it is also asymptotically stable. 
\end{example}
\begin{example}[Total Stability]\label{ex.total_stab} Let $\cF=\cC^1$ and $\cX=\cK(\R^{\dx})$. Suppose that $(\nominal\Fw,\nominal\Xw)$ is such that \eqref{s.cl} has an equilibrium which is locally exponentially stable with a domain of attraction $\cD$ including $\nominal\Xw\x X_c$. Let $\tau_\cF$ be the weak $\cC^1$ topology (Example~\ref{ex.weak_topology}) on a compact set $\Pb:=\Xb\x\Ub$ with $\cD\subset \Xb$, and let $\tau_\cX$ be the Hausdorff topology on $\cX$. Then for small enough perturbations $(\Fw,\Xw)$ of $(\nominal\Fw,\nominal\Xw)$, the system $\Sigma_{\rm cl}(\Fw,\Xw)$ still has an equilibrium, possibly different from the nominal one, which is still locally exponentially stable  \cite[Lemma~5]{Astolfi2017}. This is known as \emph{total stability} and, in the language of this paper, it is equivalent to $(\P_{\rm eq},\tau_{\cF}\x\tau_\cX)$-robustness where $\P_{\rm eq}$ is the equilibrium property of Example~\ref{ex.P_eq}.
\end{example}


%
%

\section{Regulators with Linear Internal Models}\label{sec.Linear}
In this section we focus on a particular class of output-feedback regulators \eqref{s.xc2} embedding a \emph{linear internal model~(LIM)}. In addition to the   linear setting of~\cite{Francis1975,Francis1976,Davison1976}, linear internal models have been extensively used also in the nonlinear literature~(see, e.g.~\cite{Isidori1990,Huang1994,Byrnes1997,Serrani2001,Byrnes2003,Astolfi2015,Astolfi2017}). 

From now on we consider an extended plant  \eqref{s.x} with a regulation error \eqref{d.e}, and we    assume the following.
\begin{assumption}\label{ass.e_y}
The regulation error $e$ is directly measured from $y$, i.e. $y=(e,\yaux)$ where $\yaux\in\R^\dyaux$, $\dyaux:=\dy-\de$, is an ``auxiliary output''.
\end{assumption}

Assumption \ref{ass.e_y} is not in principle necessary, although the slightly weaker concept of \emph{readability}\footnote{The regulation error $e$ is said to be readable from the output $y$, if there exists a matrix $Q\in\R^{\de\x\dy}$ such that $e=Qy$ \cite{Francis1975}.} of $e$ from $y$ is indeed necessary to achieve robust asymptotic regulation  for linear systems \cite[Proposition 2]{Francis1975}, \cite[Condition~5 in Lemma~1]{Davison1976}.

Furthermore, in this section we focus on ``smooth variations'' of the  function $F=(f,h)$ of the extended plant \eqref{s.x}. More precisely, we suppose that $\cF\subset\cC^1$ and, with $\Xb\subset\R^\dx$ and $\Ub\subset\R^\du$  arbitrarily large compact neighborhoods,  we let $\Pb:=\Xb\x\Ub$  and we endow $\cF$ with the subset topology $\tau_\cF$ induced by the weak $\cC^1$  topology $\tau_{\cC^1}$ defined in Example \ref{ex.weak_topology}. Regarding the initialization set, instead, we let $\cX=\cK(\R^\dx)$.

Regulators embedding a linear internal model  are systems of the form \eqref{s.xc2} in  which, possibly after a change of coordinates, the state $x_c$   is partitioned as
\begin{subequations}\label{s.xc_linear_im}
	\begin{equation}
		 x_c = \big( \cim,\, \cst \big)
	\end{equation}
with $\cim\in\R^\dcim$ and $\cst\in\R^\dcst$, being $\dcim$, $\dcst\in\N$ such that $\dcim+\dcst=\dc$, and with the maps $f_c$ and $h_c$ making~\eqref{s.xc2}   read as follows
\begin{equation}
\Sigma_c^{\rm LIM}:\left\{\begin{array}{lcl}
\dotcim &=& \Acim \cim + \Bcim e\\
\dotcst &=& \fcst(\cst,\cim,y)\\
u &=& h_c(\cst,\cim,y) 
\end{array}\right. \quad x_c(0)\in X_c,
\end{equation}
\end{subequations} 
in which $\Acim$ and $\Bcim$ are linear maps such that the subsystem~$\cim$ is controllable from $e$, and $\Acim$ contains the same modes that we would like to reject from $e$ at the steady state (a precise characterization is given in the forthcoming sections).
In the output regulation literature, the structure \eqref{s.xc_linear_im} is said to be of the \emph{post-processing} type \cite{Astolfi2013,Bin2019}. The subsystem $\cim$ is the internal model unit, and it is responsible of generating the right feedforward control action   ideally keeping $e(t)=0$ at the steady state. The subsystem $\cst$, instead, typically has the role of stabilizing the whole closed-loop system. 

In the remainder of the section we consider different settings in which \eqref{s.xc_linear_im} is used, and we characterize   some of its robustness properties relative to asymptotic and approximate regulation properties.

\subsection{Robustness of the Linear Regulator}\label{sec.lin.lin}
As a first case, we consider the canonical linear setting of~\cite{Francis1975,Francis1976,Davison1976}, in which the nominal extended plant's function~$\nominal{F}$ is linear, the nominal exosystem's   function $\nominal{s}$ is marginally stable, known and  unperturbed, and the canonical Linear Regulator is used \cite{Davison1976}.
In particular, with $\cF_L\subset\cC^1$   the set of     linear functions, we suppose that $\nominal{\Fw}\in\cF_L$, and
we 
 let
\begin{equation}\label{lim.d.cF_linear_sknown}
\cF:= \big\{ F\in\cF_L\st s=s^\circ\big\}.
\end{equation} 
Namely the only variations $F$ of $\nominal{F}$ that we consider are those for which $F$ remains linear and $\nominal{s}$ is untouched.
The topology $\tau_\cF$ that we define on $\cF$  is the subset  $\cC^1$ topology  and it   coincides with the usual topology of  parametric perturbations of the matrix representation of $F$ (see Example~\ref{ex.perturbations.linear}). 

The control goal that we consider in this section is \emph{global} robust asymptotic regulation. Namely, we aim at finding a regulator ensuring   $\lim_{t\to\infty}e(t)=0$ from every initial condition even when applied to an extended plant whose function $F$ slightly differs from $\nominal{F}$ (although remaining in~$\cF$). In the language of this paper, we thus seek a {$\cP_0$-robust} regulator, where
\begin{equation}\label{d.P0}
	\cP_0:=\propdef{\forall\xi\in\cO,\ \forall t\in\dom\xi,\ h_e(w(t),\xp(t))=0}.
\end{equation} 
As the aimed result is global in the initial conditions, we let $\nominal{X}\in\cX$ be arbitrary, and we let $\tau_\cX$ be the trivial topology  $\tau_\cX:=\{\emptyset,\cX\}$. In fact, this implies that every robust regulator necessarily achieves the objective globally in the initial conditions, since the only $\tau_\cX$-neighborhood of $\nominal X$ is the whole $\cX$.

It is a classical result in linear control theory,  that this control objective can be guaranteed by means of a regulator of the form \eqref{s.xc_linear_im} constructed as follows \cite{Davison1976}:
\begin{enumerate}[label=\alph*)]
	\item $X_c$ is arbitrary (e.g. $(\cim(0),\cst(0))=0$).
	\item $\Acim$ and $\Bcim$ are chosen so that $(\Acim,\Bcim)$ is controllable, $\Phi$ is marginally stable, and the characteristic polynomial of $\Phi$ coincides with the minimal polynomial of  $s^\circ$.
   \item $\fcst$ and $h_c$ are linear functions chosen to stabilize the overall closed-loop system. This choice is always possible under standard detectability and stabilizability conditions on the plant, and provided that the following \emph{non-resonance condition} holds  \cite[Lemma~14]{Davison1976}
   \begin{equation*}
   \rank \begin{pmatrix}
   \partial \nominal{f_p}(0)/\partial \xp  -\lambda I & \partial \nominal{f_p}(0)/\partial u\\
   \partial \nominal{h_e}(0)/\partial \xp & 0
   \end{pmatrix} = \dxp+\de, 
   \end{equation*} 
   for all $\lambda\in\sigma(\nominal{s})$, in which we recall $f_p$ and $h_p$ are the functions defining, respectively, the plant's dynamics~\eqref{s.wxy} and the regulation error~\eqref{d.e}.
\end{enumerate}
The regulator constructed in this way is known as the \emph{Linear Regulator}, and it enjoys the following robustness property.
 \begin{theorem}\label{thm:linear_robustness}
 	The Linear Regulator  is $(\cP_0,\tau_\cF\x\tau_\cX)$-robust at $(\nominal F,\nominal X)$, with $\cP_0$ defined   in \eqref{d.P0}.
 \end{theorem}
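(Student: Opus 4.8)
The goal is to show two things for every perturbation $(\Fw,\Xw)$ in a suitable $\tau_\cF\times\tau_\cX$-neighborhood $\cN$ of $(\nominal F,\nominal X)$: (i) robust stability, i.e. $\Sigma_{\rm cl}(\Fw,\Xw)$ is uniformly ultimately bounded from $\Xw\times X_c$; and (ii) achievement of $\cP_0$, i.e. the regulation error vanishes on every steady-state trajectory. Because $\tau_\cX$ is the trivial topology, the only neighborhood of $\nominal X$ is $\cX$ itself, so $\cN$ must be of the form $\cN_\cF\times\cX$ for some $\tau_\cF$-neighborhood $\cN_\cF$ of $\nominal F$; in particular robustness here is automatically \emph{global} in the initial conditions, and I only need to exhibit $\cN_\cF$. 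The plan is to exploit that on $\cF$ (the set of linear maps with $s=s^\circ$) the weak $\cC^1$ topology coincides, by Example~\ref{ex.perturbations.linear}, with the usual parametric topology on the matrix representation, so a $\tau_\cF$-neighborhood of $\nominal F$ is exactly a set of linear plants whose matrices $(M_{\fp},M_{\hp},M_{h_e})$ differ from the nominal ones by less than some $\epsilon$ in norm, with $s=s^\circ$ fixed.

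First I would write the closed-loop system \eqref{s.cl} with the Linear Regulator \eqref{s.xc_linear_im} in the linear coordinates $\chi:=(w,\xp,\eta_{\rm im},\eta_{\rm st})$, obtaining $\dot\chi = \cA(F)\chi$ for a closed-loop matrix $\cA(F)$ that depends continuously (indeed affinely) on the perturbed plant matrices. By construction (step c) the nominal matrix $\cA(\nominal F)$ restricted to the controller-plus-plant states $(\xp,\eta_{\rm im},\eta_{\rm st})$ is Hurwitz, while the exosystem block $s^\circ$ is marginally stable and \emph{decoupled} (autonomous) from the rest. The stabilization step c) guarantees this Hurwitz property at the nominal value; since Hurwitzness is an open condition and $\cA(\cdot)$ is continuous in the matrix entries, there exists $\epsilon>0$ such that the corresponding sub-block of $\cA(F)$ remains Hurwitz for all $F$ in the $\epsilon$-ball $\cN_\cF$. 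With $w$ evolving autonomously on the (bounded) orbits of the marginally stable known $s^\circ$ and driving a Hurwitz subsystem, a standard ISS/bounded-input–bounded-state argument gives uniform ultimate boundedness from any compact $\Xw\times X_c$, establishing (i) and fixing the robust stability neighborhood $\cN=\cN_\cF\times\cX$.

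Next, for (ii), I would characterize the steady-state locus $\Omega_{\Sigma_{\rm cl}(\Fw,\Xw)}(\Xw\times X_c)$. Since the non-exosystem block is Hurwitz, every trajectory converges to the unique forced response synchronized with $w(t)$; the steady-state trajectories are exactly those lying on the controlled-invariant subspace where $(\xp,\eta_{\rm im},\eta_{\rm st})$ equals the periodic/bounded steady-state solution driven by $w$. The crux is to show $e\equiv 0$ on this locus. This is the classical internal-model argument: on a steady-state trajectory $\dot\eta_{\rm im}=\Phi\eta_{\rm im}+Ge$ is bounded, and because $\Phi$ is marginally stable with characteristic polynomial equal to the minimal polynomial of $s^\circ$, the only bounded steady-state solution forces the input $e$ into the kernel of the internal-model annihilator—so $e$ must be annihilated by the minimal polynomial of $s^\circ$; combined with boundedness of $\eta_{\rm im}$ this yields $e\equiv 0$. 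Crucially this conclusion is \emph{structural}: it uses only that $\Phi$ reproduces the modes of $s^\circ$ and that $(\Phi,G)$ is controllable, and it does \emph{not} depend on the perturbed plant matrices, which is precisely why robustness against all $F\in\cN_\cF$ follows.

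The main obstacle I anticipate is making the internal-model step fully rigorous in the present framework's language: I must argue that $e\equiv0$ holds on the \emph{steady-state locus} $\Omega$ as defined through reachable tails, rather than merely on limits of individual trajectories, and that the controllability of $(\Phi,G)$ together with marginal stability of $\Phi$ indeed forces the bounded steady-state error into zero. The clean way is to observe that on $\Omega$ the internal-model state $\eta_{\rm im}$ is bounded and satisfies a marginally-stable linear equation driven by $e$; transforming to the eigenbasis of $s^\circ$, any nonzero component of $e$ along a mode $\lambda\in\sigma(s^\circ)$ would, through the resonant $\Phi$, produce a secular (unbounded) term in $\eta_{\rm im}$, contradicting boundedness; hence $e=0$ identically. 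I expect this spectral/resonance bookkeeping—rather than the stabilization step—to be where the real care is needed, since the rest reduces to continuity of eigenvalues and a routine boundedness estimate.
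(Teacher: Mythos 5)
Your proposal is correct, and the robust-stability half coincides with the paper's (continuity of the closed-loop spectrum in the matrix entries, openness of the Hurwitz condition, boundedness of the exosystem orbits). For the harder half --- showing $e\equiv 0$ on the steady-state locus --- you take a genuinely different route. The paper does not redo the classical argument: it invokes its own Theorem~\ref{thm.P_nu_weak} (the generalized-Fourier-coefficient rejection result for regulators embedding the linear internal model) together with Remark~\ref{rmk.ap}, after observing that the steady-state trajectories of a stable linear system driven by a marginally stable exosystem are almost periodic; the vanishing of all Fourier--Bohr coefficients of $e$ at the exosystem frequencies, combined with the fact that $e$ has harmonic content only at those frequencies, then forces $e\equiv 0$. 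You instead run the textbook internal-model argument directly: characterize the steady state as the graph of the Sylvester solution over $w$, so that $e$ is a combination of the modes of $\nominal s$, and then use resonance (a nonzero component of $e$ along a mode of $\Phi$ would produce secular growth in $\eta_{\rm im}$, contradicting compactness of the limit set). Both are sound. The paper's route buys unification --- the linear theorem becomes a corollary of the nonlinear harmonic-rejection machinery of Section~\ref{sec.lin.qT}, with no need to exhibit the steady-state subspace --- while yours is self-contained and makes the role of controllability of $(\Phi,G)$ explicit through the secular term. Two points of care in your version: you must indeed justify that $e$ restricted to $\Omega$ carries \emph{only} exosystem modes (you do, via the Hurwitz sub-block and the forced-response characterization), and for $\de>1$ the secular-growth step requires the internal model to be replicated once per error channel so that no nonzero vector-valued resonant component of $e$ can be invisible to $G$; this is part of the standard construction the theorem presupposes, and the paper is equally implicit about it.
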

 \begin{proof}
 	The fact that the Linear Regulator is robustly stabilizing the closed-loop system \eqref{s.x}, \eqref{s.xc_linear_im} at $(\nominal F, \nominal X)$ with respect to $\tau_\cF\x\tau_\cX$  is a direct consequence of the definition of $\tau_\cF$ and the continuity of the spectrum of the (linear) closed-loop  function. The fact that the steady-state trajectories of the perturbed closed-loop system enjoy $\cP_0$ follows from  Theorem~\ref{thm.P_nu_weak} and Remark~\ref{rmk.ap} (see below, Section~\ref{sec.lin.qT}),   once noted that such trajectories are necessarily almost periodic (see, e.g., \cite[Theorem 4.2]{CorduneanuAlmostP})
 \end{proof}

An intuitive argument for the formidable robustness result enunciated by Theorem \ref{thm:linear_robustness}   is the following:  no matter how large are the variations of $F$ with respect to $\nominal F$, if $F$ remains linear, then  the closed-loop system still consists of a stable linear system driven by the same  exosystem. Therefore, the closed-loop   steady-state trajectories keep oscillating at the same frequencies, and the linear internal model of \eqref{s.xc_linear_im} is still able to generate the  error-zeroing control action needed in the perturbed case. This ``immersion'' property is at the basis of most of the existing robustness results for nonlinear systems~\cite{Huang1994,Huang1995,Byrnes1997,Isidori2012,Astolfi2017,Bin2019}. In particular, it is at the basis of the well-known \emph{integral action} \cite{Astolfi2017}: if $\Phi=0$ in~\eqref{s.xc_linear_im},  and if the rest of regulator ensures local exponential stability of the controlled plant when $w=0$, then for small constant $w(t)$ and small $\cC^0$ perturbations of the plant's dynamics, the closed-loop system still has a stable equilibrium (see Example~\ref{ex.total_stab}), and asymptotic regulation is achieved.    

Nevertheless, this  immersion  property is hardly satisfies in a general nonlinear setting under unstructured  perturbations (see Theorem~\ref{thm.controesempio} in Section~\ref{sec.nonlinear}).

\subsection{Nonlinear Perturbations and (Weak) Periodic Robustness}\label{sec.lin.T}

In this section we allow general $\cC^1$ perturbations of $F$, thus including the case in which the perturbed function $F$ may be \emph{nonlinear} and the perturbation affects also the exosystem map~$s$. In particular, we let $\cX=\cK(\R^\dx)$ and, as  in  \cite{Astolfi2015,Astolfi2017}, we let $\cF$ be an arbitrary subset of $\cC^1$. We endow $\cX$ with an arbitrary topology $\tau_\cX$, and we let $\tau_\cF$ be the subset topology induced by the weak $\cC^1$ topology (Example~\ref{ex.weak_topology}) on an arbitrary compact neighborhood $\Pb\subset\R^\dx\x\R^\du$ of the origin.

In this setting, a robustness result of the kind given in Theorem~\ref{thm:linear_robustness} is no more possible in general. Nevertheless, along the lines of~\cite{Astolfi2015}, we can show that a regulator of the kind~\eqref{s.xc_linear_im}, embedding a suitably designed linear internal model, can still guarantee robustness of a weaker, approximate regulation objective  consisting in the rejection from $e$ of the harmonics included in the internal model dynamics. We treat here the periodic case in which the frequencies included in the internal models are multiple of a fundamental one. We postpone the more general case of arbitrary frequencies to the next section. 

As our aim is to highlight the role of the linear internal model in a nonlinear setting, we do not restrict the possible exosystem dynamics and we do not fix the other parts  of the regulator \eqref{s.xc_linear_im} (i.e. the   maps $\fcst$ and $h_c$), which  unlike the previous case can be nonlinear. We rather give a general robustness result (Theorem~\ref{thm.P_T_weak}) which is independent from their specific choice, provided that some basic robust stability properties hold. This permits to   separate the contributions of the exosystem and of the subsystems $\cim$ and $\cst$ of~\eqref{s.xc_linear_im} in terms of their effect on the steady-state trajectories of the closed-loop system. Later in the section, we support Theorem~\ref{thm.P_T_weak} with two other results giving conditions on the exosystem and on $(\fcst,h_c)$ ensuring that its assumptions  are fulfilled.

 In the remainder of the section, we consider a regulator of the form \eqref{s.xc_linear_im}, in which  $\Phi$ and $G$ are chosen as follows: we first fix an arbitrary period $T>0$, and an arbitrary number $d\in\N$ of harmonics to reject. Then, we choose $\dcim$, $\Phi$ and $G$ in such a way that
\begin{enumerate}[label=IM-\Alph*),ref=IM-\Alph*,leftmargin=1.3cm]
	\item\label{item.IM.dcim} $\dcim:=(2d+1)\de$.
	\item\label{item.IM.Phi} The spectrum of $\Phi$ is
	\begin{equation*}
	\sigma(\Phi) = \{0\}\cup \left( \bigcup_{k=1}^d \left\{i\dfrac{2\pi k}{T},\,-i\dfrac{2\pi k}{T}\right\} \right) 
	\end{equation*} 
	in which $i$ denotes the imaginary unit, and in which each eigenvalue has algebraic and geometric multiplicity  $\de$.
	\item\label{item.IM.G} $(\Phi,G)$ is controllable.
\end{enumerate}
Item \ref{item.IM.Phi}, in particular, implies that the unforced internal model subsystem $\cim$ can generate all the $T$-periodic signals having a non-zero bias and the first $d$ harmonics starting from the fundamental frequency $1/T$. 

We now characterize the robustness properties of any   regulator  \eqref{s.xc_linear_im} embedding the internal model defined by Items \ref{item.IM.dcim}, \ref{item.IM.Phi} and \ref{item.IM.G} above, in terms of asymptotic rejection from the steady-state regulation error $e$ of a bias and the harmonics at frequencies $k/T$, $k=1,\dots,d$. For $(F,X)\in\cF\x\cX$, we denote by $\Sigma_{\rm cl}(F,X)$ the closed-loop system  composed by the extended plant \eqref{s.x}, with initial conditions in $X$, and the regulator \eqref{s.xc_linear_im} with the  internal model unit constructed above. We denote by $\xi:=(w,\xp,\cim,\cst)$ the overall state. Then, for a given continuous function $\alpha:\R\to\R^m$, $m\in\N$, we define the Fourier coefficients 
\begin{equation*}
\fourier_k(\alpha) := \int_0^T \alpha(t)e^{-i 2\pi k t/T} dt 
\end{equation*}
 and we let
\begin{equation*}
\cQ_d^m:=\Big\{ \alpha:\R\to\R^m\st c_k(\alpha)=0,\, k=0,\dots,d\Big\}
\end{equation*}
be the subspace of the functions $\R\to\R^m$ that have null Fourier coefficient $\fourier_k(\alpha)$  for all $k=0,\dots, d$. 
Then, we define the following steady-state property
\begin{equation}\label{d.PTweak}
	\begin{aligned}
 {\cP_{T,weak}}  := \propdef{\forall & \xi\in\cO_{\Sigma_{\rm cl}(F,X)}(X\x X_c),\\&  \cim \text{ is not }T\text{-periodic or }  e \in  \cQ_d^{\de}}.
\end{aligned}
\end{equation}
If $\cim$ is \emph{not} $T$-periodic, then Property $\cP_{T,weak}$ is always satisfied. When, however, $\cim$ is $T$-periodic, $\cP_{T,weak}$ asks that the steady-state regulation error $e$ has zero mean value and zero amplitude at every frequency $k/T$,  $k=1,\dots, d$.  
Then, with $(\nominal F,\nominal X)\in\cF\x\cX$   the nominal value of the extended plant's data, the following result holds.
\begin{theorem}\label{thm.P_T_weak}
	Consider a regulator $\Sigma_c^{\rm LIM}$ of the form \eqref{s.xc_linear_im}, with $\dcim$, $\Phi$ and $G$ chosen according to Items \ref{item.IM.dcim}, \ref{item.IM.Phi} and \ref{item.IM.G}. Suppose that $\Sigma_c^{\rm LIM}$ is robustly stabilizing at $(\nominal F,\nominal X)$ with respect to $\tau_\cX\x\tau_{\cF}$. Then the regulator  is $(\cP_{T,weak},\tau_\cX\x\tau_{\cF})$-robust at $(\nominal F,\nominal X)$, with $\cP_{T,weak}$ defined as in \eqref{d.PTweak}.
\end{theorem}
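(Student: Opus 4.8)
The plan is to exploit a structural feature of the problem: only the plant data $(\Fw,\Xw)$ is perturbed, while the regulator $\Sigma_c^{\rm LIM}$---and in particular the internal-model pair $(\Phi,G)$---is kept fixed. Consequently the internal-model relation $\dotcim=\Phi\cim+Ge$ holds verbatim along \emph{every} steady-state trajectory of \emph{every} perturbed closed loop, and it is this relation, together with the spectral placement of Item \ref{item.IM.Phi} and the controllability of Item \ref{item.IM.G}, that will do all the work. Robust stabilizability at $(\nominal\Fw,\nominal\Xw)$ furnishes, via Definition \ref{def.S-robust}, a neighborhood $\cN$ of $(\nominal\Fw,\nominal\Xw)$ on which $\Sigma_{\rm cl}(\Fw,\Xw)$ is uniformly ultimately bounded; since the closed-loop vector field is continuous and $\Xw\x X_c$ is compact, Proposition \ref{prop.Omega} then guarantees that $\Omega_{\Sigma_{\rm cl}(\Fw,\Xw)}(\Xw\x X_c)$ is nonempty and compact for every $(\Fw,\Xw)\in\cN$. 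It therefore suffices to fix an arbitrary $(\Fw,\Xw)\in\cN$ and an arbitrary steady-state trajectory $\xi=(w,\xp,\cim,\cst)$ and to show that, \emph{if} $\cim$ is $T$-periodic, then $e=h_e(w,\xp)\in\cQ_d^{\de}$; when $\cim$ is not $T$-periodic, $\cP_{T,weak}$ holds vacuously.

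Assume then that $\cim$ is $T$-periodic. The first step is to push the Fourier operator $\fourier_k$ through the internal-model dynamics. Because $\cim$ solves a differential equation with continuous right-hand side it is $\cC^1$, its domain contains $[0,T]$, and $\cim(0)=\cim(T)$; integrating by parts over one period thus annihilates the boundary terms and gives $\fourier_k(\dotcim)=i(2\pi k/T)\fourier_k(\cim)$. On the other hand, applying $\fourier_k$ to $\dotcim=\Phi\cim+Ge$ and using linearity of the integral together with the constancy of $\Phi$ and $G$ gives $\fourier_k(\dotcim)=\Phi\,\fourier_k(\cim)+G\,\fourier_k(e)$. Equating the two expressions yields the linear relation
\begin{equation*}
\Big(i\tfrac{2\pi k}{T}I-\Phi\Big)\fourier_k(\cim)=G\,\fourier_k(e),\qquad k=0,\dots,d.
\end{equation*}

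The heart of the proof---and the step I expect to be the main obstacle---is to extract $\fourier_k(e)=0$ from this relation, since the coefficient matrix $i(2\pi k/T)I-\Phi$ is \emph{singular} (by Item \ref{item.IM.Phi}, $i2\pi k/T\in\sigma(\Phi)$ for each $k=0,\dots,d$) and hence cannot simply be inverted; the eigenstructure and controllability must be combined. Let the rows of $V_k\in\C^{\de\x\dcim}$ form a basis of the left eigenspace of $\Phi$ for $i2\pi k/T$, which has dimension $\de$ by the stated geometric multiplicity; then $V_k\big(i\tfrac{2\pi k}{T}I-\Phi\big)=0$, so left-multiplying the displayed relation by $V_k$ kills its left-hand side and leaves $V_kG\,\fourier_k(e)=0$. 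It remains to show that $V_kG\in\C^{\de\x\de}$ is nonsingular. By the Popov--Belevitch--Hautus test, controllability of $(\Phi,G)$ means $[\,i\tfrac{2\pi k}{T}I-\Phi\mid G\,]$ has full row rank; hence if $w^{*}V_kG=0$, the row vector $z:=w^{*}V_k$ satisfies both $z\big(i\tfrac{2\pi k}{T}I-\Phi\big)=0$ and $zG=0$, forcing $z=0$ and, the rows of $V_k$ being independent, $w=0$. Thus $V_kG$ is nonsingular, and $V_kG\,\fourier_k(e)=0$ gives $\fourier_k(e)=0$ for every $k=0,\dots,d$, i.e. $e\in\cQ_d^{\de}$. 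As $(\Fw,\Xw)\in\cN$ and the steady-state trajectory were arbitrary, $\Sigma_c^{\rm LIM}$ is $(\cP_{T,weak},\tau_\cX\x\tau_\cF)$-robust at $(\nominal\Fw,\nominal\Xw)$.
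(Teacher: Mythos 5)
Your proof is correct, and its skeleton coincides with the paper's: robust stabilizability gives the neighborhood $\cN$ on which limit sets are nonempty and compact, the non-$T$-periodic case is dismissed as vacuous, and the decisive computation is integration by parts of the Fourier coefficients over one period, with $T$-periodicity killing the boundary terms. Where you diverge is in how controllability and the spectral placement of $\Phi$ are cashed in at the end. The paper first passes to a block controllable canonical form \eqref{d.lin.Phi_G}, reads off the scalar-block ODE $(\cim^1)^{(2d+1)}+a_{2d+1}(\cim^1)^{(2d)}+\cdots+a_1\cim^1=e$, and obtains $c_k(e)=p(\lambda_k)c_k(\cim^1)$ with $p$ the characteristic polynomial of $\Phi$, which vanishes at $\lambda_k=i2\pi k/T\in\sigma(\Phi)$. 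You instead stay in first-order coordinates, derive $\bigl(i\tfrac{2\pi k}{T}I-\Phi\bigr)\fourier_k(\cim)=G\,\fourier_k(e)$, annihilate the left-hand side with a basis $V_k$ of the left eigenspace (which has dimension $\de$ by the geometric-multiplicity assumption of Item \ref{item.IM.Phi}), and conclude via the Popov--Belevitch--Hautus test that $V_kG$ is nonsingular. The two routes are algebraically equivalent, but yours is coordinate-free and sidesteps the need to justify that the pair $(\Phi,G)$ actually admits the uniform block companion realization \eqref{d.lin.Phi_G} (which in the paper rests implicitly on the controllability indices all being equal to $2d+1$); the paper's route, in exchange, makes the mechanism transparent as the familiar fact that the internal model's characteristic polynomial annihilates the targeted harmonics. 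Either way the conclusion $\fourier_k(e)=0$ for $k=0,\dots,d$, hence $e\in\cQ_d^{\de}$, follows for every steady-state trajectory of every $(\Fw,\Xw)\in\cN$.
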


Theorem \ref{thm.P_T_weak}, whose proof is reported in Appendix \ref{apd.proof.Thm_PT_weak}, characterizes the effect of the internal model defined by Items~\ref{item.IM.dcim}, \ref{item.IM.Phi} and \ref{item.IM.G}  independently on the rest of the regulator. In particular, if the remaining degrees of freedom $(\dcst,\fcst,h_c,X_c)$ of \eqref{s.xc_linear_im} can be chosen to ensure   robust stabilization of the closed-loop system, and that the steady-state trajectories of $\cim$ are $T$-periodic, then rejection from $e$ of the harmonics embedded in the internal model holds robustly. 
	
We also remark that $T$-periodicity of the steady-state trajectories of $\cim$ is a condition which is also exosystem-dependent, and the ability to design $(\dcst,\fcst,h_c,X_c)$ to guarantee   robust stabilization highly depends on the particular   extended plant considered.
For instance, if the extended plant is linear, then $(\fcst,h_c)$ can be chosen as a simple linear stabilizer, thus reducing to a particular case of the Linear Regulator of Section~\ref{sec.lin.lin}. If the extended plant is nonlinear, instead, only few cases are covered in the literature. For instance, in~\cite{Byrnes2004,Marconi2007,Astolfi2013,WanIsiLiuSu,Bin2019}, semiglobal  solutions based on ``high-gain arguments'' are proposed for classes of   minimum-phase normal forms, and in \cite{Astolfi2015,Astolfi2017}  \emph{forwarding} techniques are used for  a class of non-necessarily minimum-phase  systems in general form and for ``small $w$''. Under suitable conditions, and if the solutions of the exosystem are $T$-periodic, some of these design solutions yield regulators of the form \eqref{s.xc_linear_im} which are robustly stabilizing at $(\nominal F,\nominal X)$ and that also ensure that~$\cim$ is $T$-periodic at the steady state. Therefore, they strengthen the result of Theorem \ref{thm.P_T_weak} to  robustness with respect to the following steady-state property
\begin{equation}\label{d.PT}
\cP_T := \propdef{\forall\xi\in\cO_{\Sigma_{\rm cl}(F,X)}(X\x X_c),\  e \in \cQ_d^{n_e} }
\end{equation}
which represents the ``strong'' version of $\cP_{T,weak}$.

\subsection{Achieving Strong Periodic Robustness}\label{sec.lin.sT}

In this section, we further investigate the problem of individuating sufficient conditions under which the claim of Theorem~\ref{thm.P_T_weak} may be strengthen to $\cP_{T}$-robustness, where $\P_T$ is defined in \eqref{d.PT}. 

First, we consider a \emph{local} (in the initial conditions) result. In this case, we let $\cF=\cF_w\x \cF_{p}$, where $\cF_w$ is a set of~$\cC^1$ functions $\R^\dw\to\R^\dw$ and $\cF_{p}$ is a set of $\cC^2$ functions of the form $(\fp,\hp)$, with $\fp:\R^\dw\x\R^\dxp\x\R^\du\to\R^\dxp$ and $\hp:\R^\dw\x\R^\dxp\to\R^\dy$. We endow $\cF_w$ with an arbitrary topology~$\tau_{\cF_w}$ and, with $\Xb\subset\R^\dw\x\R^\dxp$ and $\Ub\subset\R^\du$ arbitrarily large compact neighborhoods of the respective origins, we  endow $\cF_{p}$ with   the subset topology~$\tau_{\cF_p}$ induced by the weak~$\cC^2$ topology on $\Pb:=\Xb\x\Ub$ (see Example \ref{ex.weak_topology}). 
We then let $\tau_\cF:=\tau_{\cF_w}\x\tau_{\cF_p}$. This allows us to consider ``independent variations'' of $s$ and $(\fp,\hp)$, and thus to better distinguish the assumptions on the exosystem from that on the controlled plant. Finally, we let $\cX =\cK(\R^\dx)$, and we endow it with the Hausdorff topology~$\tau_{\cX}$ (see Example \ref{ex.Hausdorff}). 

We look at the closed-loop system $\Sigma_{\rm cl}(F,X)$, given by interconnection between the extended plant \eqref{s.x} and the regulator~\eqref{s.xc_linear_im}, as the cascade of the exosystem
 	\begin{equation}\label{s.w_lin}
	\dot w=s(w),
\end{equation}
into the controlled plant 
\begin{equation}\label{s.old_z} 
\Sigma_{\rm pc} :\,\left\{ \begin{array}{lcl}
	\dotxp &=& \fp(w,\xp,h_c(\cst,\cim,\hp(w,\xp)))\\
	\dotcim &=& \Phi\cim+Gh_e(w,\xp)\\
	\dotcst &=& \fcst(\cst,\cim,\hp(w,\xp)).
\end{array}\right. 
\end{equation}
 Let $\nominal F=(\nominal s,\nominal\fp,\nominal \hp)\in\cF$  denote the nominal extended plant function.   Then, we make the following assumption.
 \begin{assumption}\label{ass.lin_LES}
 The following hold:
 \begin{enumerate}
 	\item\label{item.AssLES.w} 	There exists a $\tau_{\cF_w}$-neighborhood $\cN_w$ of $\nominal s$  and, for every $\epsilon>0$, a $\delta(\epsilon)>0$, such that every solution to \eqref{s.w_lin} with  $s\in\cN_w$ and
 	satisfying $|w(0)|\le \delta(\epsilon)$ also satisfies $|w(t)|\le \epsilon$ for all $t\ge 0$. Moreover,   the solutions to \eqref{s.w_lin} with $s\in\cN_w$  satisfy
 	\begin{equation*}
 		\lim_{t\to\infty}|w(t+T)-w(t)|=0
 	\end{equation*}
 	uniformly\footnote{
 		That is, if for every $\epsilon>0$ and every $W\in\cK(\R^\dw)$, there exists $r>0$ such that every solution $w$ to~\eqref{s.w_lin}   originating in $W$ satisfies $|w(t+T)-w(t)|\le \epsilon$ for all $t\ge r$.}
 	
 	\item\label{item.AssLES.c} The triple $(\dcim,\Phi,G)$ is chosen according to Items \ref{item.IM.dcim}, \ref{item.IM.Phi} and \ref{item.IM.G}, $X_c$ is compact, the functions $\fcst$ and $h_c$ are $\cC^2$, and the system  $\Sigma_{\rm pc}$ with $w=0$ and with $(\fp,\hp)=(\nominal\fp,\nominal\hp)$ nominal
 	is locally exponentially stable with a domain of attraction including $\nominal{X_p}\x X_c$.
 \end{enumerate}  
 \end{assumption}

Item \ref{item.AssLES.w} of Assumption~\ref{ass.lin_LES} consists of two parts. The first is a   marginal stability requirement on the origin of the exosystem state-space which is \emph{uniform} in the perturbations of the function $s$ inside $\cN_w$, in the sense that the scalar $\delta(\epsilon)$ is, for fixed $\epsilon$, the same for all $s\in\cN_w$. The second requires that the solutions of the exosystem with $s\in\cN_{w}$ are asymptotically $T$-periodic, uniformly over compact subsets of   initial conditions.
Item~\ref{item.AssLES.c}, instead, requires   the \emph{nominal} controlled plant \eqref{s.old_z} to be  locally exponentially stable when $w=0$.
 This, in turn, can be seen as a design requirement for the remaining parts $(\dcst,\fcst,h_c,X_c)$ of the regulator, which have to be designed to locally stabilize the plant when $w=0$. 
 Under this assumptions, the following result holds.
\begin{proposition}\label{prop.linear_LES}
Suppose that Assumption \ref{ass.lin_LES} holds. 
Then the regulator \eqref{s.xc_linear_im} is $(\cP_T,\tau_\cF\x\tau_\cX)$-robust at $(\nominal F,\{0\})$, with $\cP_{T}$ defined in \eqref{d.PT}.
\end{proposition}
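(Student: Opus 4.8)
The plan is to reduce $\cP_T$-robustness to the $\cP_{T,weak}$-robustness already granted by Theorem~\ref{thm.P_T_weak}: since $\cP_{T,weak}$ differs from $\cP_T$ only through the escape clause ``$\cim$ is not $T$-periodic'', it suffices to establish, on top of the hypotheses of that theorem, that along every steady-state trajectory the internal-model state $\cim$ \emph{is} $T$-periodic. Then $\cP_{T,weak}$ forces $e\in\cQ_d^{\de}$, which is exactly $\cP_T$. Note that $\tau_\cF\x\tau_\cX$ and $\tau_\cX\x\tau_\cF$ denote the same product topology, so the topology invoked in Theorem~\ref{thm.P_T_weak} matches the one claimed here.

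First I would verify that the regulator is robustly stabilizing at $(\nominal F,\{0\})$ with respect to $\tau_\cF\x\tau_\cX$, which is the standing hypothesis needed to apply Theorem~\ref{thm.P_T_weak}. Item~\ref{item.AssLES.c} of Assumption~\ref{ass.lin_LES} gives local exponential stability of the nominal controlled plant $\Sigma_{\rm pc}$ at $w=0$, with a domain of attraction containing $\nominal{X_p}\x X_c=\{0\}\x X_c$. Since the weak $\cC^2$ topology on $\cF_p$ is finer than the weak $\cC^1$ one, $\cC^2$-small perturbations of $(\fp,\hp)$ are in particular $\cC^1$-small, and the total-stability argument of Example~\ref{ex.total_stab} (i.e.\ \cite[Lemma~5]{Astolfi2017}), combined with local input-to-state stability of the exponentially stable equilibrium with respect to $w$, shows that $\Sigma_{\rm pc}$ remains uniformly ultimately bounded for all sufficiently small $\cC^1$ perturbations of $(\fp,\hp)$ and all sufficiently small bounded inputs $w$. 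The first part of Item~\ref{item.AssLES.w} supplies exactly such a small $w$: taking the Hausdorff topology on $\cX$, every $X$ in a small neighborhood of $\{0\}$ satisfies $X\subset\delta\ball$, so $w(0)$ can be made as small as desired and, by uniform marginal stability, $w(t)$ stays small for all $t\ge 0$ and all $s\in\cN_w$. Viewing $\Sigma_{\rm cl}(F,X)$ as the cascade of the exosystem~\eqref{s.w_lin} into $\Sigma_{\rm pc}$, a standard cascade argument then yields uniform ultimate boundedness from $X\x X_c$ on a whole $\tau_\cF\x\tau_\cX$-neighborhood $\cN$ of $(\nominal F,\{0\})$, establishing robust stabilization. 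By Theorem~\ref{thm.P_T_weak}, the regulator is therefore $(\cP_{T,weak},\tau_\cF\x\tau_\cX)$-robust at $(\nominal F,\{0\})$.

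It then remains to show, for every $(F,X)\in\cN$, that $\cim$ is $T$-periodic along each steady-state trajectory of $\Sigma_{\rm cl}(F,X)$. I would argue in two steps. First, the $w$-component of any steady-state trajectory is exactly $T$-periodic: the second part of Item~\ref{item.AssLES.w} gives $|w(t+T)-w(t)|\to 0$ uniformly over compact sets of initial data, and since a steady-state trajectory lives on the limit set $\Omega_{\Sigma_{\rm cl}(F,X)}(X\x X_c)$, obtained as an intersection of closures of reachable tails and hence consisting of limits of time-translated solutions, this asymptotic $T$-periodicity becomes exact, i.e.\ $w(t+T)=w(t)$ for all $t$. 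Second, by the previous paragraph $\Sigma_{\rm pc}$ is robustly exponentially stable and driven by this small, $T$-periodic signal $w$; an exponentially stable system forced by an asymptotically $T$-periodic input admits a unique $T$-periodic steady-state response to which all nearby trajectories converge, so restricting once more to the limit set, the full state $(\xp,\cim,\cst)$, and in particular $\cim$, is exactly $T$-periodic. With $\cim$ thus $T$-periodic on the whole steady-state locus, $\cP_{T,weak}$ forces $e\in\cQ_d^{\de}$ for every steady-state trajectory, which is precisely $\cP_T$.

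The main obstacle is the second step of the last paragraph: passing from local exponential stability of the \emph{time-invariant} nominal plant at $w=0$ to the existence and uniqueness of a $T$-periodic steady-state response of the \emph{time-varying} perturbed plant driven by the $T$-periodic $w(\cdot)$. This calls for a contraction-type argument, e.g.\ showing the period-$T$ flow map is a uniform contraction near the perturbed steady state and applying a fixed-point theorem, and the delicate point is keeping the exponential estimates uniform as $(F,X)$ ranges over the entire neighborhood $\cN$, even though only the nominal plant is assumed exponentially stable; here the uniformity built into the total-stability and ISS robustness of the exponential equilibrium is what makes the conclusion go through.
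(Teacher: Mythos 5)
Your proposal is correct and follows the same overall strategy as the paper's proof: establish robust stabilization via a total-stability argument, then reduce $\cP_T$ to the $\cP_{T,weak}$ granted by Theorem~\ref{thm.P_T_weak} by showing that every steady-state trajectory is $T$-periodic. The one place where you diverge is in how exact $T$-periodicity of the non-exosystem components is obtained. The paper's Lemma~\ref{lem.tech_lemma} (proved by the arguments of Khalil's Theorem~10.3) asserts uniform ultimate boundedness \emph{and} uniform asymptotic $T$-periodicity of the \emph{entire} closed-loop state $\xb=(w,\xp,x_c)$; the conversion from asymptotic to exact periodicity is then done in a single step on the limit set, by writing a steady-state point as a limit $\xb_n(t_n)\to\xb(t)$, using continuity of the flow to get $\xb_n(t_n+T)\to\xb(t+T)$, and squeezing $|\xb(t+T)-\xb(t)|$ below any $\ell>0$. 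You instead apply that translate-limit argument only to the $w$-component and then handle $(\xp,\cim,\cst)$ via a Poincar\'e-map contraction/fixed-point argument for the exponentially stable subsystem driven by the now exactly $T$-periodic $w$. Both routes are sound and rest on the same underlying incremental estimate (comparing a solution with its own time-$T$ translate); the paper's packaging avoids the fixed-point machinery and the need to separately establish existence and uniqueness of the periodic response, while yours has the merit of making explicit the uniformity issue over the perturbation neighborhood $\cN$ that the paper leaves implicit in its citation of Khalil. One small point worth tightening in your first step: the passage from $|w(t+T)-w(t)|\to 0$ to exact periodicity on the limit set also needs continuity of the flow (to conclude $w_n(t_n+T)\to w(t+T)$), which holds here since $F\in\cC^1$ guarantees uniqueness and continuous dependence; the paper states this explicitly.
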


The claim of Proposition~\ref{prop.linear_LES}, whose proof is reported in Appendix \ref{apd.proof.prop.LES}, is \emph{local} in the initial condition, in that the nominal initialization set for the extended plant is $\nominal X=\{0\}$. In fact, being $\tau_\cX$ the Hausdorff topology, any element of any $\tau_\cX$-neighborhood of $\nominal X$  contains an open ball around the origin (see Example \ref{ex.Hausdorff}). In this sense, the result of Proposition~\ref{prop.linear_LES} generalizes the design philosophy of \cite{Astolfi2015}, in which local asymptotic stability of the controlled plant is achieved by forwarding techniques.

We  consider now a non-local extension of the result of  Proposition~\ref{prop.linear_LES}.
We let $\cF$   be an arbitrary subset of $\cC^1$, which we endow with the subset topology $\tau_\cF$ induced by the weak $\cC^1$ topology on the compact set $\Pb:=\Xb\x\Ub$, with $\Xb\subset\R^\dw\x\R^\dxp$ and $\Ub\subset\R^\du$ arbitrarily large compact neighborhoods of the origin. Then,    we let $\cX=\cK(\R^\dx)$, and we endow it with an arbitrary topology $\tau_\cX$. We let $\tau:=\tau_\cF\x\tau_\cX$. 
With  $\nominal F\in\cF$   the nominal extended plant's  function and $\nominal X\in\cX$   the nominal set of initial conditions, we assume the following.
\begin{assumption}\label{ass.lin2}
There exists $\tau$-neighborhood $\cN$ of~$(\nominal F,\nominal X)$ such that, for each $(F,X)\in\cN$, the following hold:
\begin{enumerate}
	\item\label{ass.lin2.item1} The closed-loop system \eqref{s.w_lin}-\eqref{s.old_z} is uniformly ultimately bounded from $X\x X_c$.
	
	\item\label{ass.lin2.item2} The solutions $w$ to \eqref{s.w_lin} originating in $W:=\{w\in\R^\dw\st (w,\xp)\in X\}$ satisfy
	\begin{equation*}
		\lim_{t\to\infty}|w(t+T)-w(t)|=0
	\end{equation*}
	uniformly.
	
	\item\label{ass.lin2.item3} The controlled plant \eqref{s.old_z} is \emph{incrementally input-to-state stable} in the sense of \cite[Definition 4.1]{Angeli2002} on $X\x X_c$, and  with respect to the input $w$. Namely, there exist $\beta\in\KL$, and $\rho\in\K$ such that, for every two solutions $(w,\xp,x_c)$ and $(w',\xp',x_c')$ of the closed-loop system~\eqref{s.w_lin}-\eqref{s.old_z} originating in $X\x X_c$ the following holds
	\begin{equation*}
		\begin{aligned}
			|(\xp(t),&x_c(t)) - (\xp'(t),x_c'(t))|\\&\le \beta\big(|(\xp(0),x_c(0)) - (\xp'(0),x_c'(0))|,t\big)\\
			&\qquad + \rho\Big( \textstyle\sup_{s\in[0,t)}|w(s)-w'(s)|\Big)
		\end{aligned}
	\end{equation*}
for all $t\ge 0$.
\end{enumerate}  
\end{assumption}

Assumption~\ref{ass.lin2} extends to a non-local setting the basic properties implied locally by the previous Assumption~\ref{ass.lin_LES}. In addition to uniform ultimate boundedness of the closed-loop system, Assumption~\ref{ass.lin2} requires the perturbed controlled plant  to satisfy a non-local \emph{incremental input-to-state stability}  property with respect to the exogenous signal $w$~\cite{Angeli2002}. We remark that the same property can be also characterized in terms of \emph{convergent systems} \cite{ruffer_convergent_2013}, and we refer the reader to~\cite{Angeli2002,Pavlov2006,ruffer_convergent_2013,giaccagli_incremental_2020} and the references therein for control design techniques ensuring that such property holds.

 Under Assumption~\ref{ass.lin2}, the following result holds.
\begin{proposition}\label{prop.T_robustness_nonlocal}
	Suppose that Assumption~\ref{ass.lin2} holds.  Then the regulator~\eqref{s.xc_linear_im} is $(\cP_{T},\tau)$-robust at $(\nominal F,\nominal X)$, with $\cP_{T}$ defined as in \eqref{d.PT}.
\end{proposition}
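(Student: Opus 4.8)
The plan is to obtain $\cP_T$-robustness by combining the weak periodic robustness already available from Theorem~\ref{thm.P_T_weak} with a separate argument showing that, under Assumption~\ref{ass.lin2}, the internal-model state $\cim$ is genuinely $T$-periodic along every steady-state trajectory. First I would observe that Item~\ref{ass.lin2.item1} of Assumption~\ref{ass.lin2} asserts uniform ultimate boundedness of $\Sigma_{\rm cl}(F,X)$ from $X\x X_c$ for every $(F,X)$ in the neighborhood $\cN$; this is exactly robust stabilization at $(\nominal F,\nominal X)$ in the sense of Definition~\ref{def.S-robust}. Hence the hypothesis of Theorem~\ref{thm.P_T_weak} is met and the regulator is $(\cP_{T,weak},\tau)$-robust, so that for every $(F,X)\in\cN$ and every $\xi\in\cO_{\Sigma_{\rm cl}(F,X)}(X\x X_c)$ we have: either $\cim$ is not $T$-periodic, or $e\in\cQ_d^{\de}$. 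Since $e\in\cQ_d^{\de}$ is precisely the condition defining $\cP_T$, it suffices to exclude the first alternative, i.e.\ to prove that $\cim$ is $T$-periodic on the steady state.

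Next I would establish $T$-periodicity of the exosystem component $w$ along steady-state trajectories. As $\dot w=s(w)$ is autonomous, the $w$-projection of any closed-loop steady-state trajectory is a complete trajectory lying in the $\omega$-limit set of \eqref{s.w_lin}. Writing such a point as $\bar w=\lim_n w(t_n)$ with $t_n\to\infty$ for a solution originating in $W:=\{w\st (w,\xp)\in X\}$, and using continuous dependence on initial conditions, the solution $\psi$ through $\bar w$ satisfies $\psi(\cdot)=\lim_n w(t_n+\cdot)$ and $\psi(\cdot+T)=\lim_n w(t_n+\cdot+T)$. Item~\ref{ass.lin2.item2}, which forces $|w(t+T)-w(t)|\to 0$ uniformly, then yields $\psi(\cdot+T)=\psi(\cdot)$, so the $w$-component of each steady-state trajectory is exactly $T$-periodic.

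The core step is to propagate this periodicity to the controlled-plant state $\zeta:=(\xp,\cim,\cst)$ using the incremental ISS property of Item~\ref{ass.lin2.item3}. Fix a steady-state trajectory; its $\zeta$-component is a bounded, complete solution of \eqref{s.old_z} driven by the $T$-periodic input $w$, with $|\zeta(t)|\le M$ for some $M>0$ (finite by compactness of the limit set). For any $t_0\in\R$ and any $\tau\ge 0$, the shifted signals $\zeta(\cdot+t_0-\tau)$ and $\zeta(\cdot+t_0-\tau+T)$ are both solutions driven by $w(\cdot+t_0-\tau)$, the same input by $T$-periodicity of $w$, so the $\rho$-term in the incremental bound vanishes. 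Evaluating that bound at time $\tau$ gives
\begin{equation*}
|\zeta(t_0)-\zeta(t_0+T)|\le \beta\big(|\zeta(t_0-\tau)-\zeta(t_0-\tau+T)|,\tau\big)\le\beta(2M,\tau).
\end{equation*}
Letting $\tau\to\infty$ and using that $\beta\in\KL$ is strictly decreasing to zero in its second argument yields $\zeta(t_0)=\zeta(t_0+T)$ for all $t_0$. In particular $\cim$ is $T$-periodic, which by the first paragraph forces $e\in\cQ_d^{\de}$ and therefore $\cP_T$, uniformly over $(F,X)\in\cN$, establishing $(\cP_T,\tau)$-robustness.

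I expect the main obstacle to be the technical justification that the incremental ISS estimate of Item~\ref{ass.lin2.item3}, stated for solutions originating in $X\x X_c$, may legitimately be invoked for the backward-shifted steady-state solutions living on the limit set $\Omega_{\Sigma_{\rm cl}(F,X)}(X\x X_c)$, which need not be contained in $X\x X_c$. Handling this requires (i) completeness and backward-extendability of steady-state trajectories, which follows from invariance of the compact nonempty limit set granted by Proposition~\ref{prop.Omega}, and (ii) extending the incremental estimate from $X\x X_c$ to $\Omega$ by approximating each limit-set solution with genuine solutions originating in $X\x X_c$ — legitimate because $\Omega$ is uniformly attractive, solutions depend continuously on initial data, and $\beta,\rho$ are continuous, so the inequality passes to the limit. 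The uniform bound $M$ on $\zeta$ over $\Omega$ must likewise be taken independent of the chosen trajectory, which again follows from compactness of $\Omega$.
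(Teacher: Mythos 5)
Your proposal is correct in its overall architecture and reaches the same reduction as the paper: robust stabilization from Item~\ref{ass.lin2.item1}, invocation of Theorem~\ref{thm.P_T_weak}, and then the whole burden placed on showing that the steady-state trajectories are $T$-periodic. Where you genuinely diverge is in how that periodicity is obtained. The paper first proves a statement about \emph{genuine} solutions issued from $X\x X_c$ — namely that $\lim_{t\to\infty}|\xb(t)-\xb(t+T)|=0$ holds uniformly, obtained by feeding the uniform asymptotic $T$-periodicity of $w$ (Item~\ref{ass.lin2.item2}) into the asymptotic-gain form of the incremental ISS estimate, $\limsup_{t\to\infty}|(\xp(t),x_c(t))-(\xp'(t),x_c'(t))|\le\rho(\limsup_{t\to\infty}|w(t)-w'(t)|)$ — and only then transfers \emph{exact} periodicity to the limit set through the sequential characterization $\xb(t)=\lim_n\xb_n(t_n)$ together with continuity of the flow, exactly as in Appendix~\ref{apd.proof.prop.LES}. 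You instead pass to the limit set first (establishing exact $T$-periodicity of the $w$-component there) and then run the classical convergent-systems ``pullback'' argument $|\zeta(t_0)-\zeta(t_0+T)|\le\beta(2M,\tau)\to 0$ directly on steady-state trajectories. Your route is cleaner and more self-contained where it applies, but it buys this at the cost of two extra technical obligations that the paper's route largely sidesteps: backward completeness of limit-set trajectories (which Proposition~\ref{prop.Omega} does not literally assert — it gives forward invariance only under the extra hypothesis $\Omega\subset\Xi$, so you must argue backward extendability from the construction of $\Omega$ as locally uniform limits of time-shifted solutions), and the validity of the incremental ISS estimate of Item~\ref{ass.lin2.item3} for trajectories that do not originate in $X\x X_c$. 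You identify both issues and your proposed fixes (compactness of $\Omega$, approximation by genuine solutions, continuity of $\beta$ and $\rho$) are the right ones, so the argument closes; the paper's ordering simply keeps all estimates on solutions from $X\x X_c$ and defers the limit-set passage to a single final step where only continuity of the flow and uniformity of the convergence are needed.
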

\begin{skproof} 
The fact that the regulator is robustly stabilizing at $(\nominal F,\nominal X)$ follows by Item~\ref{ass.lin2.item1} of Assumption~\ref{ass.lin2}. 
Therefore, in view of  Theorem~\ref{thm.P_T_weak}, it suffices to show that the steady-state trajectories of the closed-loop system are $T$-periodic. This, in turn, follows by means of the same arguments used in the proof of Proposition~\ref{prop.linear_LES} (see Appendix~\ref{apd.proof.prop.LES}), due to uniformity of the limit $\lim_{t\to\infty}|w(t+T)-w(t)|=0$ asked by Item~\ref{ass.lin2.item2} of Assumption~\ref{ass.lin2}, and since Item~\ref{ass.lin2.item3} implies that for every two solutions $(w,\xp,x_c)$ and $(w',\xp',x_c')$ of the closed-loop system~\eqref{s.w_lin}-\eqref{s.old_z} originating in $X\x X_c$, $\limsup_{t\to\infty}|(\xp(t),x_c(t)) - (\xp'(t),x_c'(t))|\le \rho( \limsup_{t\to\infty}|w(s)-w'(s)|)$ holds uniformly \cite[Section~IV]{Angeli2002}.
\end{skproof}

\subsection{Robustness of Arbitrary Harmonics Rejection}\label{sec.lin.qT}
The results of Sections~\ref{sec.lin.T} and~\ref{sec.lin.sT} apply to the case in which the frequencies included in the internal model unit of \eqref{s.xc_linear_im} are multiple of a fundamental frequency $1/T$. These results may be extended to the case in which the frequencies to reject are arbitrary, in the context of generalized Fourier coefficients. In particular, in this section we provide an extension (Theorem~\ref{thm.P_nu_weak} below) of   Theorem~\ref{thm.P_T_weak}.

With $d\in\N$, let $\{\nu_k\}_{k=1}^d$ be the set of frequencies to be rejected from the steady-state regulation error $e$, and   consider a regulator of the form~\eqref{s.xc_linear_im}  in which the internal model triple $(\dcim,\Phi,G)$ is chosen as follows:
\begin{enumerate}[label=IM-\Alph*'),ref=IM-\Alph*',leftmargin=1.3cm]
	\item\label{item.IM2.dcim} $\dcim:=(2d+1)\de$.
	\item\label{item.IM2.Phi} The spectrum of $\Phi$ is
	\begin{equation*}
	\sigma(\Phi) = \{0\}\cup \left( \bigcup_{k=1}^d \left\{i2\pi\nu_k,\,-i2\pi\nu_k\right\} \right) 
	\end{equation*} 
	in which $i$ denotes the imaginary unit, and in which each eigenvalue has algebraic and geometric multiplicity  $\de$.
	\item\label{item.IM2.G} $(\Phi,G)$ is controllable.
\end{enumerate} 

Let, for convenience, $\nu_0:=0$  and, for every $m\in\N$ and every   function $\alpha:\R\to\R^m$, define the (generalized) Fourier coefficients as
\begin{equation*}
c_k'(\alpha) := \lim_{T\to\infty} \dfrac{1}{T}\int_0^T \alpha(t) e^{-i2\pi\nu_k t}dt ,
\end{equation*}
whenever   they exist. 
\begin{remark}\label{rmk.ap}
	A relevant case in which $c_k'(\alpha)$ exists for all $k$ and $\nu_k$ is when $\alpha$ is \emph{almost periodic} (see, e.g., \cite[Section~I.3]{CorduneanuAlmostP}).
\end{remark}

Define the sets
\begin{align*}
\cL_d^m{}'  &:= \Big\{ \alpha:\R\to\R^m\st c_k'(\alpha) \text{ exists } \forall k=0,\dots,d\Big\}\\
\cQ_d^m{}'&:=\Big\{  \alpha \in\cL_d^m{}' \st c_k'(\alpha)=0\  \forall k=0,\dots,d \Big\}.
\end{align*}
Proceeding as before, we let
\begin{equation}\label{d.Pnu_weak}
\P_{\nu,weak} : = \propdef{\eta\notin\cL_d^{\dcim}{}' \text{ or } e\in\cQ_d^\de{}'}.
\end{equation}

As  in Sections \ref{sec.lin.T} and \ref{sec.lin.sT}, we let $\cF\subset\cC^1$, $\cX=\cK(\R^\dx)$,  $\tau_\cF$ be the subset weak $\cC^1$ topology defined in Example~\ref{ex.weak_topology} on an arbitrary compact subset $\Pb$ of $\R^{\dx}\x\R^{\du}$, and $\tau_\cX$ be an arbitrary topology on $\cX$. Then, with   $(\nominal F,\nominal X)\in\cF\x\cX$   the nominal value of the extended plant's data and  the following result holds.
\begin{theorem} \label{thm.P_nu_weak}
	Consider a regulator $\Sigma_c^{\rm LIM}$ of the form \eqref{s.xc_linear_im}, with $(\dcim,\Phi,G)$ chosen according to Items \ref{item.IM2.dcim}, \ref{item.IM2.Phi} and \ref{item.IM2.G}. Suppose that $\Sigma_c^{\rm LIM}$ is robustly stabilizing at $(\nominal F,\nominal X)$ with respect to $\tau:=\tau_\cF\x\tau_\cX$. Then the regulator  is $(\cP_{\nu,weak},\tau)$-robust at $(\nominal F,\nominal X)$.
\end{theorem}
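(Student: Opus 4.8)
The plan is to follow the line of the proof of Theorem~\ref{thm.P_T_weak}, replacing the $T$-periodic Fourier analysis by the Bohr-type averages $c_k'$. First I would use the robust-stabilization hypothesis: by Definition~\ref{def.S-robust} there is a $\tau$-neighborhood $\cN$ of $(\nominal F,\nominal X)$ on which $\Sigma_{\rm cl}(F,X)$ is uniformly ultimately bounded from $X\x X_c$, so Proposition~\ref{prop.Omega} guarantees a nonempty compact steady-state locus and that every steady-state trajectory $\xi=(w,\xp,\cim,\cst)$ is complete and bounded. Fixing an arbitrary $(F,X)\in\cN$ and one such $\xi$, it suffices to verify the disjunction in $\cP_{\nu,weak}$: if $\cim\notin\cL_d^{\dcim}{}'$ the property holds vacuously, so I may assume $\cim\in\cL_d^{\dcim}{}'$ and show that $c_k'(e)=0$ for $k=0,\dots,d$, where $e=h_e(w,\xp)$.

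The core computation exploits the internal-model equation $\dotcim=\Phi\cim+Ge$ on the limit set. Because $\cim$ is bounded and $\dotcim$ is continuous on a compact set, integrating by parts in the definition of $c_k'$ makes the boundary contribution vanish after division by $T$, giving $c_k'(\dotcim)=i2\pi\nu_k\,c_k'(\cim)$; this is the only point at which boundedness of the steady-state trajectory enters. Together with $c_k'(\Phi\cim)=\Phi\,c_k'(\cim)$ this yields $c_k'(Ge)=(i2\pi\nu_k I-\Phi)\,c_k'(\cim)$, and in particular this average exists. To pass from $Ge$ to $e$ I would note that $G$ is injective: in an eigenbasis of $\Phi$, the controllability requirement~\ref{item.IM2.G} forces each $\de\x\de$ block of $G$ attached to an eigenvalue to be invertible, so $G$ has full column rank and a left inverse $G\pinv$; hence $e=G\pinv(Ge)$ shows $c_k'(e)=G\pinv c_k'(Ge)$ exists and obeys $G\,c_k'(e)=(i2\pi\nu_k I-\Phi)\,c_k'(\cim)$.

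It remains to conclude $c_k'(e)=0$, which I would obtain by a Popov--Belevitch--Hautus argument. For each $k=0,\dots,d$ the number $i2\pi\nu_k$ belongs to $\sigma(\Phi)$ with geometric multiplicity $\de$ by Item~\ref{item.IM2.Phi}; collecting a basis of the corresponding left eigenvectors into $V_k$ and left-multiplying the last identity annihilates its right-hand side, so $V_k^* G\,c_k'(e)=0$. Controllability of $(\Phi,G)$ ensures that no nonzero left eigenvector $v$ of $\Phi$ satisfies $v^*G=0$, and since this eigenvalue has multiplicity $\de$ while $G$ has exactly $\de$ columns, $V_k^* G$ is a square matrix of full rank, hence invertible; therefore $c_k'(e)=0$ for all $k$, i.e. $e\in\cQ_d^{\de}{}'$. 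As $\xi$ and $(F,X)\in\cN$ were arbitrary, this proves $(\cP_{\nu,weak},\tau)$-robustness.

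The step I expect to demand the most care is the bookkeeping of the generalized coefficients $c_k'$, which are long-time averages and need not converge when the frequencies $\nu_k$ are incommensurate; this is exactly why the property is stated as a disjunction and why existence of $c_k'(e)$ must be recovered indirectly through the injectivity of $G$ rather than postulated. The integration-by-parts identity and the invertibility of $V_k^* G$ are otherwise the verbatim analogues of the $T$-periodic computations behind Theorem~\ref{thm.P_T_weak}, and I would treat them as routine.
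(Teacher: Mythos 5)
Your argument is correct, and its analytic core --- the identity $c_k'(\dotcim)=i2\pi\nu_k\,c_k'(\cim)$ obtained by integration by parts, with the boundary term killed by the $1/T$ normalization and the boundedness of the steady-state trajectory rather than by periodicity --- is precisely the observation on which the paper's proof rests. Where you diverge is in the linear algebra used to extract $c_k'(e)=0$. The paper, mirroring the proof of Theorem~\ref{thm.P_T_weak}, first passes to the companion form \eqref{d.lin.Phi_G}, in which the internal model collapses to the relation $(\cim^{1})^{(2d+1)}+a_{2d+1}(\cim^{1})^{(2d)}+\cdots+a_1\cim^{1}=e$; applying $c_k'$ then yields $c_k'(e)$ equal to the characteristic polynomial of $\Phi$ evaluated at $i2\pi\nu_k$ times $c_k'(\cim^{1})$, which vanishes since $i2\pi\nu_k\in\sigma(\Phi)$, and existence of $c_k'(e)$ comes for free because $e$ is written explicitly as a combination of derivatives of $\cim^{1}$. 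You instead stay in the original coordinates, derive $c_k'(G e)=(i2\pi\nu_k I-\Phi)\,c_k'(\cim)$, recover existence of $c_k'(e)$ through a left inverse $G\pinv$ (full column rank of $G$ does follow from Items~\ref{item.IM2.Phi}--\ref{item.IM2.G}: $\Phi$ is diagonalizable with each eigenvalue of multiplicity $\de$, and PBH forces each $\de\x\de$ block of $G$ in an eigenbasis to be invertible), and then conclude via invertibility of $V_k^*G$, which again holds because the left eigenspace has dimension $\de$ equal to the number of columns of $G$ and PBH rules out a nontrivial kernel of $v\mapsto v^*G$ on it. Both routes are sound; the companion-form route is shorter and makes existence of $c_k'(e)$ immediate, while yours is coordinate-free and makes the double role of the controllability hypothesis \ref{item.IM2.G} (injectivity of $G$ and nondegeneracy of $V_k^*G$) more transparent. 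Your handling of the disjunction in \eqref{d.Pnu_weak}, dismissing the case $\cim\notin\cL_d^{\dcim}{}'$ as vacuous, matches the paper's.
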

\begin{skproof} 
	The proof follows from the same argument of the proof of Theorem~\ref{thm.P_T_weak} (see Appendix~\ref{apd.proof.Thm_PT_weak}) once noted that, for each $n=1,\dots,2d+1$ and each $k=0,\dots,d$, we have
	\begin{align*}
	c_k'&\left(\eta_{1}^{(n)}\right) = \lim_{T\to\infty}\dfrac{1}{T}  \int_0^T \eta_{1}^{(n)}(t)e^{-i2\pi\nu_k t}dt \\&= \lim_{T\to\infty} \dfrac{1}{T}\left[ \eta_{1}^{(n-1)}(t) e^{-i2\pi \nu_k t}\right]_0^T + i2\pi\nu_k  c_k'\left(\eta_{1}^{(n-1)}\right)\\
	&= i2\pi\nu_k  c_k'\left(\eta_{1}^{(n-1)}\right)
	\end{align*}
	in which the integrals exist as long as $\eta\in\cL_d^\dcim{}'$,   the   term  $\lim_{T\to\infty}  (1/T) [ \eta_{1}^{(n-1)}(t) e^{-i2\pi \nu_k t} ]_0^T$  vanishes since $\eta_{1}^{(n-1)}(t)e^{-i2\pi\nu_k t}$ is bounded, and   $i2\pi\nu_k\in\sigma(\Phi)$.
\end{skproof}

\section{Regulators with Nonlinear Internal Models}\label{sec.nonlinear}

In the previous sections we have shown that, for a class of nonlinear problems, a regulator embedding a linear internal model is able to guarantee  robust harmonic rejection from the steady-state regulation error. Nevertheless, results concerning asymptotic regulation have been only given      in the linear case when $s$ is not uncertain. In this section,  we consider  regulators of a general form \eqref{s.xc2}, in which also the   internal model is allowed to be nonlinear. The only constraints we consider are the finite-dimensionality of the state space and smoothness of the vector fields $f_c$ and $h_c$. This, indeed, guarantees the existence of the limit set as detailed in Proposition~\ref{prop.Omega}. 

In the context of minimum-phase single-input-single-output normal forms with unitary relative degree\footnote{We consider this setting because it represents the case in which the most general results exist.}, we  show that, while it is true that a nonlinear regulator always exists ensuring nominal asymptotic regulation, robust asymptotic regulation is instead \emph{impossible} in general, at least in the relevant case of~$\cC^0$ perturbations. In particular, we show that there exist (very simple) systems for which there can not exist a smooth finite-dimensional regulator   ensuring robust asymptotic regulation. For simplicity, and since  it represents the most general available existence result, we shall restrict the discussion to the design of \cite{Marconi2007} (recalled hereafter)  in the context of   single-input-single-output minimum-phase normal forms. We remark, however,  that the same conclusions apply to the regulator in~\cite{Byrnes2004} and \cite{Chen2005}, as well as to all their numerous extensions.

We consider systems of the form \eqref{s.wxy} with   $\dy=\de=1$, $y=e$, with the plant's state $\xp$ which admits the decomposition $\xp:=(\zeta,e)$, $\zeta\in\R^{\dxp-1}$,   with $(\fp,\hp)$ such that, in certain coordinates, the plant's  equations read  as follows
\begin{equation}\label{1n:e:r_eq:nl_ze}
\begin{array}{lcl}
\dot \zeta&=& \vhi(w,\zeta,e)\\
\dot e&=& q(w,\zeta,e) + b(w,\zeta,e)u
\end{array}
\end{equation}
and, finally, with $s$, $\vhi$, $q$ and $b$ that satisfy the following assumption.
\begin{assumption}\label{ass.sicon}
The functions $(s,\vhi,q,b)$ are\footnote{  This assumption can be weakened (see \cite{Marconi2007}), but it is assumed for simplicity.} $\cC^\infty$  and the following properties hold:
\begin{enumerate}
	\item There exists $\und b>0$ such that $b(w,\zeta,e)\ge \und b$ for all $(w,\zeta,e)\in\R^\dw\x\R^{\dxp}$. 
	\item There exists a compact set $\cA\subset\R^\dw\x\R^{\dxp-1}$ which is locally asymptotically stable for the system
	\begin{align*}
	\dot w &= s(w), & \dot\zeta &=\vhi(w,\zeta,0)
	\end{align*}
	with an open domain of attraction $\cD\supset \cA$.
\end{enumerate} 
\end{assumption}  

We let $\cF$ be the set of smooth functions $F=(s,\fp,\hp)$ with the above properties and satisfying Assumption~\ref{ass.sicon}. Moreover, we let $\nominal F\in\cF$, and $\nominal X$ be any compact set such that, for some arbitrary bounded set $E\subset\R$, $\nominal X\subset\cD\x E$. Then, with $\cP_0$ the asymptotic regulation property defined in Example~\ref{ex.regulation_properties},  the result of \cite{Marconi2007} can be stated, in our setting, as follows.
\begin{theorem}\label{thm.sicon}
There always exists a finite-dimensional regulator of the form \eqref{s.xc2}  with $(f_c,h_c)\in\cC^0$ which achieves the steady-state property  $\cP_0$ nominally at $(\nominal F,\nominal X)$.
\end{theorem}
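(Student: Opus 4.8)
\emph{The plan} is to reconstruct the regulator of \cite{Marconi2007} and verify that it meets Definition \ref{def.P-nominal} at $(\nominal F,\nominal X)$. First I would identify the nominal steady state. Setting $e\equiv 0$ in \eqref{1n:e:r_eq:nl_ze} collapses the $(w,\zeta)$-dynamics onto $\dot w=\nominal s(w)$, $\dot\zeta=\nominal\vhi(w,\zeta,0)$, for which Assumption \ref{ass.sicon} provides the compact attractor $\cA$ with domain of attraction $\cD$. On $\cA$, the unique input rendering $\{e=0\}$ invariant is the continuous \emph{friend}
\begin{equation*}
u\sr(w,\zeta):=-\dfrac{\nominal q(w,\zeta,0)}{\nominal b(w,\zeta,0)},
\end{equation*}
which is well defined because $\nominal b\ge\und b>0$. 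The whole problem thus reduces to generating $u\sr$ asymptotically by a finite-dimensional device and superimposing a stabilizing action on $e$.

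For the internal model I would take a controllable Hurwitz pair $(M,N)$, $M\in\R^{\deta\x\deta}$, and the linear filter $\dot\eta=M\eta+Nu$. Along any trajectory of the zero-error dynamics on $\cA$ the filter state converges to
\begin{equation*}
\eta\sr(w,\zeta):=\int_{-\infty}^{0} e^{-M\sigma}N\,u\sr(\Psi_\sigma(w,\zeta))\,d\sigma,
\end{equation*}
where $\Psi_\sigma$ denotes the flow of the zero-error dynamics on $\cA$ and the integral converges since $M$ is Hurwitz and $u\sr$ is bounded on the compact set $\cA$. The key is to exhibit a continuous map $\gamma$ with $\gamma(\eta\sr(w,\zeta))=u\sr(w,\zeta)$ on $\cA$: as soon as $\eta\sr:\cA\to\R^\deta$ is \emph{injective}, one defines $\gamma$ on the compact image $\eta\sr(\cA)$ by $\gamma\circ\eta\sr=u\sr$ and extends it continuously to $\R^\deta$ by the Tietze extension theorem.

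\emph{The main obstacle} is exactly the injectivity of $\eta\sr$, i.e. the existence of a finite-dimensional immersion of the steady-state locus into the internal-model state space. Recognizing $\eta\sr$ as the Kazantzis--Kravaris--Luenberger map of the autonomous system on $\cA$ with output $u\sr$, I would fix the dimension $\deta\ge 2(\dw+\dxp-1)+1$ (twice the dimension of the ambient space of $\cA$, plus one) and invoke a genericity argument in the spirit of the Whitney/Takens embedding theorems: for almost every choice of distinct, sufficiently stable eigenvalues of $M$, the map $\eta\sr$ is an injective immersion on the compact set $\cA$. This density step is the technical heart of \cite{Marconi2007}; every other ingredient is routine.

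Finally I would close the loop with the certainty-equivalence control $u=\gamma(\eta)-\kappa e$. Because the plant is minimum-phase — Assumption \ref{ass.sicon} attracts the zero-error dynamics to $\cA$ — and $\nominal b\ge\und b>0$, a standard high-gain/small-gain analysis shows that for $\kappa>0$ large enough (depending on the compact set $\nominal X\x X_c$, whence the result is semiglobal) the compact set
\begin{equation*}
\cA_{\rm cl}:=\big\{(w,\zeta,\eta,e)\st (w,\zeta)\in\cA,\ \eta=\eta\sr(w,\zeta),\ e=0\big\}
\end{equation*}
is asymptotically stable and attracts $\nominal X\x X_c$, with uniform ultimate boundedness. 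By Proposition \ref{prop.Omega} the limit set $\Omega_{\Sigma_{\rm cl}(\nominal F,\nominal X)}(\nominal X\x X_c)$ is then nonempty, compact and, being the smallest closed uniformly attractive set, contained in $\cA_{\rm cl}$; hence $e\equiv 0$ on every steady-state trajectory and $(\Sigma_{\rm cl}(\nominal F,\nominal X),\nominal X\x X_c)$ enjoys $\cP_0$, which is precisely nominal achievement of $\cP_0$ in the sense of Definition \ref{def.P-nominal}.
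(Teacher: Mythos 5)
Your proposal is correct and follows essentially the same route as the paper, which offers no proof of its own for this statement but simply restates the result of \cite{Marconi2007}: the friend $u\sr=-\nominal q/\nominal b$ on the attractor $\cA$, the Hurwitz--controllable filter driven by $u$, the generic injectivity of the resulting Luenberger-type map $\eta\sr$ followed by a Tietze extension of $\gamma$, and a high-gain stabilizing term are exactly the ingredients of the cited construction. The only quibble is the dimension count: the genericity result underlying \cite{Marconi2007} requires $\deta\ge 2(\dw+\dxp-1)+2$ rather than $2(\dw+\dxp-1)+1$, which is immaterial for the existence claim since any larger $\deta$ works.
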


Theorem \ref{thm.sicon} gives an affirmative answer to the question whether or not we can always find a regulator embedding a nonlinear internal model that can effectively ensure asymptotic regulation in the nominal case. In general, possibly under  additional assumptions, the design of \cite{Marconi2007}, as well as those of \cite{Byrnes2004}, \cite{Chen2005} and the related extensions, can also guarantee   semi-global  or global   approximate regulation property robustly (i.e. the steady-state property $\cP_{\vep}$ defined in Example~\ref{ex.regulation_properties}), when the functions perturbations are meant in the $\cC^0$ topology. 

The question of whether or not robust asymptotic regulation might be ensured by a smooth finite-dimensional regulation has, however, a negative answer in general when perturbations are meant in the $\cC^0$ topology. In particular, we let $\cR$ be the class of problems obtained with:
\begin{enumerate}
	\item An extended plant of the form~\eqref{s.wxy},~\eqref{d.e}, for some   $\dw,\dxp,\dy,\du,\de\in\N$, satisfying    Assumption~\ref{ass.e_y}.
	\item  $F:=(s,\fp,\hp)\in\cF:=\cC^0$, where $\cF$ is given the weak~$\cC^0$ topology on    the compact set $\P:=\Xb\x\Ub$, where  $\Xb\subset\R^\dx$ ($\dx:=\dw+\dxp$) and $\Ub\subset\R^\du$ are arbitrary compact neighborhoods of the respective origin (see Example~\ref{ex.weak_topology}).
	\item $X\in\cX:=\cK(\R^\dx)$, where $\cX$ is given an arbitrary topology $\tau_\cX$.
\end{enumerate}
Then, the following result holds.
\begin{theorem}\label{thm.controesempio}
	There exist problems in $\cR$ and nominal data  $(\nominal F,\nominal X)\in\cF\x\cX$ for which   no   finite-dimensional regulator of the form \eqref{s.xc2}  with $(f_c,h_c)\in\cC^1$ exists that  achieves the steady-state property $\cP_0$ robustly at $(\nominal F,\nominal X)$  with respect to $\tau_\cF\x\tau_\cX$.
\end{theorem}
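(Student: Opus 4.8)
The plan is to exhibit a single concrete counterexample problem in $\cR$ and prove that no smooth finite-dimensional regulator can achieve $\cP_0$ robustly on it. I would base the construction on the motivating example \eqref{s.ex.wxy}: take the exosystem $\dot w = \varrho \left(\begin{smallmatrix}0&1\\-1&0\end{smallmatrix}\right) w$ with a \emph{known, fixed} frequency $\varrho$, and a scalar plant $\dot\xp = q(w) + \alpha(\xp) + u$, $y=e=\xp$, so that Assumption~\ref{ass.sicon} holds for some nominal $\nominal F$ (with $\nominal X$ a compact set of initial conditions generating a nontrivial periodic $w$). The regulation error is $e=\xp$, and asymptotic regulation $\cP_0$ requires $\xp(t)\to 0$ along steady-state trajectories. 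The key is that the only perturbation allowed is of $q$ in the weak $\cC^0$ topology, while $s$ (hence $\varrho$) stays at its nominal value. So I fix $\nominal\alpha$ linear and stabilizing, and let only $q$ vary.

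First I would argue that, for any fixed finite-dimensional $\cC^1$ regulator \eqref{s.xc2} that is robustly stabilizing at $(\nominal F,\nominal X)$, $\cP_0$ forces the steady-state control $u(t)$ to \emph{exactly} cancel the disturbance signal $q(w(t))$ along the periodic steady state, i.e. in steady state $u(t) = -q(w(t))$ whenever $e\equiv 0$. Since $w(t)$ traces a fixed circle of radius $r$ (set by $\nominal X$) at frequency $\varrho$, the map $t\mapsto q(w(t))$ is a $T$-periodic signal ($T=2\pi/\varrho$) that is an arbitrary continuous function on the circle $\{|w|=r\}$. The crux is a cardinality/dimension obstruction: a smooth finite-dimensional regulator of fixed dimension $\dc$ driven by $e\equiv 0$ is an autonomous system $\dot x_c = f_c(x_c,0)$, $u=h_c(x_c,0)$, whose output is the solution of a fixed finite-dimensional ODE. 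The set of periodic signals such a regulator can reproduce as steady-state outputs is severely constrained — it forms at most a finite-dimensional family parametrized by $x_c(0)\in X_c$ together with the finitely many vector-field coefficients — whereas the family $\{q(w(\cdot))\}$ as $q$ ranges over a $\cC^0$-neighborhood is infinite-dimensional.

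The main obstacle, and the heart of the argument, is turning this heuristic into a rigorous impossibility. The clean route is: suppose a $\cC^1$ regulator of dimension $\dc$ achieves $\cP_0$ robustly. Then for every admissible perturbation $q$ in a weak-$\cC^0$ neighborhood of $\nominal q$, the closed loop has a steady state on which $e\equiv 0$, $w$ is periodic, so $x_c$ satisfies the \emph{autonomous} equation $\dot x_c=f_c(x_c,0)$ and produces $u=h_c(x_c,0)=-q(w(t))$. Thus $t\mapsto q(w(t))$ must lie in the image of the output map of the fixed autonomous system $\dot x_c=f_c(x_c,0)$, $u=h_c(x_c,0)$ over all initial conditions. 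I would then show this image, intersected with $T$-periodic signals, cannot contain a $\cC^0$-open set of profiles. Concretely, pick two perturbations $q_1,q_2$ arbitrarily $\cC^0$-close but with $q_1(w(\cdot))\neq q_2(w(\cdot))$ at some phase, requiring two distinct periodic orbits $x_c^{(1)},x_c^{(2)}$ of the \emph{same} autonomous field $f_c(\cdot,0)$; by pushing finitely many such perturbations I exhaust the available degrees of freedom. The sharpest contradiction comes from choosing a perturbation making $q(w(\cdot))$ non-smooth or having arbitrarily rich harmonic content at multiples of $\varrho$ that a fixed $\cC^1$ field simply cannot generate on an invariant periodic orbit.

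I expect the delicate step to be the precise statement that a fixed smooth finite-dimensional autonomous system has an output-reachable set of periodic orbits that is ``small'' enough to miss a $\cC^0$-neighborhood of disturbance profiles; one must rule out that the stabilizing subsystem $\cst$ or a cleverly nonlinear $f_c$ secretly realizes infinitely many profiles. I would close this by a harmonic-content argument: show one can perturb $q$ within any weak-$\cC^0$ ball so that $q(w(\cdot))$ acquires a harmonic at frequency $k\varrho$ for arbitrarily large $k$ with nonzero amplitude (composing $q$ with high-degree trigonometric terms costs only small $\cC^0$-norm if amplitudes are small), while a fixed analytic/$\cC^1$ finite-dimensional internal model has only finitely many frequencies it can reproduce exactly on a periodic orbit — contradicting exact cancellation $u=-q(w(\cdot))$. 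This pins the impossibility on the mismatch between the infinite harmonic richness reachable by $\cC^0$ perturbations of $q$ and the finite spectral capacity of any fixed finite-dimensional smooth regulator.
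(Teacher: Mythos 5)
Your overall strategy matches the paper's: fix the exosystem at a known frequency, perturb only $q$ in the weak $\cC^0$ topology, observe that $e\equiv 0$ on the steady state forces the autonomous system $\dot x_c=f_c(x_c,0)$, $u=h_c(x_c,0)$ (together with the exosystem) to reproduce $-q(w(\cdot))$ exactly, and note that small-amplitude high-harmonic content can be injected into $q(w(\cdot))$ at small $\cC^0$ cost (this is precisely the paper's Lemma~\ref{lem.sin}). However, your concluding mechanism contains a genuine gap: you claim that ``a fixed analytic/$\cC^1$ finite-dimensional internal model has only finitely many frequencies it can reproduce exactly on a periodic orbit.'' This is false. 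A fixed smooth autonomous system of dimension $2$ (e.g.\ $\dot\chi=S\chi$ with $S$ a rotation and $\theta$ any nonlinear $\cC^1$ output map) produces, on a periodic orbit, the signal $\theta(\sin t,\cos t)$, which generically has \emph{infinitely many} nonzero Fourier coefficients. So no contradiction can be extracted from harmonic richness of a single output signal, and likewise ``pushing finitely many such perturbations to exhaust the degrees of freedom'' does not by itself yield a contradiction, since the family of periodic outputs parametrized by $\chi_0\in\R^m$ is a continuum, not a finite set.

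The correct obstruction — and the step your proposal is missing — is dimensional and measure-theoretic, not spectral: one must show that if the output family of an $m$-dimensional $\cC^1$ autonomous system covers an entire open ball of the $(2N+1)$-dimensional space $\cT_N$ of trigonometric polynomials of degree $N$, then $m\ge 2N+1$. The paper proves this (Proposition~\ref{prop.xi}) by considering the $\cC^1$ map $\ell$ sending the initial condition $\chi_0$ to the vector of the first $2N+1$ Fourier coefficients of the output; if $m<2N+1$ the Jacobian of $\ell$ has rank less than $2N+1$ everywhere, so by the Morse--Sard theorem $\ell(\Xi_\epsilon)$ has measure zero in $\R^{2N+1}$, contradicting surjectivity onto a ball. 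Choosing $N>(m-1)/2$ then closes the argument. Without this lemma (or an equivalent rank/invariance-of-domain argument), your proof does not rule out a cleverly nonlinear $f_c$ realizing all nearby profiles — which is exactly the possibility you flag as ``the delicate step'' but do not resolve.
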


The proof of Theorem~\ref{thm.controesempio} (presented at the end of the section) is by counterexample. In particular, we consider a problem in~$\cR$ obtained with $\dw=2$, $\dxp=1$, $\dy=\de=\du=1$, $\nominal X=\nominal W\x \nominal X_p$, with $\nominal W:=\{(0,1)\}$ and $\nominal X_p\in\cK(\R)$ any, and with the extended plant satisfying the following equations
\begin{equation}\label{s.ce.wx}
\begin{aligned} 
\dot w &=  \begin{pmatrix}
0 & 1\\
-1 & 0
\end{pmatrix}w& w(0)&=(0,1)\\
\dotxp &= w_1+u, & \xp(0)&\in\nominal X_p\\
e &= \xp
\end{aligned}
\end{equation}
 which is a special case of \eqref{s.ex.wxy}.
  With $N\in\N$, let    
	\begin{align*}
	\cT_N \! = \!\Big\{ \signal:\R\to\R\!\st \!&\signal(t)= \alpha  + \sum_{n=1}^N \big( \beta^n\sin(nt) \!+\!\gamma^n\cos(nt) \big),\\& \alpha ,\beta^n,\gamma^n\in\R   \Big\}.
\end{align*}
be the vector space of all the time signals obtained as the linear composition of $N$ harmonics and a bias.  Then, the   mapping $\phi_N:\cT_N\to\R^{2N+1}$ defined as
\begin{equation*}
\signal\mapsto  \phi_N(\signal) :=(\alpha,\beta^1,\gamma^1,\dots,\beta^N,\gamma^N).
\end{equation*}
 is an isomorphism.  With $\epsilon\ball(\R^{2N+1})$   the ball of radius $\epsilon>0$ in $\R^{2N+1}$, we define the \emph{ball of radius $\epsilon$ in $\cT_N$} as
\begin{equation*}
\epsilon\ball(\cT_N) :=  \phi_N\inv\left(\epsilon\ball(\R^{2N+1})\right)=\Big\{ \signal\in\cT_N\st |\phi_N(\signal)|<\epsilon \Big\} .
\end{equation*}
Then, the following holds.
\begin{proposition}\label{prop.xi}
	Suppose that, for some $N\in\N$ and $\epsilon>0$, there exist   $m\in\N$, $g\in\cC^1(\R^m,\R^m)$, $\theta:\cC^1(\R^m,\R)$, and a system of the form
	\begin{align}\label{s:xi} 
	\dot \chi &=  g(\chi) &
	v &= \theta(\chi) , & \chi \in\R^m
	\end{align}
	such that  for every $\signal\in\epsilon\ball(\cT_N)$ there exists   $\chi_0\in\R^m$ such that the (unique) solution $\chi(t)$ of \eqref{s:xi} originating at $\chi(0)=\chi_0$ satisfies
	$v(t)=\theta(\chi(t))=\signal(t)$ for all $t\in\Rplus$. Then $m\ge 2N+1$.
\end{proposition}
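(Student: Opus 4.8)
The plan is to reduce the statement to a dimension count: I will build a $\cC^1$ map from (an open subset of) the state space $\R^m$ into $\R^{2N+1}$ whose image is a nonempty \emph{open} set, and then invoke the elementary fact that a $\cC^1$ map cannot cover an open piece of $\R^{2N+1}$ if its domain has dimension $m<2N+1$. The construction must route through the flow of \eqref{s:xi}, and the key is to do so using only $\cC^1$ regularity.

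First I would fix a good \emph{sampling}. I choose $2N+1$ distinct times $0\le t_1<\dots<t_{2N+1}<2\pi$ and consider the linear evaluation map $E:\cT_N\to\R^{2N+1}$, $E(\signal):=\big(\signal(t_1),\dots,\signal(t_{2N+1})\big)$. Since a nontrivial trigonometric polynomial of degree $N$ has at most $2N$ zeros in one period, $E$ is injective; being a linear map between spaces of equal dimension $2N+1$, it is therefore a linear isomorphism. Consequently $E$ carries the open ball $\epsilon\ball(\cT_N)$ (open because $\phi_N$ is an isomorphism) onto a nonempty open subset of $\R^{2N+1}$.

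Next I would transport this evaluation to the state space through the flow of \eqref{s:xi}. As $g\in\cC^1$, solutions are unique and $\chi(t;\chi_0)$ depends $\cC^1$ on $\chi_0$ on an open domain; let $D\subset\R^m$ be the open set of initial conditions whose solution exists up to time $t_{2N+1}$, and define
\[
\cE:D\to\R^{2N+1},\qquad \cE(\chi_0):=\big(\theta(\chi(t_1;\chi_0)),\dots,\theta(\chi(t_{2N+1};\chi_0))\big).
\]
Being a composition of the $\cC^1$ flow with $\theta\in\cC^1$, the map $\cE$ is $\cC^1$. Let $M$ collect the initial conditions $\chi_0$ furnished by the hypothesis, one for each $\signal\in\epsilon\ball(\cT_N)$. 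Each such solution exists on all of $\Rplus$ (it must, since it reproduces $\signal$ there), so $M\subset D$, and for $\chi_0\in M$ generating $\signal$ one has $\theta(\chi(t_j;\chi_0))=\signal(t_j)$, i.e. $\cE(\chi_0)=E(\signal)$. Hence $\cE(M)=E(\epsilon\ball(\cT_N))$, a nonempty open subset of $\R^{2N+1}$ by the previous step.

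Finally I would close by contradiction. Suppose $m<2N+1$. By Sard's theorem (equivalently, by the fact that a locally Lipschitz, hence $\cC^1$, image of a subset of $\R^m$ is Lebesgue-null in $\R^{2N+1}$ when $m<2N+1$), the image $\cE(D)$ of the open set $D\subset\R^m$ has measure zero in $\R^{2N+1}$. This contradicts $\cE(D)\supset\cE(M)=E(\epsilon\ball(\cT_N))$, which is open and therefore of positive measure. Thus $m\ge 2N+1$. I expect the only genuine subtlety to be regularity bookkeeping rather than a deep obstacle: because $g$ and $\theta$ are merely $\cC^1$, one cannot form an observability map out of time-derivatives (Lie derivatives) of the output, as that would demand higher smoothness; the sampling map above sidesteps this, since its $\cC^1$ character needs only $\cC^1$ dependence of the flow on $\chi_0$. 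The single structural input is the Chebyshev (Haar) property of $\{1,\cos t,\sin t,\dots,\cos Nt,\sin Nt\}$ that makes evaluation at $2N+1$ distinct nodes invertible on $\cT_N$.
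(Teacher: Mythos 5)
Your proof is correct and follows essentially the same route as the paper's: both construct a $\cC^1$ map from the set of initial conditions onto a positive-measure subset of $\R^{2N+1}$ by composing the flow with $\theta$ and a linear reconstruction of the signal, and then conclude via the Morse--Sard theorem that such a surjection is impossible when $m<2N+1$. The only difference is in the choice of reconstruction: the paper reads off the Fourier coefficients by integrating $\theta(\vhi(t,\chi_0))$ against $1,\sin(nt),\cos(nt)$ over $[0,2\pi]$, whereas you sample the output at $2N+1$ nodes and invoke the Haar property of the trigonometric system; both give a linear isomorphism of $\cT_N$ with $\R^{2N+1}$ and need only $\cC^1$ regularity of $g$ and $\theta$.
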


The proof of Proposition \ref{prop.xi} is in Appendix~\ref{apd.proof.prop.xi}. Next, we have the following.
\begin{lemma}\label{lem.sin}
	Let $K\subset\R^2$ be a compact set including $0$. 
	For each $\epsilon>0$ and each $N\in\N$ there exists $\delta>0$ such that, for every $\signal\in\delta\ball(\cT_N)$ there exists $c_\signal\in\cC^0(\R^2,\R)$ satisfying $\sup_{k\in K}|c_\signal(k)|<\epsilon$ such that $c_\signal(\sin(t),\cos(t)) = \signal(t)$.
\end{lemma}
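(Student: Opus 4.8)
The plan is to exploit the classical fact that every pure harmonic $\sin(nt)$ and $\cos(nt)$ is a polynomial in the fundamental pair $(\sin t,\cos t)$, so that each $\signal\in\cT_N$ is literally the evaluation along the unit circle of a \emph{fixed} polynomial whose coefficients are exactly the components of $\phi_N(\signal)$. Smallness of the coefficients (encoded by $\signal\in\delta\ball(\cT_N)$) will then force smallness of that polynomial on the compact set $K$, and continuity of a polynomial is automatic.

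First I would recall the Chebyshev/de~Moivre identities $\cos(nt)=T_n(\cos t)$ and $\sin(nt)=\sin t\,U_{n-1}(\cos t)$, where $T_n$ and $U_{n-1}$ are the Chebyshev polynomials of the first and second kind. Writing a generic $\signal\in\cT_N$ as $\signal(t)=\alpha+\sum_{n=1}^N\big(\beta^n\sin(nt)+\gamma^n\cos(nt)\big)$, I would then define
\[
c_\signal(x,y):=\alpha+\sum_{n=1}^N\big(\beta^n\,x\,U_{n-1}(y)+\gamma^n\,T_n(y)\big),
\]
which is a polynomial and hence lies in $\cC^0(\R^2,\R)$. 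By the identities above, evaluating at $(x,y)=(\sin t,\cos t)$ gives $c_\signal(\sin t,\cos t)=\signal(t)$ for all $t\in\Rplus$, which is the required representation.

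It remains to arrange the bound $\sup_{k\in K}|c_\signal(k)|<\epsilon$. The basis functions $x$, $U_{n-1}(y)$ and $T_n(y)$ are fixed polynomials independent of $\signal$, so on the compact set $K$ their moduli are bounded by a constant depending only on $N$ and $K$; let $C_{N,K}:=\sup_{(x,y)\in K}\big(1+\sum_{n=1}^N(|x|\,|U_{n-1}(y)|+|T_n(y)|)\big)<\infty$. Since $\signal\in\delta\ball(\cT_N)$ means $|\phi_N(\signal)|<\delta$, each coefficient $\alpha,\beta^n,\gamma^n$ has modulus strictly below $\delta$, whence $\sup_{k\in K}|c_\signal(k)|\le\delta\,C_{N,K}$. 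Choosing $\delta:=\epsilon/(C_{N,K}+1)$—which depends only on $\epsilon$, $N$ and $K$, exactly as the statement demands—then yields $\sup_{k\in K}|c_\signal(k)|<\epsilon$. Note that $c_\signal$ may be large off $K$, but this is irrelevant, since only the supremum over $K$ is controlled, consistent with the weak $\cC^0$ topology used later in the counterexample.

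There is no genuine obstacle here; the only point requiring care is the \emph{uniformity} of the constant $C_{N,K}$ over all $\signal\in\delta\ball(\cT_N)$. This holds precisely because the Chebyshev basis is frozen once $N$ is fixed, so that the entire dependence on $\signal$ is confined to the coefficient vector $\phi_N(\signal)$. The linearity of the assignment $\signal\mapsto c_\signal$ is what converts the a~priori coefficient smallness into the uniform sup-bound on $K$, completing the argument.
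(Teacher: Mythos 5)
Your proof is correct and follows essentially the same route as the paper's: both exhibit a fixed polynomial basis reproducing the harmonics along the unit circle (the paper builds $r_n,q_n$ recursively via the angle-addition formulas, you invoke the equivalent Chebyshev identities), and then use linearity in the coefficient vector $\phi_N(\signal)$ together with compactness of $K$ to choose $\delta$ proportional to $\epsilon$. The only difference is cosmetic — which explicit polynomials represent $\sin(nt)$ and $\cos(nt)$ — and your uniformity observation matches the paper's estimate exactly.
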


The proof of Lemma \ref{lem.sin} is presented in Appendix~\ref{apd.proof.lem.sin}. We are now ready to prove Theorem~\ref{thm.controesempio}.

\begin{proof}[Proof of Theorem~\ref{thm.controesempio}]
Suppose that a regulator of the form \eqref{s.xc2}, with $(f_c,h_c)\in\cC^1$ exists that achieves  the steady-state property $\cP_0$ robustly at $(\nominal F,\nominal X)$ and with respect to $\tau_\cF\x\tau_\cX$ when applied to the extended plant \eqref{s.ce.wx}. Then, the regulator is robustly stabilizing at $(\nominal F,\nominal X)$   with respect to $\tau_\cF\x\tau_\cX$, and there exists a $\tau_\cF$ neighborhood $\cN_\cF$ and a $\tau_\cX$ neighborhood of $\nominal X$ such that, for every $(F,X)\in\cN_\cF\x\cN_\cX$ the steady-state trajectories $(w,\xp,x_c)\in\cO_{\Sigma_{\rm cl}(F,X)}(X\x X_c)$ of the closed-loop system $\Sigma_{\rm cl}(F,X)$ given by the interconnection of the regulator \eqref{s.xc2} with \eqref{s.ce.wx} satisfy $e(t)=\xp(t)=0$ for all $t\ge 0$. This, implies $\dot e(t)   = 0$ for almost all $t\ge 0$. 

Pick a set $\tilde F_\epsilon\subset\cF$ such that, for all $F\in\tilde F_\epsilon$,  $s=\nominal s$, $\hp=\nominal \hp$ and $\fp$ is such that  
\begin{equation*}
\dotxp = q(w) + w_1+u,
\end{equation*}
with $q\in\cC^0$ such that $\sup_{(w,\xp)\in\Xb}|q(w)|<\epsilon$. Then we can find $\epsilon\sr>0$ such that $\tilde F_{\epsilon\sr} \subset \cN_\cF$. 

Pick $X=\nominal X$, which lies in $\cN_\cX$ by definition. Then, for all $F\in\tilde F_\epsilon$, every solution in $\cO_{\Sigma_{\rm cl}(F,X)}(X\x X_c)$ satisfies $w(t) = (\sin(t),\cos(t))$ for all $t\ge 0$. Moreover, since the map $(w,\xp,x_c)\mapsto q(w)+w_1+h_c(x_c,\xp)$ is continuous, then $\dot e(t)   = 0$ for almost all $t\ge 0$ implies $q(w(t)) =-  w_1(t)-h_c(x_c(t),\xp(t))$ for all $t\ge 0$. Therefore, with $m=\dw+n_c$, $\chi:=(w,x_c)$, $g(\chi):=(\nominal s(w),f_c(x_c,0))$ and $\theta(\chi):=-w_1-h_c(x_c,0)$, the system
	\begin{align}\label{s:xi2} 
\dot \chi &=  g(\chi) &
v &= \theta(\chi)  
\end{align}
necessarily has the property that   there exists $\Xi_{\epsilon\sr}\subset \nominal W\x X_c$ such that, for each $F\in\tilde F_{\epsilon\sr}$ there exists $\chi_0\in\Xi_{\epsilon\sr}$ such that, the unique solution to \eqref{s:xi2} originating at $\chi_0$ satisfies  $\theta(\chi(t))=q(\sin(t),\cos(t))$ for all $t\ge 0$.  

Now, use Lemma \ref{lem.sin} to find, for $K=\{w\in\R^2\st (w,\xp)\in \Xb, \xp\in\R\}$,  $\epsilon=\epsilon\sr$ and $N>(m-1)/2$, a $\delta$ such that for every $\signal\in\delta\ball (\cT_N)$ there exists $c_\signal\in \cC^0$ satisfying $\sup_{w\in K} |c_\signal(w)|\le \epsilon\sr$ and $c_\signal(\sin(t),\cos(t))=\signal(t)$. 
Then, for all such $\signal\in\delta\ball (\cT_N)$, taking $q=c_\signal$ produces an  $F \in\tilde{F}_{\epsilon\sr}$ for which $q(\sin(t),\cos(t))=\signal(t)$.  Therefore, the system~\eqref{s:xi2} has the property that, for every $\signal\in\delta\ball(\cT_N)$, there exists $\chi_0\in\Xi_{\epsilon\sr}$, such that the unique solution to \eqref{s:xi2} originating at $\chi_0$ satisfies  $\signal(t)  =\theta(\chi(t))$ for all $t\ge 0$. By Proposition~\ref{prop.xi}, however, this implies $m:=\de+n_c> 2N+1 > m$ which is a contradiction.
\end{proof}
\begin{remark}
The proof of Theorem~\ref{thm.controesempio} relies on the construction of an ad hoc set of perturbations of $F$ which makes impossible for the regulator to generate all the possible corresponding steady state control actions. We remark the in all the elements of this perturbation set,  the  function of the exosystem $s$  is kept equal to the nominal value $\nominal s$.  Thus, we are conceptually in the same setting of Theorem~\ref{thm:linear_robustness}, in which the extended plant is linear and only the plant is perturbed, with the only difference  that in this setting arbitrarily small nonlinear perturbations are allowed.
\end{remark}

	\begin{table*}
	\begin{center}
		\def\arraystretch{1.5}
		\begin{tabular}{ | m{2.8cm}  | m{1.4cm} | m{7.6cm} |  m{4.35cm} |}  
			\hline
			\textbf{Internal Model} & \textbf{Property} & $\boldsymbol{(\cF,\tau_\cF)}$ \textbf{(sub-classes of)} &  \textbf{References} \\
			\hline 
			\multirow[t]{4}{=}{Linear}    & 
			$\cP_{T}$, $\cP_\nu$  &  \begin{cellitemize}
				\item[$\cF$:] $\cC^1$ functions
				\item[$\tau_\cF$:] weak $\cC^1$ topology
			\end{cellitemize}        &  Sec. \ref{sec.lin.T}, \ref{sec.lin.sT}, \ref{sec.lin.qT}\newline   Ref.  \cite{Astolfi2015} \\
			\cline{2-4} &  \multirow[t]{3}{=}{ $\cP_0$ } & \begin{cellitemize}
				\item[$\cF$:] $s$ linear and nominal, $(f,h)$ polynomial of fixed order
				\item[$\tau_\cF$:] parameter perturbations
			\end{cellitemize}    
			&  Sec. \ref{sec.lin.T}, \ref{sec.lin.sT}, \ref{sec.lin.qT}\newline Ref. \cite{Huang1994,Byrnes1997,jie_huang_remarks_2001}\\
			\cline{3-4} & & \begin{cellitemize}
				\item[$\cF$:] $s$ nominal, $(s,f,h)$ linear
				\item[$\tau_\cF$:] parameter perturbations
			\end{cellitemize} & 
			Sec. \ref{sec.lin.T}, \ref{sec.lin.sT}, \ref{sec.lin.qT}\newline
			Ref. \cite{Francis1976,Davison1976}\\
			\cline{3-4} & & \begin{cellitemize}
				\item[$\cF$:] $s=0$, $(f,h)$ continuous, $f$ 0-LES
				\item[$\tau_\cF$:] weak $\cC^1$ topology
			\end{cellitemize} & Sec. \ref{sec.lin.T}, \ref{sec.lin.sT}, \ref{sec.lin.qT}\newline Ref.  \cite{Astolfi2017}\\
			\hline
			Nonlinear   & $\cP_{\varepsilon}$\newline{\footnotesize ($\varepsilon$ arbitrary)} & \begin{cellitemize}
				\item[$\cF$:] continuous functions
				\item[$\tau_\cF$:] weak $\cC^1$ topology
			\end{cellitemize} &   Sec. \ref{sec.nonlinear}\newline Ref. \cite{Byrnes2004,Marconi2007,Marconi2008,Isidori2012,Bin2019}\\
			\hline
			Linear Adaptive   & $\cP_{0}$ & \begin{cellitemize}
				\item[$\cF$:] linear functions
				\item[$\tau_\cF$:] parameter perturbations
			\end{cellitemize} &  Ref. \cite{Serrani2001,marino_output_2003,marino_output_2007,Bin2019adaptlinear}   \\\hline
			Nonlinear Adaptive${}^{\color{red} \star}$  & $\cP_{\varepsilon}$ & \begin{cellitemize}
				\item[$\cF$:] continuous functions
				\item[$\tau_\cF$:] weak $\cC^1$ topology
			\end{cellitemize} &   Ref. \cite{forte_robust_2017,Bin_classtype_2019,bin_approximate_2020,bernard_adaptive_2020}\\
			\hline
		\end{tabular}
	\end{center} 
	\caption{\normalfont{Overview of the robustness properties achievable by internal model based regulators. Specifications on $(\cX,\tau_\cX)$ are omitted because  context-dependent. Properties $\cP_0$ and $\cP_{\varepsilon}$ are defined in Example~\ref{ex.regulation_properties}, Property $\cP_{T}$ in \eqref{d.PT}, and Property $\cP_\nu$ is the strong version of \eqref{d.Pnu_weak}. The weak $\cC^1$ topology is defined in Example~\ref{ex.weak_topology}, while the topology of parameters perturbations in Example~\ref{ex.perturbations.parametric}. ${}^{\color{red} \star}$formal proof not yet given.}}
	\label{table}
\end{table*}

 \section{Conclusions}

In this paper, we have proposed a unifying framework where robustness of arbitrary steady-state properties with respect to arbitrary perturbations of the dynamics and initialization set	can be analyzed. In this setting, we have reinterpreted existing results and provided new results showing that the robustness property of the Linear Regulator extends to relevant classes of nonlinear problems if a milder harmonic rejection property -- rather than asymptotic regulation -- is pursued.  The given results are prescriptive, in the sense that they do not propose a specific design for the regulator, but rather they give general sufficient conditions serving as design guidelines. \MB{Table~\ref{table} summarizes the main achievable robustness properties  for different classes of systems and regulators.}

For what concerns asymptotic regulation, we have shown by counterexample that it cannot be achieved robustly with a smooth finite-dimensional  regulator, at least in the relevant case of arbitrary $\cC^0$ perturbations. \MB{Hence,   the properties of the linear regulator do not extend to nonlinear system in general. This result  gives an important answer to a long-standing question, and  points   us towards three main directions:}
\begin{enumerate}
	\item \MB{We left out discontinuous and hybrid controllers, e.g. sliding mode regulators \cite{shtessel_sliding_2014}. These have been already proved in several other domains to reach results otherwise unachievable with smooth controllers. The question whether we can recover robustness by means of hybrid control is therefore still open.}
	
	\item  Our results motivate  and theoretically support  the recent emergence of approximate regulation designs, and call for approaches using  run-time adaptation to automatically improve the steady-state regulation bound, in this way obtaining both robustness and performance.
	
	\item Our analysis relies on the finite dimensionality of the regulator's state space. Hence, robustness of asymptotic regulation is not ruled out if an \emph{infinite-dimensional} regulator is used, as in nonlinear \emph{repetitive control} schemes (see \cite{ghosh2000nonlinear,califano_stability_2018,astolfi_francis-wonham_2019,astolfi2022repetitive} and the references therein).
\end{enumerate}

\MB{As anticipated in the introduction, a proper treatise of discontinuous and hybrid plants or controllers  requires us to leave the context of differential equations. A possible setting is the framework of \cite{Goebel2012book}, where similar properties for limit sets as those used here can be recovered. Then, the notions introduced in this paper can be directly applied.}
\MB{Handling infinite-dimensional systems is instead more critical. If the state-space is reflexive (e.g., it is a Hilbert space), then we can rely on a weaker notion of limit sets, maintaining the definition given in Section~\ref{sec.Prelimiraies}, but with closure that has to be understood in the weak topology. For non-reflexive spaces, the extension is instead more critical.}
 Moreover, it is also worth mentioning that most of what is said in the paper can be extended to the case in which $w$ belongs to a class, say $\cW$, of functions not necessarily coincident with the set of solutions to a differential equation, provided that $\cW$ has some sequential closeness properties necessary to define the limit sets. 

Finally, we also underline that the implementation of infinite-dimensional \MB{or discontinuous} asymptotic regulators may anyway require a some approximation ruining, in practice,  asymptotic regulation. Therefore,  our results ultimately suggest that, in a nonlinear uncertain setting, approximate -- rather than asymptotic -- regulation is the correct way to approach the problem if robustness is sought.

\appendix
\subsection{Proof of Theorem \ref{thm.P_T_weak}} \label{apd.proof.Thm_PT_weak}
	If $\Sigma_c^{\rm LIM}$ is robustly stabilizing at $(\nominal F,\nominal X)$ with respect to $\tau$ in the sense of Definition \ref{def.S-robust}, there exists a $\tau$-neighborhood $\cN$ of $(\nominal F,\nominal X)$ such that, for each $(F,X)\in\cN$, $\cO_{\Sigma_{\rm cl}(F,X)}(X\x X_c)$ is well-defined and nonempty.
	
	Since $(\Phi,G)$ is controllable, we can pick a basis of the state space for which $\Phi$ and $G$ have the following form 
	\begin{align}\label{d.lin.Phi_G}
	\Phi &=\begin{pmatrix}
	0 & I & 0 &   \cdots & 0\\
	0 & 0 & I &  \cdots & 0\\
	\vdots & &&\ddots &  \vdots\\
	0 & &&  &  I\\
	-a_1 I & -a_2 I & \dots && -a_{2d+1}I
	\end{pmatrix}& G&=\begin{pmatrix}
	0\\0\\\vdots\\ 0\\I
	\end{pmatrix}
	\end{align} 
	in which $a_1,\dots, a_{2d+1}$ are the coefficients of the characteristic polynomial of $\Phi$, i.e. satisfying
	\begin{equation}\label{pf.e.char_p_Phi}
	\lambda^{2d+1} + a_{2d+1}\lambda^{2d}+\dots+ a_2\lambda + a_1 =0,\ \ \forall\lambda\in\sigma(\Phi).
	\end{equation}
	
	Pick $(F,X)\in \cN$ and let $(w,x,\cim,\cst)$ be any solution in $\cO_{\Sigma_{\rm cl}(F,X)}(X\x X_c)$. If $\cim$ is not $T$-periodic then $\cP_{T,weak}$ trivially holds, so we assume $\cim$ periodic. 
	
	Partition $\cim$ as $\cim=\col(\cim^1,\dots,\cim^{2d+1})$ with $\cim^\ell\in\R^{\de}$ for $\ell=1,\dots,2d+1$. 
	In view of \eqref{d.lin.Phi_G} this yields 
	\begin{equation*}
	\dotcim^{\ell}=\cim^{\ell+1} \qquad\forall \ell=1,\dots,2d.
	\end{equation*}
	Hence,  from the equation of $\dotcim^{2d+1}$, we obtain 
	\begin{equation}\label{pf.e.regr_eta_dot}
	(\cim^{1})^{(2d+1)} + a_{2d+1}(\cim^{1})^{(2d)}+\cdots+ a_2 \dotcim^{1} + a_1\cim^1 = e.
	\end{equation}
	Integrating by parts, for $k\in\N$ and $n=1,\dots,2d+1$ we obtain
	\begin{equation*}
	\begin{aligned}
	c_k&\left((\cim^{1})^{(n)}\right)  = \int_0^T (\cim^{1})^{(n)}(t) e^{-i2\pi k t/T} dt  
	\\&=\left[ (\cim^{1})^{(n-1)}(t)e^{-i2\pi k t/ T}\right]_0^T + \dfrac{i 2\pi k}{T}c_k\left((\cim^{1})^{(n-1)}\right)\\
	&=\dfrac{i 2\pi k}{T}c_k\left((\cim^{1})^{(n-1)}\right),
	\end{aligned}
	\end{equation*}
	in which we have used the fact that $(\cim^{1})^{(n-1)}(t)e^{-i2\pi k t/ T}$ is $T$-periodic.
	By induction on $n$ we thus obtain
	\begin{equation*}
	c_k\left((\cim^{1})^{(n)}\right)= (\lambda_k)^n c_k\left(\cim^{1}\right),
	\end{equation*}
	with $\lambda_k := i2\pi k/T$. From \eqref{pf.e.regr_eta_dot} we thus obtain
	\begin{equation*}
	c_k(e) = \Big( (\lambda_k)^{2d+1}+ a_{2d+1}(\lambda_k)^{2d}+\cdots+ a_{2}\lambda_k + a_1 \Big)c_k\left(\cim^{1}\right).
	\end{equation*}
	Since for $k\le d$, $\lambda_k\in\sigma(\Phi)$, then \eqref{pf.e.char_p_Phi} implies $c_k(e) = 0$, and the claim follows.  
	\ignorespaces\hfill$\QED$

\subsection{Proof of Proposition \ref{prop.linear_LES}} \label{apd.proof.prop.LES}
The following lemma can be proved by means of the same arguments of \cite[Theorem 10.3]{khalil_nonlinear_2002} (see also \cite{Angeli2002}).
\begin{lemma}\label{lem.tech_lemma}
Suppose that Assumption~\ref{ass.lin_LES} holds. Then there exists a $\tau_{\cF}\x\tau_\cX$-neighborhood $\cN$ of $(\nominal{F},\{0\})$ such that the system~\eqref{s.w_lin}-\eqref{s.old_z} obtained with $(F,X)\in\cN$ is uniformly ultimately bounded from $X\x X_c$, and all its solutions $\xb:=(w,\xp,x_c)$ originating in $X\x X_c$ satisfy $\lim_{t\to\infty}|\xb(t)-\xb(t+T)|=0$ uniformly. 
Namely, for each $(F,X)\in\cN$ and   $\vep>0$, there exists $r>0$ such that and every solution  $\xb$ to \eqref{s.w_lin}-\eqref{s.old_z} originating in $X\x X_c$ satisfies  $|\xb(t)-\xb(t-T)|\le\vep$ for all $t\ge r$.
\end{lemma}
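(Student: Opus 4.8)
The plan is to exploit the cascade structure of \eqref{s.w_lin}--\eqref{s.old_z}: the exosystem $\dot w=s(w)$ is autonomous and, by Item~\ref{item.AssLES.w} of Assumption~\ref{ass.lin_LES}, produces small and asymptotically $T$-periodic signals, while the controlled plant reacts to $w$ as an exogenous input near its origin, which is nominally exponentially stable by Item~\ref{item.AssLES.c}. Write $z:=(\xp,\cim,\cst)$ for the controlled-plant state, so that $\xb=(w,z)$, and let $\dot z=f_{\rm pc}(w,z)$ denote its dynamics; here $f_{\rm pc}$ is built from the perturbed data $(\fp,\hp)$ and the fixed regulator maps $(\fcst,h_c,\Phi,G)$, so its deviation from nominal is governed by the weak-$\cC^2$ distance of $(\fp,\hp)$ to $(\nominal\fp,\nominal\hp)$ on $\Pb$. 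Since $z=0$ is exponentially stable for $f_{\rm pc}(0,\cdot)$ and $f_{\rm pc}$ is $\cC^1$ in $z$, the Jacobian $A:=\partial f_{\rm pc}(0,0)/\partial z$ is Hurwitz; solving $A\T P+PA=-I$ furnishes $V(z):=z\T P z$, a valid Lyapunov function for the nominal origin on a ball $B_r$, and the same $P$ will serve for the incremental estimate. I split the proof accordingly.

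For uniform ultimate boundedness, note first that, in the Hausdorff topology, every neighborhood of $\nominal X=\{0\}$ consists of compact sets contained in an arbitrarily small ball, so the initial plant states are small while $(\cim(0),\cst(0))\in X_c$; by Item~\ref{item.AssLES.c} the nominal flow with $w=0$ steers the compact set $\nominal{X_p}\x X_c$ into $B_{r/2}$ by some finite time $\bar t$. For $(F,X)$ in a small enough $\tau_\cF\x\tau_\cX$-neighborhood, Item~\ref{item.AssLES.w} makes $\sup_{t\ge 0}|w(t)|$ as small as desired, and the weak-$\cC^2$ closeness of $(\fp,\hp)$ keeps $f_{\rm pc}$ within some $d$ of nominal on $\Pb$; continuous dependence on the finite interval $[0,\bar t]$ (Gronwall) then forces the perturbed trajectories into $B_r$ by time $\bar t$. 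Thereafter the decomposition $f_{\rm pc}(w,z)=f_{\rm pc}(0,z)+[f_{\rm pc}(w,z)-f_{\rm pc}(0,z)]$ together with $f_{\rm pc}(0,z)=Az+O(|z|^2)$ yields
\begin{equation*}
\dot V \le -c_3|z|^2 + c_4|z|\big(d+L\textstyle\sup_{s\ge0}|w(s)|\big),
\end{equation*}
which is negative outside a ball of radius $O(d+\sup|w|)$; this gives forward invariance of $B_r$ and uniform ultimate boundedness into an arbitrarily small ball.

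For the uniform asymptotic $T$-periodicity, compare the solution $z(\cdot)$ with its shift $z(\cdot+T)$, which is again a solution of the time-invariant $\dot z=f_{\rm pc}(w,z)$ driven by $w(\cdot+T)$. Both trajectories are ultimately in $B_r$, so with $\tilde z:=z(\cdot)-z(\cdot+T)$ the mean value theorem in $z$ gives $\dot{\tilde z}=M(t)\tilde z+g(t)$, where $M(t)$ is an integral average of $\partial f_{\rm pc}/\partial z$ along the two trajectories — hence close to the Hurwitz $A$ once states and inputs are small and the data are weak-$\cC^1$-close — and $|g(t)|\le L\,|w(t+T)-w(t)|$, which tends to $0$ uniformly by Item~\ref{item.AssLES.w}. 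Taking $W(\tilde z):=\tilde z\T P\tilde z$, past some uniform time the estimate $M\T P+PM\le-\tfrac12 I$ holds by continuity, so $\dot W\le-\tfrac14|\tilde z|^2+C|g|^2$; this ISS-type bound drives $\tilde z\to0$, i.e.\ $|z(t)-z(t+T)|\to0$, uniformly over solutions originating in $X\x X_c$. Combined with $|w(t)-w(t+T)|\to0$ this gives $|\xb(t)-\xb(t+T)|\to0$ uniformly, which is the claim (the $t-T$ form follows by a time shift).

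The main obstacle is controlling the three smallness budgets — the field deviation $d$, the input amplitude $\sup|w|$, and the input increment $|w(t+T)-w(t)|$ — simultaneously and uniformly, and in particular reconciling the local quadratic Lyapunov estimates, valid only on $B_r$, with the compact domain of attraction $\nominal{X_p}\x X_c$ that may exceed $B_r$. This is what forces the two-phase structure (finite-time capture into $B_r$ by continuous dependence, then invariance and incremental contraction inside $B_r$) and requires the capture time $\bar t$ and the contraction rate to be uniform over the whole family of perturbed data; it is precisely here that the uniform bounds and uniform $T$-periodicity of $w$ provided by Item~\ref{item.AssLES.w} are indispensable. This reproduces the reasoning of \cite[Theorem 10.3]{khalil_nonlinear_2002}.
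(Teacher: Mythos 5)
Your proposal is correct and follows essentially the same route as the paper, which proves Lemma~\ref{lem.tech_lemma} only by reference to the perturbation-theory arguments of \cite[Theorem 10.3]{khalil_nonlinear_2002} and \cite{Angeli2002}: a finite-horizon continuous-dependence step to capture the compact set $\nominal{X_p}\x X_c$ into the region where the quadratic Lyapunov function of the Hurwitz linearization is valid, followed by a local incremental contraction estimate comparing $z(\cdot)$ with its $T$-shift under the asymptotically $T$-periodic input $w$. Your write-up is a faithful (and more detailed) expansion of exactly that argument.
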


Lemma~\ref{lem.tech_lemma} implies that all the steady-state solutions of $\Sigma_{\rm cl}(F,X)$ -- where  $\Sigma_{\rm cl}(F,X)$ denotes the closed-loop system~\eqref{s.w_lin}-\eqref{s.old_z} obtained for $(F,X)\in\cN$ and with $\cN$ given by Lemma~\ref{lem.tech_lemma} -- are $T$-periodic. In fact, since $(F,X)\in\cN$ implies $F\in\cC^1$, then, by Assumption~\ref{ass.lin_LES}, for every $(F,X)\in\cN$ the solutions of the closed-loop system $\Sigma_{\rm cl}(F,X)$ are unique, and  there exists a continuous function $\vhi:\R\x\R^{\dx+\dxc}\to \R^{\dx+\dxc}$ such that every solution $\xb:=(w,\xp,x_c)$ of $\Sigma_{\rm cl}(F,X)$   satisfies $\xb(t)=\vhi(s,\xb(t-s))$ for all $t\ge 0$ and $s\in[0,t]$. 

Pick arbitrarily $(F,X)\in\cN$, Then $\Omega_{\Sigma_{\rm cl}(F,X)}$ is non-empty in view of Lemma~\ref{lem.tech_lemma}. Pick arbitrarily $\xb\in\cO_{\Sigma_{\rm cl}(F,X)}(X\x X_c)$ and $t\ge 0$. Then,  there exist a sequence $\{\xb_n\}_n$ of solutions in $\cS_{\Sigma_{\rm cl}(F,X)}(X\x X_c)$ and a sequence $\{t_n\}_n$ of times $t_n$ satisfying $t_n\to\infty$ such that $\xb_n(t_n)\to \xb(t)$. Moreover, by continuity of $\vhi$, we have
\begin{equation*}
	\xb_n(t_n+T) = \vhi(T,\xb_n(t_n))\to \vhi(T,\xb(t)) = \xb(t+T).
\end{equation*}

 Then, in view of the uniformity of the convergence $\lim_{t\to\infty}|\xb(t)-\xb(t+T)|=0$ over compact subset established by  Lemma~\ref{lem.tech_lemma},  we conclude that   for every $\ell>0$  there exists $n\sr(\ell)>0$  such that, for all $n\ge n\sr(\ell)$ the following holds
\begin{align*}
	|&\xb(t+T)-\xb(t)| \\ &\le |\xb(t)-\xb_n(t_n)|+|\xb(t+T)-\xb_n(t_n+T)| \\& \qquad +|\xb_n(t_n)-\xb_n(t_n+T)| \\& < \ell .
\end{align*}
By arbitrariness of $\ell$, we thus conclude that $\xb(t+T)-\xb(t)=0$ for all $t\ge 0$. Hence, the proof of the proposition follows by arbitrariness of the solution   considered and by Theorem~\ref{thm.P_T_weak},  since  every steady-state trajectory   is $T$-periodic.
\ignorespaces\hfill$\QED$

\subsection{Proof of Proposition \ref{prop.xi}}\label{apd.proof.prop.xi} 
	As $g$ is $\cC^1$ there exists a continuous map $\vhi:\Rplus\x\R^m\to\R^m$  such that, for every $\chi_0\in\R^m$, the unique solution to~\eqref{s:xi} originating at $\chi_0$ satisfies $\chi(\cdot)=\vhi(\cdot,\chi_0)$. Then, by assumption, there exists $\Xi_\epsilon\subset\R^m$ and, for each $\signal\in\epsilon \ball(\cT_N)$,    there exists    $\chi_0\in\Xi_\epsilon$ such that $\signal(t)=\theta(\vhi(t,\chi_0))$ for all $t\ge 0$. Then,  with $(\alpha,\beta^1,\gamma^1,\dots,\beta^N,\gamma^N)=\phi_N(\signal)$,   we can write
	\begin{align*} 
	\int_0^{2\pi}\theta(\vhi(t,\chi_0)) dt &=  2\pi \alpha  \\
	\int_0^{2\pi}\theta(\vhi(t,\chi_0)) \sin(nt) dt &=  \pi \beta^n & n&=1,\dots,N\\
	\int_0^{2\pi}\theta(\vhi(t,\chi_0)) \cos(nt) dt &=  \pi \gamma^n & n&=1,\dots,N.
	\end{align*}
	Since this holds for all $\signal\in\epsilon\ball(\cT_N)$, then the above equations define a surjective mapping $\ell :\Xi_\epsilon \to\epsilon\ball(\R^{2N+1})$.
	  
\MB{As both $g$ and $\theta$ are $\cC^1$, so is $\ell$. Thus, $\ell$ is a surjective $\cC^1$ map from $\Xi_\epsilon$ to a subset of positive measure of $\R^{2N+1}$. 
Now, suppose that $m<2N+1$. Then, $1>0=\max\{0,m-(2N+1)\}$ and the rank  of the Jacobian of $\ell$ is always strictly less than $2N+1$ everywhere in $\Xi_\epsilon$. Hence, the application of the Morse-Sard  Theorem (see, e.g. \cite[Chapter~3]{Hirsch1994}) implies that $\ell(\Xi_\epsilon)$ has measure zero in $R^{2N+1}$, which contradicts the fact that $\ell(\Xi_\epsilon)=\epsilon\ball(\R^{2N+1})$. Hence, $m\ge 2N+1$.} 
	\ignorespaces\hfill$\QED$

\subsection{Proof of Lemma \ref{lem.sin}}\label{apd.proof.lem.sin}
	Suppose that for some $n\in\N$ two functions $r_n,\,q_n:\R^2\to\R$ are given that satisfy
\begin{equation}\label{rnqn}
\begin{aligned}
r_n(\sin(t), \cos(t)) &= \sin(nt), &
q_n (\sin(t),\cos(t)) &= \cos(nt)
\end{aligned}
\end{equation}
then the functions $r_{n+1}(a,b)  := r_n(a,b)b+q_n(a,b)a$ and $q_{n+1}(a,b)  := q_n(a,b)b-r_n(a,b)a$
satisfy
\begin{align*}
r_{n+1}\big(\sin(t), \cos(t) \big) &= \sin((n+1)t)\\
q_{n+1} \big(\sin(t), \cos(t) \big) &= \cos((n+1)t).
\end{align*}
Since for $n=1$ \eqref{rnqn} are trivially satisfied by the projections $r_1(a,b)=a$ and $q_1(a,b)=b$, then we claim by induction that for each $n\in\N$ there exist functions $r_n,q_n:\R^2\to\R$ satisfying \eqref{rnqn}. 	Fix $\epsilon>0$ and $N\in\N$ and let 
\begin{equation*}
\delta \!=\! \dfrac{\epsilon}{\left( 1\!+\!\sum_{n=1}^N\left( \displaystyle\sup_{(a,b)\in K} |r_n(a,b)|+\displaystyle\sup_{(a,b)\in K} |q_n(a,b)|  \right)\right)}.
\end{equation*}
Pick $\signal\in\delta\ball(\cT_N)$, then it satisfies
\begin{align*}
\signal(t)&=\alpha  + \sum_{n=1}^N \big( \beta^n \sin(nt) +\gamma^n\cos(nt)\big)\\
&=\alpha + \sum_{n=1}^N \big( \beta^n r_n(\sin(t),\cos(t)) +\gamma^n q_n(\sin(t),\MB{\cos(t)})\big)
\end{align*}
for some \MB{$|(\alpha,\beta^1,\gamma^1,\dots,\beta^N,\gamma^N)|<\delta$}. Hence, the function
\begin{equation*}
c_\signal(a,b):= \alpha  + \sum_{n=1}^N \big( \beta^n  r_n(a,b) +\gamma^nq_n(a,b)\big)
\end{equation*}
satisfies $c_\signal(\sin(t),\cos(t))=\signal(t)$ and, moreover, it fulfills
\begin{align*}
&\sup_{(a,b)\in K} |c_\signal(a,b)|\\ &\ \le \left( 1+\sum_{n=1}^N\left( \sup_{(a,b)\in K} |r_n(a,b)|+\sup_{(a,b)\in K} |q_n(a,b)|  \right)\right)|a| \\&\ < \epsilon . 
\end{align*} 
\null\ignorespaces\hfill$\QED$ 
 
\bibliographystyle{ieeetran}
\bibliography{biblio}

\begin{thebibliography}{10}
\providecommand{\url}[1]{#1}
\csname url@samestyle\endcsname
\providecommand{\newblock}{\relax}
\providecommand{\bibinfo}[2]{#2}
\providecommand{\BIBentrySTDinterwordspacing}{\spaceskip=0pt\relax}
\providecommand{\BIBentryALTinterwordstretchfactor}{4}
\providecommand{\BIBentryALTinterwordspacing}{\spaceskip=\fontdimen2\font plus
\BIBentryALTinterwordstretchfactor\fontdimen3\font minus
  \fontdimen4\font\relax}
\providecommand{\BIBforeignlanguage}[2]{{%
\expandafter\ifx\csname l@#1\endcsname\relax
\typeout{** WARNING: IEEEtran.bst: No hyphenation pattern has been}%
\typeout{** loaded for the language `#1'. Using the pattern for}%
\typeout{** the default language instead.}%
\else
\language=\csname l@#1\endcsname
\fi
#2}}
\providecommand{\BIBdecl}{\relax}
\BIBdecl

\bibitem{Goebel2012book}
R.~Goebel, R.~G. Sanfelice, and A.~R. Teel, \emph{Hybrid Dynamical
  Systems}.\hskip 1em plus 0.5em minus 0.4em\relax Princeton, N. J.: Princeton
  University Press, 2012.

\bibitem{Francis1976}
B.~A. Francis and W.~M. Wonham, ``The internal model principle of control
  theory,'' \emph{Automatica}, vol.~12, pp. 457--465, 1976.

\bibitem{Davison1976}
E.~J. Davison, ``The robust control of a servomechanism problem for linear
  time-invariant multivariable systems,'' \emph{IEEE Trans. Autom. Control},
  vol. AC-21, no.~1, pp. 25--34, 1976.

\bibitem{Huang1994}
J.~Huang and C.-F. Lin, ``On a robust nonlinear servomechanism problem,''
  \emph{IEEE Trans. Autom. Control}, vol.~39, no.~7, pp. 1510--1513, 1994.

\bibitem{Byrnes1997}
C.~I. Byrnes, F.~{Delli Priscoli}, and A.~Isidori, ``Structurally stable output
  regulation for nonlinear systems,'' \emph{Automatica}, vol.~33, no.~3, pp.
  369--385, 1997.

\bibitem{jie_huang_remarks_2001}
J.~Huang, ``Remarks on the robust output regulation problem for nonlinear
  systems,'' \emph{IEEE Trans. Autom. Control}, vol.~46, no.~12, pp.
  2028--2031, 2001.

\bibitem{Astolfi2017}
D.~Astolfi and L.~Praly, ``Integral action in output feedback for multi-input
  multi-output nonlinear systems,'' \emph{IEEE Trans. Autom. Control}, vol.~62,
  no.~4, pp. 1559--1574, 2017.

\bibitem{Francis1975}
B.~A. Francis and W.~M. Wonham, ``The internal model principle for
  multivariable regulators,'' \emph{Applied Math. Opt.}, vol.~2, no.~2, pp.
  170--194, 1975.

\bibitem{Serrani2001}
A.~Serrani, A.~Isidori, and L.~Marconi, ``Semiglobal nonlinear output
  regulation with adaptive internal model,'' \emph{IEEE Trans. Autom. Control},
  vol.~46, no.~8, pp. 1178--1194, 2001.

\bibitem{marino_output_2003}
R.~Marino and P.~Tomei, ``Output regulation for linear systems via adaptive
  internal model,'' \emph{IEEE Trans. Autom. Control}, vol.~48, no.~12, pp.
  2199--2202, 2003.

\bibitem{DelliPriscoli2006}
F.~D. Priscoli, L.~Marconi, and A.~Isidori, ``A new approach to adaptive
  nonlinear regulation,'' \emph{{SIAM} J. {C}ontrol {O}ptim.}, vol.~45, no.~3,
  pp. 829--855, 2006.

\bibitem{forte_robust_2017}
F.~Forte, L.~Marconi, and A.~R. Teel, ``Robust {Nonlinear} {Regulation}:
  {Continuous}-{Time} {Internal} {Models} and {Hybrid} {Identifiers},''
  \emph{IEEE Trans. Autom. Control}, vol.~62, pp. 3136--3151, 2017.

\bibitem{Bin2019adaptlinear}
M.~Bin, L.~Marconi, and A.~R. Teel, ``Adaptive output regulation for linear
  systems via discrete-time identifiers,'' \emph{Automatica}, vol. 105, pp.
  422--432, 2019.

\bibitem{Bin_classtype_2019}
M.~Bin and L.~Marconi, ````{Class}-{Type}'' {Identification}-{Based} {Internal}
  {Models} in {Multivariable} {Nonlinear} {Output} {Regulation},'' \emph{IEEE
  Trans. Autom. Control}, vol.~65, no.~10, pp. 4369--4376, 2020.

\bibitem{bin_approximate_2020}
M.~Bin, P.~Bernard, and L.~Marconi, ``Approximate {Nonlinear} {Regulation} via
  {Identification}-{Based} {Adaptive} {Internal} {Models},'' \emph{IEEE Trans.
  Autom. Control}, 2020, {E}arly {A}ccess. DOI: 10.1109/TAC.2020.3020563.

\bibitem{bernard_adaptive_2020}
P.~Bernard, M.~Bin, and L.~Marconi, ``Adaptive output regulation via nonlinear
  luenberger observer-based internal models and continuous-time identifiers,''
  \emph{Automatica}, vol. 122, p. 109261, 2020.

\bibitem{Byrnes2003}
C.~I. Byrnes and A.~Isidori, ``Limit sets, zero dynamics and internal models in
  the problem of nonlinear output regulation,'' \emph{IEEE Trans. Autom.
  Control}, vol.~48, pp. 1712--1723, 2003.

\bibitem{bin_chicken-egg_2018}
M.~Bin and L.~Marconi, ``The {Chicken}-{Egg} {Dilemma} and the {Robustness}
  {Issue} in {Nonlinear} {Output} {Regulation} with a {Look} {Towards}
  {Adaptation} and {Universal} {Approximators},'' in \emph{57th {IEEE}
  Conference on Decision and Control}, 2018, pp. 5391--5396.

\bibitem{Bin2019}
------, ``Output regulation by postprocessing internal models for a class of
  multivariable nonlinear systems,'' \emph{Int. J. Robust Nonlinear Control},
  vol.~3, pp. 1115--1140, 2020.

\bibitem{Isidori2012}
A.~Isidori, L.~Marconi, and L.~Praly, ``Robust design of nonlinear internal
  models without adaptation,'' \emph{Automatica}, vol.~48, pp. 2409--2419,
  2012.

\bibitem{Astolfi2015}
D.~Astolfi, L.~Praly, and L.~Marconi, ``Approximate regulation for nonlinear
  systems in presence of periodic disturbances,'' in \emph{54th IEEE Conference
  on Decision and Control}, 2015, pp. 7665--7670.

\bibitem{Smale1967}
S.~Smale, ``Differentiable dynamical systems,'' \emph{Bull. Amer. Math. Soc.},
  p. 747–817, 1967.

\bibitem{Zeeman1988}
E.~C. Zeeman, ``Stability of dynamical systems,'' \emph{Nonlinearity}, vol.~1,
  pp. 115--155, 1988.

\bibitem{Hirsch1994}
M.~W. Hirsch, \emph{Differential Topology}.\hskip 1em plus 0.5em minus
  0.4em\relax Springer-Verlag, 1994.

\bibitem{Cai2013}
C.~Cai and A.~R. Teel, ``Robust input-to-state stability for hybrid systems,''
  \emph{SIAM J. Control Optim.}, vol.~51, no.~2, pp. 1651--1678, 2013.

\bibitem{Zhou1996}
K.~Zhou, J.~C. Doyle, and K.~Glover, \emph{Robust and Optimal Control}.\hskip
  1em plus 0.5em minus 0.4em\relax Elsevier Science, 1996.

\bibitem{Teel1995}
A.~Teel and L.~Praly, ``Tools for {Semiglobal} {Stabilization} by {Partial}
  {State} and {Output} {Feedback},'' \emph{SIAM J Control Optim.}, vol.~33,
  no.~5, pp. 1443--1488, 1995.

\bibitem{shtessel_sliding_2014}
Y.~Shtessel, C.~Edwards, L.~Fridman, and A.~Levant, \emph{Sliding {Mode}
  {Control} and {Observation}}.\hskip 1em plus 0.5em minus 0.4em\relax
  Birkh\"{a}user Basel, 2014.

\bibitem{Engelking1989}
R.~Engelking, \emph{General Topology}.\hskip 1em plus 0.5em minus 0.4em\relax
  Heldermann Verlag, 1989.

\bibitem{Isidori1990}
A.~Isidori and C.~I. Byrnes, ``Output regulation of nonlinear systems,''
  \emph{IEEE Trans. Autom. Control}, vol.~35, no.~2, pp. 131--140, 1990.

\bibitem{Astolfi2013}
D.~Astolfi, A.~Isidori, L.~Marconi, and L.~Praly, ``Nonlinear output regulation
  by post-processing internal model for multi-input multi-output systems,'' in
  \emph{9th {IFAC} Symposium on Nonlinear Control Systems}, 2013, pp. 295--300.

\bibitem{CorduneanuAlmostP}
C.~Corduneanu, N.~Gheorghiu, and V.~Barbu, \emph{Almost Periodic
  Functions}.\hskip 1em plus 0.5em minus 0.4em\relax Chelsea Publishing
  Company, 1989.

\bibitem{Huang1995}
J.~Huang, ``Asymptotic tracking and disturbance rejection in uncertain
  nonlinear systems,'' \emph{IEEE Trans. Autom. Control}, vol.~40, no.~6, pp.
  1118--1122, 1995.

\bibitem{Byrnes2004}
C.~I. Byrnes and A.~Isidori, ``Nonlinear internal models for output
  regulation,'' \emph{IEEE Trans. Autom. Control}, vol.~49, pp. 2244--2247,
  2004.

\bibitem{Marconi2007}
L.~Marconi, L.~Praly, and A.~Isidori, ``Output stabilization via nonlinear
  {L}uenberger observers,'' \emph{SIAM J Control Optim.}, vol.~45, pp.
  2277--2298, 2007.

\bibitem{WanIsiLiuSu}
L.~Wang, A.~Isidori, Z.~Liu, and H.~Su, ``Robust output regulation for
  invertible nonlinear {MIMO} systems,'' \emph{Automatica}, vol.~82, pp.
  278--286, 2017.

\bibitem{Angeli2002}
D.~Angeli, ``A lyapunov approach to incremental stability properties,''
  \emph{IEEE Trans. Autom. Control}, vol.~47, no.~3, pp. 410--421, 2002.

\bibitem{ruffer_convergent_2013}
B.~S. R{\"u}ffer, N.~van~de Wouw, and M.~Mueller, ``Convergent systems vs.
  incremental stability,'' \emph{Syst. Control. Lett.}, vol.~62, no.~3, pp.
  277--285, 2013.

\bibitem{Pavlov2006}
A.~Pavlov, N.~van~der Wouw, and H.~Nijmeijer, \emph{Uniform Output Regulation
  of Nonlinear Systems. A Convergent Dynamics Approach}.\hskip 1em plus 0.5em
  minus 0.4em\relax Birkh\"{a}user, 2006.

\bibitem{giaccagli_incremental_2020}
V.~A. M.~Giaccagli, D.~Astolfi and L.~Marconi, ``Sufficient conditions for
  output reference tracking for nonlinear systems: a contractive approach,'' in
  \emph{59th IEEE Conference on Decision and Control}, 2020, pp. 4580--4585.

\bibitem{Chen2005}
Z.~Chen and J.~Huang, ``Robust output regulation with nonlinear exosystems,''
  \emph{Automatica}, vol.~41, pp. 1447--1454, 2005.

\bibitem{Marconi2008}
L.~Marconi and L.~Praly, ``Uniform practical nonlinear output regulation,''
  \emph{IEEE Trans. Autom. Control}, vol.~53, pp. 1184--1202, 2008.

\bibitem{marino_output_2007}
R.~Marino and P.~Tomei, ``Output {Regulation} for {Linear} {Minimum} {Phase}
  {Systems} {With} {Unknown} {Order} {Exosystem},'' \emph{IEEE Trans. Autom.
  Control}, vol.~52, no.~10, pp. 2000--2005, 2007.

\bibitem{ghosh2000nonlinear}
J.~Ghosh and B.~Paden, ``Nonlinear repetitive control,'' \emph{IEEE Trans.
  Autom. Control}, vol.~45, no.~5, pp. 949--954, 2000.

\bibitem{califano_stability_2018}
F.~Califano, M.~Bin, A.~Macchelli, and C.~Melchiorri, ``Stability {Analysis} of
  {Nonlinear} {Repetitive} {Control} {Schemes},'' \emph{IEEE Control Syst.
  Lett.}, vol.~2, no.~4, pp. 773--778, 2018.

\bibitem{astolfi_francis-wonham_2019}
D.~Astolfi, L.~Praly, and L.~Marconi, ``Francis-{Wonham} nonlinear viewpoint in
  output regulation of minimum phase systems,'' in \emph{11th {IFAC} Symposium
  on {Nonlinear} {Control} {Systems} {NOLCOS}}, 2019, pp. 532--537.

\bibitem{astolfi2022repetitive}
D.~Astolfi, S.~Marx, and N.~van~de Wouw, ``Repetitive control design based on
  forwarding for nonlinear minimum-phase systems,'' \emph{Automatica, to
  appear}, 2022.

\bibitem{khalil_nonlinear_2002}
H.~K. Khalil, \emph{Nonlinear {Systems}, 3rd {Edition}}.\hskip 1em plus 0.5em
  minus 0.4em\relax Pearson, 2002.

\end{thebibliography}

\end{document}